\documentclass[a4paper,USenglish]{lipics-v2016}
\usepackage{graphicx}
\usepackage{paralist}
\usepackage{microtype}
\usepackage{amsmath}
\usepackage{amsthm}
\usepackage{amsxtra}
\usepackage{amssymb}
\usepackage{color}

\usepackage{multirow}
\usepackage{pdflscape}

\usepackage{verbatim} 
\usepackage{hyperref}

\newcommand{\de}{d_{G \setminus e}}
\newcommand{\dG}{d_{G}}
\newcommand{\poly}{\operatorname{poly}}

\newcommand{\tO}{\tilde{O}}

\theoremstyle{plain}

\newtheorem{conjecture}[theorem]{Conjecture}

\bibliographystyle{plain}

\title{Conditional Hardness for Sensitivity Problems}
\titlerunning{Conditional Hardness for Sensitivity Problems} 


\author[1]{Monika Henzinger \thanks{The research
		leading to these results has received funding from the European
		Research Council under the European Union’s Seventh Framework
		Programme (FP/2007-2013) / ERC Grant Agreement no. 340506.}}
\author[2]{Andrea Lincoln \thanks{Supported by a Stanford Graduate Fellowship.}}
\author[3]{Stefan Neumann \thanks{Supported by the Doctoral Programme ``Vienna Graduate School on Computational Optimization'' which is funded by Austrian Science Fund (FWF, project no.\ W1260-N35). }}
\author[4]{Virginia Vassilevska Williams \thanks{VVW and AL were supported by NSF Grants CCF-1417238, CCF-1528078 and CCF-1514339, and BSF Grant BSF:2012338.}}
\affil[1]{University of Vienna, Faculty of Computer Science, Vienna, Austria\\
  \texttt{monika.henzinger@univie.ac.at}}
\affil[2]{Computer Science Department, Stanford University, Stanford, USA\\
  \texttt{andreali@cs.stanford.edu}}
\affil[3]{University of Vienna, Faculty of Computer Science, Vienna, Austria\\
  \texttt{stefan.neumann@univie.ac.at}}
\affil[4]{Computer Science Department, Stanford University, Stanford, USA\\
  \texttt{virgi@cs.stanford.edu}}
\authorrunning{M.\ Henzinger, A.\ Lincoln, S.\ Neumann and V.\ Vassilevska Williams} 

\Copyright{Monika Henzinger, Andrea Lincoln, Stefan Neumann and Virginia Vassilevska Williams}

\subjclass{F.2.2 Computations on discrete structures}
\keywords{sensitivity, conditional lower bounds, data structures, dynamic graph algorithms}


\begin{document}

\maketitle
\begin{abstract}
	In recent years it has become popular to study dynamic problems in a sensitivity
	setting: Instead of allowing for an arbitrary sequence of updates, the sensitivity
	model only allows to apply batch updates of small size to the \emph{original} input data.
	The sensitivity model is particularly appealing since recent
	strong conditional lower bounds ruled out fast algorithms for
	many dynamic problems, such as shortest paths, reachability, or subgraph connectivity.

	In this paper we prove conditional lower bounds for these and additional
	problems in a sensitivity setting. For example, we show that
	under the Boolean Matrix Multiplication (BMM) conjecture combinatorial algorithms
	cannot compute the $(4/3-\varepsilon)$-approximate diameter
	of an undirected unweighted dense graph with truly
	subcubic preprocessing time and truly subquadratic update/query time.
	This result is surprising since in the static setting it is not clear
	whether a reduction from BMM to diameter is possible.
	We further show under the BMM conjecture that many problems, such as reachability or
	approximate shortest paths, cannot be solved faster than by recomputation from scratch
	even after \emph{only one or two} edge insertions. We extend our reduction from BMM to Diameter to give a reduction from All Pairs Shortest Paths to Diameter under one deletion in weighted graphs. This is intriguing, as in the static setting it is a big open problem whether Diameter is as hard as APSP. We further get a nearly tight lower bound for shortest paths after two edge deletions based on the APSP conjecture. We give more lower bounds under the
	Strong Exponential Time Hypothesis. Many of our lower bounds also hold for static oracle data structures where no sensitivity is required.
Finally, we give the first algorithm for the $(1+\varepsilon)$-approximate
	radius, diameter, and eccentricity problems in directed or undirected
	unweighted graphs in case of single edges failures.
	The algorithm has a truly subcubic running time for graphs with a truly subquadratic number of edges;
	it is tight w.r.t.\ the conditional lower bounds we obtain.
\end{abstract}

%
\section{Introduction}
\label{Sec:Introduction}

A dynamic algorithm is an algorithm that is able to handle changes in the input data:
It is given an input instance $x$ and is required to
maintain certain properties of $x$ while $x$ undergoes (possibly very many) updates.
For example, an algorithm might maintain a graph, which undergoes edge insertions and deletions,
and a query is supposed to the return the diameter of the graph after the updates.
Often dynamic algorithms are also referred to as data structures.
During the last few years strong conditional lower bounds for many dynamic problems were derived
(see, e.g., \cite{Patrascu10,abboud2014popular,henzinger2015unifying,abboud2015matching,dahlgaard2016hardness,abboud2016popular,kopelowitz2016higher}),
which rule out better algorithms than simple recomputation from scratch after each update or before each query.

Partially due to this, in recent years it has become popular to study dynamic problems in a
more restricted setting that only allows for a \emph{bounded} number of
changes to the input instance (see, for example,
\cite{patrascu2007planning,duan2010connectivity,bernstein2009nearly,chechik2012sensitivity},
and the references in Table~\ref{Tbl:UpperBounds}).
These algorithms are usually referred to as \emph{sensitivity}\footnote{Sometimes sensitivity data structures are also
called ``fault-tolerant'' or ``emergency planning'' algorithms.
See Appendix~\ref{Sec:Terminology} for a discussion of
terminology.} data structures.
The hope is to obtain algorithms in the sensitivity setting which
are faster than the conditional lower bounds for the general setting.

More formally, a \emph{data structure with sensitivity~$d$} for
a problem~$P$ has the following properties:
It obtains an instance $p$ of $P$ and is allowed polynomial preprocessing time on $p$.
After the preprocessing, the data structure must provide the following operations:

 {\bf  (Batch) Update:} Up to $d$ changes are performed to the {\em initial} problem instance $p$,
  				e.g., $d$ edges are added to or removed from $p$.
 
 {\bf Query:} The user queries a specific property about the instance of the problem
  				after the last update, e.g., the shortest path between two nodes avoiding the
				edges deleted in the last update.

The parameter $d$ bounding the batch update size is referred to as the \emph{sensitivity}
of the data structure. 
Note that every batch update is performed on the {\em original} problem instance.

Thus, in contrast to ``classic'' dynamic algorithms (without sensitivity),
a query only reflects the changes made to $p$ by the \emph{last} batch update and
{\em not} by previous batch updates. As  the size of a batch update is constrained to at most $d$,
each query is executed on a graph that differs from $p$ by at most $d$ edges.
After a batch update an arbitrary number of queries may be performed.

Some data structures (usually called \emph{oracles}) combine a query and an update
into a single operation, i.e., the combined operation obtains an input tuple $(Q,U)$, where
$Q$ is a query and $U$ is an update. A special case are {\em static oracles}, which have $U = \emptyset$.
The conditional lower bounds we derive in this paper also hold in this setting,
since oracles with an efficient combined operation can be used to solve
sensitivity problems.

While some existing sensitivity data structures can preprocess the answers to all possible updates and queries
during their preprocessing time, this is not possible in general (due to constraints in the preprocessing time
and the fact that the number of possible updates/queries grows exponentially in the parameter $d$).
Hence, we still consider a sensitivity data structure a dynamic (instead of static) algorithm.

\paragraph*{The Hypotheses.}
We state the hypotheses on which we base the conditional lower bounds in this paper. 
By now they are all considered standard in proving fine-grained reduction-based lower bounds. For a more detailed description of the hypotheses, see, e.g.,
Abboud and Williams~\cite{abboud2014popular}, Henzinger et al.~\cite{henzinger2015unifying},
and the references therein.
As usual we work in the word-RAM model of computation with word length of $O(\log n)$ bits. The hypotheses below concern the complexity of the Boolean Matrix Multiplication (BMM), Satisfiability of Boolean Formulas in Conjunctive Normal Form (CNF-SAT), All Pairs Shortest Paths (APSP), Triangle Detection and Online Boolean Matrix Vector Multiplication (OMv) problems. Other popular hypotheses from prior work consider other famous problems such as $3$SUM and other sparsity regimes such as triangle detection in very sparse graphs (see, e.g. \cite{abboud2014popular}).


\begin{conjecture}[Impagliazzo, Paturi and Zane \cite{impagliazzo2001complexity,impagliazzo2001which}]
	The Strong Exponential Time Hypothesis (SETH) states that for each $\varepsilon > 0$,
	there exists a $k \in \mathbb{N}$, such that $k$-SAT cannot be solved in time $O(2^{n(1-\varepsilon)} \poly(n))$.
\end{conjecture}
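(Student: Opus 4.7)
The final statement in the excerpt is the Strong Exponential Time Hypothesis, which is not a theorem but a widely believed conjecture; no proof is known, and any proof would be a landmark result in complexity theory. Consequently, what follows is less a concrete proof plan than an outline of what such a proof would have to achieve, together with the obstructions that place the statement outside the reach of current techniques.

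The goal is to show that for every $\varepsilon > 0$ there exists $k \in \mathbb{N}$ such that no algorithm decides $k$-SAT on $n$ variables in time $O(2^{n(1-\varepsilon)} \poly(n))$. The most direct route would be an unconditional lower bound: exhibit, for each such $\varepsilon$ and a suitably chosen $k$, a family of $k$-CNF formulas on $n$ variables whose satisfiability cannot be decided by any word-RAM algorithm within the stated time. A useful preliminary is to understand the state of the art, e.g.\ PPSZ and its refinements, whose running times have the form $2^{n(1 - c_k)}$ with $c_k \to 0$ as $k \to \infty$. Any proof of SETH must match this behavior from below, showing that the inverse dependence of $c_k$ on $k$ is essentially tight for \emph{every} algorithmic strategy, not merely the ones currently known.

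Concretely, the plan would proceed in three steps. First, fix a computational model rich enough to capture all candidate algorithms, such as the word-RAM with $O(\log n)$ word length (as in this paper) or Boolean circuits of size $2^{n(1-\varepsilon)}\poly(n)$. Second, produce either an explicit family of hard $k$-CNF instances or an efficiently samplable distribution that forces every such algorithm to spend nearly $2^n$ time on a constant fraction of inputs. Third, combine a diagonalization-style or information-theoretic argument with a careful choice of $k = k(\varepsilon)$, so that the clause width grows fast enough with $1/\varepsilon$ to defeat local-search, branching, and width-reduction heuristics.

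The main obstacle is that each of these steps collides with a fundamental barrier. Proving SETH immediately yields $\mathrm{P} \neq \mathrm{NP}$ and in fact super-polynomial circuit lower bounds for $\mathrm{NP}$, so any proof must evade relativization, natural proofs, and algebrization; moreover, there are oracles relative to which SETH fails, which rules out purely black-box arguments. Since no technique currently delivers even polynomial circuit lower bounds for $\mathrm{NP}$, the tight $2^{n(1-o(1))}$ bound that SETH demands is far beyond present methods. For precisely this reason the statement is introduced as a hypothesis rather than proved: the contribution of the paper, like others in fine-grained complexity, is to derive conditional consequences from SETH, not to establish SETH itself.
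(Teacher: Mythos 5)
You correctly recognize that this is a conjecture, not a theorem: the paper states SETH as a hypothesis of Impagliazzo, Paturi, and Zane and offers no proof, using it only as an assumption from which conditional lower bounds are derived. Your discussion of what a proof would require and why it lies beyond current techniques (implying $\mathrm{P}\neq\mathrm{NP}$, colliding with relativization/natural-proofs/algebrization barriers, and the need to beat the $k$-dependence of algorithms like PPSZ) is accurate and consistent with the paper's treatment of SETH as an unproven standing assumption.
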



\begin{conjecture}
	The Boolean Matrix Multiplication (BMM) conjecture states that for all $\varepsilon > 0$,
	there exists no \emph{combinatorial} algorithm that computes the product
	of two $n \times n$ matrices in expected time $O(n^{3 - \varepsilon})$.
\end{conjecture}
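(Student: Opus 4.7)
This statement is a conjecture rather than a theorem, so there is no proof to sketch; I describe instead what a proof would require and explain why none is known. The goal would be to establish an $n^{3-o(1)}$ lower bound for every combinatorial algorithm computing the Boolean product of two $n \times n$ matrices in expectation.

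The first step is to pin down a precise meaning of \emph{combinatorial}. The standard informal understanding excludes algorithms that exploit algebraic identities of the Strassen type (currently giving $O(n^\omega)$ time with $\omega < 2.373$) and includes algorithms that operate bit-by-bit or row-by-row on the matrices, possibly enhanced with word-level parallelism in the word-RAM model. Any formalization must be careful: the Method of Four Russians already yields $O(n^3 / \log^2 n)$ time, so the conjecture rules out only truly polynomial savings, not polylogarithmic ones. A candidate formal model might restrict the algorithm to bitwise and comparison operations on the input entries, or to $(\vee,\wedge)$-circuits of bounded fan-in augmented with word parallelism.

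The second step is to prove an $n^{3-o(1)}$ lower bound within the chosen combinatorial model. This is the main obstacle, and it is the reason BMM is adopted as a hypothesis throughout fine-grained complexity rather than established as a theorem. Proving super-linear unconditional lower bounds for explicit Boolean functions in natural circuit or RAM-based models is a central open problem in complexity theory, and no known technique comes close to yielding $n^{3-o(1)}$ for Boolean matrix multiplication even in restricted combinatorial models. Accordingly, the paper (as is standard in the area) takes the BMM conjecture as an axiom and uses it to derive the conditional lower bounds for sensitivity problems that follow, rather than attempting to prove it.
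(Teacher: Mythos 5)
You correctly identify that this is a conjecture stated as a working hypothesis, not a theorem with a proof, and the paper likewise states it as an assumption without attempting a proof. Your discussion of what a proof would require and why none is known matches the paper's treatment, which simply adopts BMM as a conditional-hardness axiom.
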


Note that BMM can be solved in truly subcubic using fast matrix multiplication (FMM): the current fastest algorithms run in $O(n^{2.373})$ time~\cite{v12,legallmult}. However,
algorithms using FMM are not considered to be combinatorial.
Formally, the term \emph{combinatorial} algorithm is not well-defined and it is common to
rule out the use of FMM or other ``Strassen-like'' methods in the design of such algorithms as most of them are not considered practical. True combinatorial algorithms are not only considered practical but also easily extendable. For instance, prior work on combinatorial BMM algorithms has almost always led to an algorithm for APSP with similar running time (e.g. \cite{fourrus} and \cite{chan07}).

One of the simplest graph problems is that of detecting whether the graph contains a triangle, i.e., three nodes with all three edges between them. Itai and Rodeh~\cite{itairodeh} showed that any algorithm for BMM can solve Triangle detection in the same time. Conversely, Vassilevska Williams and Williams~\cite{williams2010subcubic} showed that any
truly subcubic combinatorial algorithm for Triangle Detection can be converted into a
truly subcubic combinatorial algorithm for BMM. Hence, the BMM conjecture implies there is no
truly subcubic \emph{combinatorial} algorithm for Triangle Detection. We use this fact and the resulting Triangle Conjecture that there is no truly subcubic algorithm for Triangle Detection in our reductions based on BMM.

The following is a popular conjecture about the APSP problem.

\begin{conjecture}
	The APSP conjecture states that given a graph $G$ with $n$ vertices, $m$ edges, and
	edge weights in $\{1,\dots,n^c\}$ for some constant $c$,
	the All Pairs Shortest Paths problem (APSP) cannot be solved in $O(n^{3-\varepsilon})$ expected time
	for any $\varepsilon > 0$.
\end{conjecture}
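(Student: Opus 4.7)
The final statement is the APSP conjecture, which is an open hypothesis rather than a proven theorem; no unconditional proof is known, and a proof would constitute a major breakthrough in fine-grained complexity. A proof proposal can therefore only outline the kinds of arguments that might plausibly work and identify where each runs into a well-understood barrier.

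The most direct route would be a reduction from a problem for which an $\Omega(n^{3-o(1)})$ word-RAM lower bound is already established. No such natural problem currently exists in $\mathrm{P}$, so this reduces the conjecture to an even stronger unconditional lower bound and is not a viable plan by itself. A more realistic route exploits the rich web of subcubic equivalences surrounding APSP: the results of Vassilevska Williams and Williams~\cite{williams2010subcubic} show that APSP, min-plus matrix product, negative triangle detection, the second-shortest-simple-path problem, and several others are all equivalent under subcubic reductions, so it would suffice to prove a truly cubic lower bound for any one of them. This reformulation does not prove the conjecture, but it picks out a sharper combinatorial target and explains the wide consensus that the hypothesis is true.

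The intermediate step I would attempt first is to work inside a restricted model where lower bounds are actually feasible. For the min-plus product formulation one can try to push Fredman's decision-tree approach to rule out algorithms using only additions and comparisons on input entries beyond $n^3/\poly\log(n)$; combined with a reduction showing that any word-RAM algorithm for min-plus product must ultimately respect the $(\min,+)$ semiring structure, this would suffice. The latter ``simulation'' step is the main obstacle, because there is no known way to rule out word-RAM tricks that pack small weights into machine words and exploit standard $(+,\times)$ ring operations, bit tricks, or table lookups. A successful attack will therefore likely require a genuinely new framework for separating the tropical semiring from the standard ring in the word-RAM model. Until such a framework appears, the APSP conjecture must be treated in the same way as SETH and BMM: as a hypothesis from which conditional lower bounds are derived, as is done in the remainder of this paper.
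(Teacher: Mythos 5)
You correctly recognize that this statement is a conjecture the paper adopts as a working hypothesis, not a theorem it proves; the paper offers no proof, and none is known. Your survey of plausible attack routes and their obstructions is reasonable background, but the short answer matching the paper is simply that the APSP conjecture is assumed, not established.
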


Similar to the relationship between BMM and Triangle Detection, \cite{williams2010subcubic} showed that there is a triangle problem in weighted graphs, Negative Triangle, that is equivalent under subcubic reductions to APSP. We use that problem in our reductions.

Our final conjecture concerns the online version of Boolean matrix vector product.

\begin{conjecture}[Henzinger et al.~\cite{henzinger2015unifying}]
	Let $B$ be a Boolean matrix of size $n \times n$.
	In the Online Matrix-vector (OMv) problem, $n$ binary vectors $v_1, \dots, v_n$ of size $n$
	appear online and an algorithm solving the problem must output the vector $B v_i$ before
	the next vector $v_{i+1}$ arrives.

	The OMv conjecture states that for all $\varepsilon > 0$ and after any polynomial time preprocessing of $B$,
	it takes $\Omega( n^{3-\varepsilon} )$ time to solve the OMv problem with error probability at most $1/3$.
\end{conjecture}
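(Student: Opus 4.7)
The statement in question is the OMv conjecture of Henzinger et al., which is introduced as a hypothesis underpinning the conditional lower bounds in this paper rather than as a theorem to be proved. An unconditional proof is well beyond current techniques: the statement asserts an $n^{3-o(1)}$ lower bound on a concrete problem in the word-RAM model after arbitrary polynomial preprocessing, and establishing any such polynomial lower bound against general data structures would already imply strong nonuniform lower bounds, a longstanding open problem. Consequently I can only sketch what one would attempt and where the barriers lie.

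The strongest evidence in favor of OMv is algorithmic. Despite the simple combinatorial structure, no algorithm beats the trivial $O(n^3)$ bound even with polynomial preprocessing, and the online constraint appears to block the usual tricks: ``four Russians''-style speedups and matrix-multiplication-based amortization need either all vectors $v_1,\dots,v_n$ in advance or a global rearrangement of the work, both of which violate the requirement that $B v_i$ be produced before $v_{i+1}$ is revealed. Natural preprocessing strategies (low-rank compression, heavy-row bucketing, prefix dictionaries on column patterns) save at most polylogarithmic factors, because the adversary can force new vectors to interact with unboundedly many precomputed structures. The fact that no subcubic algorithm has been found despite the problem's simplicity is the principal indirect argument for the conjecture, and, as the paper notes, it draws further support from the web of conditional lower bounds it implies (reachability, shortest paths, subgraph connectivity, and the sensitivity analogues proved here).

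Any genuine proof would have to proceed via cell-probe lower bounds in the spirit of P\u{a}tra\c{s}cu, showing that every data structure using $\poly(n)$ cells of $O(\log n)$ bits must amortize $\Omega(n^{2-o(1)})$ probes per vector, even with two-sided error at most $1/3$. A natural first step would be to reduce from a known hard communication problem (e.g.\ a lopsided set-disjointness variant or a direct-sum version of inner product) so that each online vector $v_i$ corresponds to one ``round'' of communication with the preprocessed state. The main obstacle, and the reason this sketch cannot be upgraded to a proof, is that all known cell-probe techniques yield only polylogarithmic per-operation lower bounds; bridging the gap from polylogarithmic to polynomial is the central open problem of data-structure lower bounds, and a real proof of OMv would require a technique of that order. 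Conversely, even a refutation would be a landmark, since it would collapse a large swath of the dynamic lower bounds that this paper and its predecessors build on.
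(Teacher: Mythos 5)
You correctly identify that this statement is a \emph{conjecture}, not a theorem: the paper does not prove it and merely adopts it (following Henzinger et al.) as a hardness hypothesis from which conditional lower bounds are derived. Your discussion of why an unconditional proof is out of reach and what indirect evidence supports the conjecture is accurate and consistent with how the paper treats it, so there is nothing to correct.
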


Most of the conjectures are stated w.r.t.\ \emph{expected} time, i.e., the
conjectures rule out randomized algorithms.
In case of dynamic algorithms using randomness, it is common to argue if an oblivious
or a non-oblivious adversary is allowed.
Previous literature on conditional lower bounds for dynamic algorithms
did not explicitly state what kind of adversaries are allowed for their lower bounds.
We give a quick discussion of this topic in Appendix~\ref{Sec:Adversaries}.

\paragraph*{Our Results.}
In this paper we develop a better understanding of the possibilities and limitations
of the sensitivity setting by providing conditional lower bounds for sensitivity problems.
We show that under plausible assumptions for many dynamic graph problems even
the update of only \emph{one or two} edges cannot be solved faster than by re-computation from 
scratch. See Table~\ref{Tbl:LowerBounds} 
and Table~\ref{Tbl:StaticLowerBounds} 
in the Appendix  for a list of all our conditional lower bounds 
for sensitivity data structures, and our lower bounds for static oracles respectively. Table~\ref{Tbl:Problems} gives explanations of the problems.
The abbreviations used in the tables are explained in its captions. We next discuss our main results.

\paragraph*{New reductions.}
We give several new reductions for data structures with small
sensitivity. 

{\bf (1)} We give a novel reduction from triangle detection and BMM
to maintaining an approximation of the diameter of the graph and
eccentricities of all vertices, under a single edge failure. This is particularly surprising because in the static
case it is unknown how to reduce BMM to diameter computation.
Using the BMM conjecture this results in lower bounds 
of $n^{3-o(1)}$ on the preprocessing time
or of $n^{2-o(1)}$ update or query time
for $(4/3-\varepsilon)$-approximate decremental diameter
and eccentricity in unweighted graphs {\em with sensitivity 1}, i.e., when a single edge is deleted.
Those results are tight w.r.t.\ the algorithm we present in Section~\ref{Sec:DiameterUpperBound}.

{\bf (2)} A particular strength of BMM-based reductions is that they can very often be converted into APSP-based lower bounds
for weighted variants of the problems. APSP-based lower bounds, in turn, no longer require the
``combinatorial''-condition on the algorithms, making the lower bounds stronger.
We show how our BMM-based lower bounds for approximate diameter with sensitivity $1$ can be converted into an
APSP-based lower bound for diameter with sensitivity $1$ in weighted graphs.
In particular, we show that unless APSP has a truly subcubic algorithm, any data structure
that can support diameter queries for a single edge deletion must either have essentially
cubic preprocessing time, or essentially quadratic query time.
This lower bound is tight w.r.t.\ to a trivial algorithm using the data structure of~\cite{bernstein2009nearly}.
The APSP to $1$-sensitive Diameter lower bound is significant also because it is a big open problem whether in the static case Diameter and APSP are actually subcubically equivalent (see e.g.~\cite{AbboudGW15}).

{\bf (3)} We consider the problem of maintaining the distance between two fixed nodes $s$ and $t$ in an undirected weighted graph under edge failures.
The case of a {\em single edge failure} can be solved in $m$ edge, $n$ node graphs with essentially optimal $O(m\alpha(n))$ preprocessing time and $O(1)$ query time with an algorithm of Nardelli et al.~\cite{nardelli}.
The case of two edge failures has been open for some time. We give compelling reasons for this by showing that under the APSP conjecture, maintaining the $s$-$t$ distance in an unweighted graph under two edge failures requires either $n^{3-o(1)}$ preprocessing time or $n^{2-o(1)}$ query time. Notice that with no preprocessing time, just by using Dijkstra's algorithm at query time, one can obtain $O(n^2)$ query time. Similarly, one can achieve $\tilde{O}(n^3)$ preprocessing time and $O(1)$ query time by applying the single edge failure algorithm of~\cite{nardelli} $n$ times at preprocessing, once for $G\setminus \{e\}$ for every $e$ on the shortest $st$ path. Thus our lower bound shows that under the APSP conjecture, the naive recomputation time is essentially optimal.

{\bf (4)}  We show lower bounds with sensitivity $d$ for deletions-only and insertions-only
$(2-\varepsilon)$-approximate single source and
$(5/3 - \varepsilon)$-approximate $st$-shortest paths in undirected unweighted graphs,
as well as for weighted bipartite matching problems under the OMv conjecture. The 
lower bounds show that with polynomial in $n$ preprocessing
either the update time must be super-polynomial in $d$ or the query time 
must be $d^{1-o(1)}$.

\paragraph*{New upper bounds.}
We complement our lower bounds with an algorithm showing that some of our lower bounds are tight:
In particular, we present a deterministic combinatorial algorithm that can
compute a $(1+\varepsilon)$-approximation (for any $\varepsilon>0$) for the eccentricity of any given vertex,
the radius and the diameter of a directed or undirected unweighted graph after single edge failures.
The preprocessing time of the data structure is $\tO(mn+n^{1.5} \sqrt{D m / \varepsilon})$, where $D$ is the diameter of the graph and $m$ and $n$ are the number of edges and vertices; the query time is constant.
Since $D \leq n$, the data structure can
be preprocessed in time $\tO(n^2 \sqrt{m / \varepsilon})$. In particular, for sparse graphs with
$m = \tO(n)$, it takes time $\tO(n^{2.5} \varepsilon^{-\frac{1}{2}})$ to build
the data structure. Our lower bounds from BMM state that even getting
a $(4/3-\varepsilon)$-approximation for diameter or eccentricity after a \emph{single} edge deletion
requires either $n^{3-o(1)}$ preprocessing time, or $n^{2-o(1)}$ query or update time.
Hence, our algorithm's preprocessing time is tight (under the conjecture)
since it has constant time queries.

\paragraph*{Conditional Lower Bounds based on modifications of prior reductions.}
Some reductions in prior work~\cite{williams2010subcubic,abboud2014popular} only
perform very few updates before a query is performed or they can be modified to do so. After the query, the updates are ``undone''
by rolling back to the initial instance of the input problem.
Hence, some of their reductions also hold in a sensitivity setting.
Specifically we achieve the following results in this way:

(1) Based on the BMM conjecture we show that for reachability problems with $st$-queries already \emph{two} edge
insertions require $n^{3-o(1)}$ preprocessing time
or $n^{2-o(1)}$ update or query time;
for $ss$-queries we obtain the same bounds even for a \emph{single} edge insertion.
This lower bound is matched by an algorithm that recomputes at each step.

(2) We present strong conditional lower bounds for static oracle data structures.
We show that under the BMM conjecture, oracle data structures that answer about the reachability between any two queried vertices 
cannot
have truly subcubic preprocessing time \emph{and} truly subquadratic query time.
This implies that combinatorial algorithms  \emph{either} essentially need to compute the transitive closure
matrix of the graph during the preprocessing time \emph{or} essentially need to traverse the graph at each query.
We show the same lower bounds for static oracles that solve the $(5/3-\varepsilon)$-approximate
$ap$-shortest paths problem in undirected unweighted graphs.
This shows that we essentially cannot do better than solving APSP in the preprocessing or computing the distance
in each query.

(3) The subcubic equivalence between the replacement paths problem and APSP~\cite{williams2010subcubic} immediately leads to a conditional lower bound 
for $s$-$t$ distance queries with sensitivity 1 in {\em directed}, weighted graphs. Our lower bound for $s$-$t$ distance queries with sensitivity 2 in undirected graphs is inspired by this reduction.
The lower bound for sensitivity $1$ is matched by the algorithm of Bernstein and Karger~\cite{bernstein2009nearly}.

Similarly, a reduction from BMM to replacement paths in directed unweighted graphs from~\cite{williams2010subcubic} shows that the $O(m\sqrt{n})$ time algorithm of Roditty and Zwick~\cite{rzkshortest} is optimal among all combinatorial algorithms, for every choice of $m$ as a function of $n$. It also immediately implies that under the BMM conjecture, combinatorial  $s$-$t$ distance $1$-sensitivity oracles in unweighted graphs require either $mn^{0.5-o(1)}$ preprocessing time or $m/n^{0.5+o(1)}$ query time, for every choice of $m$ as a function of $n$; this is tight due to Roditty and Zwick's algorithm.
(The combinatorial restriction is important here as there is a faster $\tilde{O}(n^{2.373})$ time non-combinatorial algorithm for replacement paths \cite{v-replacement} and hence for distance sensitivity oracles in directed unweighted graphs.)

(4) We additionally provide new lower bounds under SETH:
We show that assuming SETH the \#SSR problem cannot be solved with truly subquadratic
update and query times when any constant number of edge insertions is allowed;
this matches the lower bound for the general dynamic setting.
For the $ST$-reachability problem and the computation of
$(4/3 - \varepsilon)$-approximate diameter we show that under SETH truly sublinear
update and query times are not possible even when only a constant number of edge insertions are supported.
The sensitivity of the reductions is a constant $K(\varepsilon,t)$ that is
determined by the preprocessing time $O(n^t)$ we allow
and some properties of the sparsification lemma~\cite{impagliazzo2001which}.
Notice that while the constant $K(\varepsilon,t)$ depends on the preprocessing time and the constant
in the sparsification lemma, it does \emph{not} depend on any property of the SAT instance in the reduction.
See Section~\ref{Sec:SETHSensitivity} for a thorough discussion of the parameter $K(\varepsilon,t)$.
The lower bound for \#SSR shows that we cannot do better than recomputation after each update.

(5) Using a reduction from OMv we show lower bounds with sensitivity $d$ for deletions-only or insertions-only
$st$-reachability, strong connectivity in directed graphs. The 
lower bounds show that with polynomial in $n$ preprocessing either the update time must be super-polynomial $d$ or the query time 
must be $\Omega(d^{1-\varepsilon})$.

\paragraph*{Related Work.}
In the last few years many conditional lower bounds were derived for dynamic algorithms.
Abboud and Williams~\cite{abboud2014popular} gave such lower bounds under several different conjectures.
New lower bounds were given by Henzinger et al.~\cite{henzinger2015unifying}, who introduced the OMv conjecture,
and by Abboud, Williams and Yu~\cite{abboud2015matching}, who stated combined conjectures
that hold as long as either the 3SUM conjecture \emph{or} SETH \emph{or} the APSP conjecture is correct.
Dahlgaard~\cite{dahlgaard2016hardness} gave novel lower bounds for partially dynamic algorithms.
Abboud and Dahlgaard~\cite{abboud2016popular} showed the first hardness results for dynamic
algorithms on planar graphs and
Kopelowitz, Pettie and Porat~\cite{kopelowitz2016higher} gave stronger lower bounds from the 3SUM conjecture.
However, none of the lower bounds mentioned in the above papers explicitly handled the sensitivity setting.

During the last decade there have been many new algorithms designed for the sensitivity setting.
In Section~\ref{Sec:UpperBounds} we give a short discussion summarizing many existing algorithms.

\section{Lower Bounds From Boolean Matrix Multiplication}

The following theorem summarizes the lower bounds we derived from the BMM conjecture.
\begin{theorem}
\label{Thm:TriangleLBs}
	Assuming the BMM conjecture, combinatorial algorithms cannot solve
	the following problems with preprocessing time $O(n^{3-\varepsilon})$,
	and update and query times $O(n^{2-\varepsilon})$ for any $\varepsilon > 0$:
	\begin{compactenum}
		\item incremental $st$-reachability with sensitivity 2,
		\item incremental $ss$-reachability with sensitivity 1,
		\item static $ap$-reachability,
		\item $(7/5-\varepsilon)$-approximate $st$ shortest paths in undirected unweighted graphs with sensitivity 2,
		\item $(3/2-\varepsilon)$-approximate $ss$ shortest paths in undirected unweighted graphs with sensitivity 1,
		\item static $(5/3-\varepsilon)$-approximate $ap$ shortest paths 
		\item decremental $(4/3 - \varepsilon)$-approx.\ diameter in undirected unweighted graphs with sensitivity 1,
		\item decremental $(4/3 - \varepsilon)$-approx.\ eccentricity in undirected unweighted graphs with sensitivity 1.
	\end{compactenum}
	Additionally, under the BMM conjecture, decremental $st$-shortest paths with sensitivity $1$ in directed unweighted graphs with $n$ vertices and $m\geq n$ edges require either $m^{1-o(1)}\sqrt{n}$ preprocessing time or $m^{1-o(1)}/\sqrt{n}$ query time for every function $m$ of $n$.
\end{theorem}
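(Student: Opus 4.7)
Overall, all nine lower bounds in the theorem come from fine-grained reductions from Triangle Detection, which is subcubic-equivalent to Boolean Matrix Multiplication for combinatorial algorithms \cite{williams2010subcubic,itairodeh}. Given a tripartite triangle instance $G = (A \cup B \cup C, E)$ with $|A| = |B| = |C| = n$, the plan in each item is to build an auxiliary graph $G'$ of size $O(n^2)$ in time $O(n^2)$, hand it to the hypothetical sensitivity data structure for preprocessing, and then detect a triangle via $O(n)$ batch-update-and-query rounds; the total time budget is then $P + n\,T = O(n^{3-\varepsilon}) + n\cdot O(n^{2-\varepsilon}) = O(n^{3-\varepsilon})$, contradicting the Triangle conjecture.

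For the reachability items (1--3), I would use a layered DAG on $\{s\} \cup A \cup B \cup C \cup \{t\}$ encoding the three bipartite edge sets $E_{AB}$, $E_{BC}$, $E_{AC}$, and use the sensitivity updates to ``activate'' a single row or pair. For item~3 (static $ap$-reachability) the graph is built once and $n^2$ source-sink queries (one per candidate $(a,c)$) suffice; for item~1, two insertions $(s,a)$ and $(c,t)$ reduce to a chosen pair while iterating $a\in A$ externally; for the $ss$-variant in item~2, fixing the source saves one insertion.

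The approximate shortest path items (4--6) follow by augmenting these reachability gadgets with parallel ``padding'' paths so that triangle-realizing walks have length $k$ and non-realizing ones have length at least $k'$, where the ratio $k'/k$ matches the stated $7/5$, $3/2$, $5/3$ thresholds; these are standard distance-gap blowups in the spirit of the $(5/3)$ Dor--Halperin--Zwick barrier. The genuinely new reductions are items 7--8 for diameter/eccentricity under one deletion, and here lies the main obstacle: as the paper emphasizes, no BMM-to-diameter reduction is known in the static setting, so the gadget must use the deletion itself as an essential ingredient. The plan is to design $G'$ so that in $G'$ every eccentricity is at most $3$ thanks to a single ``universal'' edge $e$, but after deleting $e$ the diameter/eccentricity depends on the triangle instance: with a triangle the value stays at $3$, without a triangle two designated hub vertices are forced apart to distance at least $4$, yielding the claimed $(4/3-\varepsilon)$ ratio. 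Iterating over $n$ candidate positions for $e$ (or $n$ source vertices for eccentricity) produces the $n$-round reduction.

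Finally, the ``additionally'' clause on decremental $st$-shortest paths with sensitivity~$1$ in directed unweighted graphs follows from the reduction of \cite{williams2010subcubic} showing that the replacement paths problem on an $n$-vertex, $m$-edge directed unweighted graph requires $(m\sqrt{n})^{1-o(1)}$ combinatorial time under the BMM conjecture. A $1$-sensitivity $st$-distance oracle with preprocessing $P$ and query time $Q$ solves replacement paths by first computing the shortest $st$-path (of length at most $n-1$) and then issuing one query per failed edge on that path, yielding total time $P + (n-1)Q \geq (m\sqrt{n})^{1-o(1)}$, so either $P \geq m^{1-o(1)}\sqrt{n}$ or $Q \geq m^{1-o(1)}/\sqrt{n}$, as claimed.
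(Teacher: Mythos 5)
Your high-level template (reduce Triangle/BMM to the sensitivity problem via $n$ batch-update-and-query rounds, for a total time $P + n(u+q) = o(n^3)$) and your treatment of the final replacement-paths clause both match the paper and are correct. However there are genuine gaps in the middle.

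First, the reachability/shortest-path gadget: the paper does \emph{not} build a DAG on $\{s\}\cup A\cup B\cup C\cup\{t\}$ from a tripartite instance. It takes a general graph $G=(V,E)$ and creates \emph{four} copies $V_1,V_2,V_3,V_4$ of the same vertex set, with bipartite edges between consecutive layers following $E$, so that a triangle $(v,u,w)$ corresponds to a path $v_1\to u_2\to w_3\to v_4$. The critical feature is that the source layer and the sink layer are copies of the \emph{same} vertex set, so a single reachability query from $v_1$ to $v_4$ (or two insertions $(s,v_1),(v_4,t)$ plus one $st$-query) tests ``triangle through $v$'' in one shot; $n$ rounds suffice. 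Your three-layer $A\to B\to C$ plus $s,t$ does not have this property, and it shows: you claim item~3 (static $ap$-reachability) needs $n^2$ queries, one per $(a,c)\in A\times C$. That is $n^2\cdot O(n^{2-\varepsilon}) = O(n^{4-\varepsilon})$, which does not contradict the $n^{3}$ Triangle bound. The paper uses $n$ queries $v_1\to v_4$, one per $v\in V$, so this is a real breakage, not just pedantry. Items~4--6 then follow from bipartiteness of that same four-layer gadget (distances are $3/\geq 5$ or $5/\geq 7$), which is simpler than your ``padding path'' plan, though your plan could in principle also work.

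Second, items~7--8 (decremental diameter/eccentricity with one deletion) are the paper's headline novelty, and your sketch does not verify them. You posit a ``single universal edge $e$'' whose deletion reveals the triangle structure, and then ``iterating over $n$ candidate positions for $e$.'' These two sentences pull in different directions: there is not one universal edge but $n$ per-vertex escape edges. The paper's construction adds an escape path $v_1\to a_v\to b_v\to v_4$ for each $v$, with cliques on $A=\{a_v\}$ and $B=\{b_v\}$ and two hub vertices $c,d$, so that in $G'$ every distance is $\leq 3$, and deleting exactly the escape edge $(b_v,v_4)$ for one $v$ leaves only the triangle-encoding path $v_1\to u_2\to w_3\to v_4$ as a length-$3$ route; distance rises to $4$ iff $v$ is in no triangle. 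The difficulty is making \emph{all other} pairs stay at distance $\leq 3$ after the deletion, which requires the hubs $c,d$ and the cliques; your sketch does not engage with this. Since the paper explicitly flags that no static BMM-to-diameter reduction is known (so the deletion must be load-bearing), this is precisely the part of the theorem where a proof is owed rather than a plan, and your proposal does not supply it.
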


A strength of the reductions from BMM is that they
can usually be extended to provide APSP-based reductions
for weighted problems without the restriction to combinatorial algorithms;
we do this in Section~\ref{Sec:APSP}.
While we state our results in the theorem only for combinatorial algorithms
under the BMM conjecture, we would like to point out that they also hold {\em for any kind of algorithm}
under a popular version of the triangle detection conjecture for sparse graphs that states that finding a triangle in an $m$-edge graph requires $m^{1+\delta-o(1)}$ time for some $\delta>0$.
Our lower bounds then rule out algorithms with a
preprocessing time of $O(m^{1+\delta-\varepsilon})$ and update and query times
$O(m^{2\delta - \varepsilon})$ for any $\varepsilon > 0$.

Many of the bullets of the theorem follow from prior work via a few observations,
which we discuss in Appendix~\ref{subsec:ReachSPAppendix}.
Our results on decremental diameter and eccentricity, however, are completely novel.
In fact, it was completely unclear before this work whether such results are possible.
Impagliazzo et al.~\cite{carmosino2016nondeterministic} define a strengthening of SETH under which
there can be no deterministic fine-grained reduction from problems such as APSP and BMM
to problems such as orthogonal vectors or diameter in sparse graphs.
It is not clear whether a reduction from BMM to diameter in dense graphs is
possible, as the same ``quantifier issues'' that arise in the sparse graph case
arise in the dense graph case as well: Diameter is an $\exists\forall$-type
problem (i.e., do there exist two nodes such that all paths between them are long?),
and BMM is equivalent to Triangle detection which is an $\exists$-type problem (i.e., do there exist three nodes that form a clique?).

\paragraph*{Decremental Diameter.}
We give the reduction from BMM to decremental diameter in
undirected unweighted graphs with sensitivity $1$.
Note that the lower bound also holds for eccentricity oracles: Instead of querying the
diameter $n$ times, we can query the eccentricity of a variable vertex $n$ times. 

	Let $G = (V,E)$ be an undirected unweighted graph for Triangle Detection. We construct
	a graph $G'$ as follows.
	
	We create four copies of $V$ denoted by $V_1, V_2, V_3, V_4$,
	and for $i = 1,2,3$, we add edges
	between nodes $u_i \in V_i$ and $v_{i+1} \in V_{i+1}$ if $(u,v) \in E$.
	We create vertices $a_v$ and $b_v$ for each $v \in V$,
	and denote the set of all $a_v$ by $A$ and the set of all $b_v$ by $B$.
	We connect the vertices in $A$ to a clique and also those of $B$.
	For each $v \in V$, we add an edge $(v_1, a_v)$ and
	an edge $(a_v, b_v)$.
	A node $b_v$ is connected to all vertices in $V_4$.
	We further introduce two additional vertices $c, d$, which are
	connected by an edge.
	We add edges between $c$ and all nodes in $V_2$ and $V_3$,
	and between $d$ and all nodes in $V_3$ and $V_4$.
	The node $c$ has an edge to each vertex
	in $A$ and the node $d$ has an edge to each vertex in $B$.
	Notice that the resulting graph has $O(n)$ vertices and
	$O(n^2)$ edges. We visualized the graph in Figure~\ref{fig:DiamPic} in the appendix.

	Note that even without the edges from $B \times V_4$, no pair of nodes has distance larger than $3$,
	except for pairs of nodes from $V_1 \times V_4$.
	If a node $v$ participates in a triangle in $G$, then in $G'$ there is a path of length $3$ from $v_1$ to $v_4$
	without an edge from $B \times V_4$. Otherwise, there is no such path, i.e., the diameter
	increases to $4$ after the deletion of $(b_v, v_4)$.

	We perform one stage per vertex $v \in V$: Consider the copy $v_4 \in V_4$ of $v$.
	We remove the edge $(b_v,v_4)$ and query the diameter of the graph.
	We claim that $G$ has a triangle iff one of the queries returns diameter $3$.
	
	\begin{lemma}
		For each vertex $v$ in $G$, the diameter of $G' \setminus \{ (b_v, v_4) \}$ is larger than 3
		if and only if $v$ does not participate in a triangle in $G$.
	\end{lemma}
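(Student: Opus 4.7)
My plan is to reduce the lemma to two finite verifications: (a)~in the punctured graph $G'\setminus\{(b_v,v_4)\}$ every pair of vertices other than $(v_1,v_4)$ is still at distance at most $3$, and (b)~$d_{G'\setminus\{(b_v,v_4)\}}(v_1,v_4)\le 3$ if and only if $v$ belongs to a triangle of $G$. Together, (a) and (b) imply that the diameter exceeds $3$ precisely when $v$ is not in a triangle.

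Step~(a) follows from a brief routing audit using the hubs $c,d$ and the cliques on $A$ and $B$. Within or across the cliques $A$ and $B$ every pair is at distance at most $2$; inside any single layer $V_i$ two vertices route through $c$, $d$, or the $A$/$B$-cliques in at most $3$ steps; for $V_2\times V_4$ one uses $u_2 - c - d - w_4$; and crucially for $V_1\times V_4$ with $(u,w)\ne (v,v)$ the shortcut $u_1 - a_u - b_u - w_4$ survives the deletion because either $u\ne v$ (so $b_u\ne b_v$) or $w\ne v$ (so the removed edge was not $(b_u,w_4)$). The remaining pairs whose canonical length-$3$ path went through $(b_v,v_4)$ (namely $b_v, v_4$ themselves) regain short witnesses via $d$ or a different $b_w$.

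Step~(b) is a short path enumeration from $v_1$ to $v_4$. There is no edge $(v_1,v_4)$, and $N(v_1)=\{x_2:(v,x)\in E\}\cup\{a_v\}$ is disjoint from $N(v_4)\setminus\{b_v\}=\{y_3:(y,v)\in E\}\cup\{d\}\cup(B\setminus\{b_v\})$, ruling out length $2$. For length $3$ with midpoints $p,q$: the case $p=a_v$ forces $q\in N(a_v)\cap(N(v_4)\setminus\{b_v\})$, which is empty after deletion (since $N(a_v)\subseteq A\cup\{v_1,b_v,c\}$ and none of these vertices is in $N(v_4)$ after removing $b_v$); the case $p=x_2$ with $(v,x)\in E$ admits exactly one possibility, namely $q=y_3$ with $(x,y),(y,v)\in E$, i.e.\ a triangle $\{v,x,y\}$ in $G$. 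Conversely, any such triangle yields the length-$3$ path $v_1 - x_2 - y_3 - v_4$.

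\textbf{Main obstacle.} The delicate step is~(b): one must rule out every alternate short route through $A$, $B$, $c$, or $d$---the construction is carefully arranged so that the only length-$3$ witness from $v_1$ to $v_4$ avoiding the deleted edge is a triangle path. Step~(a) is routine, but it still requires checking the handful of pairs whose canonical length-$3$ path happened to traverse $(b_v,v_4)$.
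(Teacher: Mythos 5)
Your proposal is correct and follows essentially the same approach as the paper: decompose the claim into (a) all pairs except $(v_1,v_4)$ remain within distance $3$ after the deletion, and (b) a length-$3$ path from $v_1$ to $v_4$ avoiding $(b_v,v_4)$ exists iff $v$ lies in a triangle. You spell out the neighborhood audit in step~(b) more explicitly than the paper does (which simply asserts the surviving length-$3$ path ``must have the form $v_1\to u_2 \to w_3\to v_4$''), but the underlying argument is identical.
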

	\begin{proof}
		Assume that $G$ has a triangle $(v,u,w) \in V^3$ and consider the stage for $v$.
		Notice that only the shortest paths change that used edge $(b_v,v_4)$;
		this is not the case for any $z \neq v$, because the path $z_1 \to a_z \to b_z \to z_4$
		is not affected by the edge deletion.
		We only need to consider the path $v_1 \to a_v \to b_v \to v_4$.
		Since $G$ has a triangle $(v,u,w)$, there exists the path $v_1 \to u_2 \to w_3 \to v_4$
		of length $3$ as desired.
		Hence, the diameter is $3$.
		
		Assume the query in the stage for vertex $v \in V$ returned diameter $3$.
		Since we deleted the edge $(b_v,v_4)$, there is no path of length $3$
		from $v_1$ to $v_4$ via $A$ and $B$.
		Hence, the new shortest path from $v_1$ to $v_4$ must have the form $v_1 \to u_2 \to w_3 \to v_4$.
		By construction of the graph, this implies that $G$ has a triangle $(v,u,w)$.
	\end{proof}
	
	Altogether we perform $n$ queries and $n$ updates. Thus under the BMM conjecture
	any combinatorial algorithm requires $n^{3-o(1)}$ preprocessing time or
	$n^{2 - o(1)}$ update or query time.

\section{Sensitivity Lower Bounds from the APSP Conjecture}
\label{Sec:APSP}
In this section we present new lower bounds based on the APSP conjecture.
These lower bounds hold for arbitrary, not necessarily combinatorial, algorithms.
We present our results in the following theorem and give the proofs in Appendix \ref{subsec:APSPproofsApendix}.

\begin{theorem}
\label{Thm:APSPLBs}
	Assuming the APSP conjecture, no algorithms can solve
	the following problems with preprocessing time $O(n^{3-\varepsilon})$,
	and update and query times $O(n^{2-\varepsilon})$ for any $\varepsilon > 0$:
	\begin{compactenum}
		\item Decremental $st$-shortest paths in directed weighted graphs with sensitivity $1$,
		\item decremental $st$-shortest paths in undirected weighted with sensitivity $2$,
		\item decremental diameter in undirected weighted graphs with sensitivity $1$.
	\end{compactenum}
\end{theorem}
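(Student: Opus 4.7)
The overall plan is to lift the BMM-based reductions from the previous section to APSP-based reductions by exploiting the subcubic equivalence of APSP with the Negative Triangle problem (and, for part (1), with the replacement paths problem) due to Vassilevska Williams and Williams. The principle is the same one the authors already invoked: a BMM reduction that builds a gadget whose ``short-vs-long path'' gap encodes whether a triangle exists can typically be re-weighted so that the length gap encodes whether a negative-weight (equivalently, minimum-weight) triangle exists. Since the APSP conjecture makes no combinatorial restriction, the resulting lower bounds hold against arbitrary algorithms.

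For part (1), I would reduce directly from the replacement paths problem in directed weighted graphs, which is known to be subcubically equivalent to APSP. Given a directed weighted graph $H$ with a fixed shortest $s$-$t$ path $P = (e_1, \dots, e_k)$ of length $k \le n-1$, run the preprocessing of the purported $1$-sensitivity oracle on $H$; then for each edge $e_i \in P$, issue one batch update deleting $e_i$ followed by one $st$-distance query. This computes the entire replacement paths vector in preprocessing time plus $k$ update-query rounds. If preprocessing is $O(n^{3-\varepsilon})$ and each update or query is $O(n^{2-\varepsilon})$, the total is $O(n^{3-\varepsilon})$, contradicting APSP.

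For part (3), I would adapt the BMM-to-diameter construction from the previous section using Negative Triangle as the source. Starting from a tripartite graph $G$ on parts $V^{(1)}, V^{(2)}, V^{(3)}$ with weights $w(\cdot,\cdot)$ bounded by $n^c$, build the same four-layered graph $G'$ with the auxiliary sets $A,B$ and vertices $c,d$, but assign weights so that (a) a ``triangle path'' $v_1 \to u_2 \to w_3 \to v_4$ has total weight $M + w(v,u) + w(u,w) + w(w,v)$ for a large constant $M$, (b) the ``direct path'' $v_1 \to a_v \to b_v \to v_4$ has weight exactly $M$, and (c) all other shortest paths remain bounded by $M$ by padding distances on the clique edges inside $A$, $B$ and on edges incident to $c,d$. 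Then, after deleting the edge $(b_v, v_4)$, the new eccentricity of $v_1$ (and hence the diameter) drops below $M + T$ precisely when $v$ participates in a triangle of total weight less than $T$. Ranging over all $v$ via $n$ update-query pairs, a Negative Triangle decision is reduced to $n$ oracle operations, yielding the claimed bound.

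For part (2), the task of achieving the analogous lower bound in \emph{undirected} graphs is the main obstacle, since Malik--Mittal--Gupta-style algorithms solve single-edge replacement paths undirectedly in near-linear time, so sensitivity $1$ is provably insufficient. The plan is to use sensitivity $2$ to simulate the role that orientation plays in the directed reduction: build an undirected weighted gadget in which the only short $s$-$t$ path ``encoding'' a triangle on vertices $(u,v,w)$ of the Negative Triangle instance is guarded by two specific bridge edges $e_u$ and $e_v$, and any cheaper alternative $s$-$t$ path uses at least one of them. Iterating over $n$ pairs $(u,v)$ (with $w$ quantified internally inside the graph), each pair giving one batch deletion of $\{e_u, e_v\}$ followed by an $st$-distance query, recovers Negative Triangle in $n$ oracle operations. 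The delicate point will be designing the undirected gadget so that the $\{e_u, e_v\}$-avoiding alternatives genuinely force the path through the triangle layer without any ``short-cut'' undirected reversal; I would handle this by using large additive offsets on every cross-layer edge and appealing to the standard Negative Triangle normalization that scales weights into a bounded range so these offsets dominate the triangle weight only when no triangle is present.
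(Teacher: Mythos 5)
Your parts (1) and (3) essentially match the paper's approach. Part (1) is exactly the paper's argument: replacement paths is subcubically equivalent to APSP, there are at most $n-1$ edges on the shortest $s$-$t$ path whose deletion matters, and querying a $1$-sensitivity oracle once per such edge yields the full replacement-paths vector in $O(n^{3-\varepsilon})$ time. Part (3) is also the paper's route: re-weight the four-layer BMM-to-diameter gadget so that a short $x_i^1 \to V_2 \to V_3 \to x_i^4$ shortcut encodes a negative triangle through $x_i$ while all other pairs remain at bounded distance; the paper implements this with edge weights in $[4M,6M]$ and a threshold of $15M$, but your description of large additive offsets and a threshold test is the same idea.

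Part (2) has a genuine gap. You propose iterating over pairs $(u,v)$, deleting two bridge edges $\{e_u, e_v\}$ indexed by two \emph{independent} triangle vertices, with the third vertex $w$ quantified internally. But if $u$ and $v$ are independent you have $n^2$ batch-update/query rounds, not $n$; at $O(n^{2-\varepsilon})$ each this gives $O(n^{4-\varepsilon})$ total, which does not contradict APSP. Conversely, if you only iterate over $n$ of the pairs you have not covered all potential negative triangles. The paper resolves this by indexing both deleted edges by a \emph{single} parameter $i$: it attaches two length-$n$ zero-weight paths $P'$ (hanging off $s$) and $P''$ (hanging off $t$), and at stage $i$ it deletes the $i$-th edge of $P'$ and the $i$-th edge of $P''$. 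Carefully chosen large offsets on the cross-layer edges then force any competitive $s$-$t$ path to exit $P'$ at position exactly $i$ and re-enter $P''$ at position exactly $i$, which pins down \emph{one} triangle vertex $c_i \in C$; the other two triangle vertices $a_p,b_q$ are selected internally by the shortest-path computation. This way only $n$ stages are needed, each deleting a coupled pair of edges, and the answer at stage $i$ tells you whether some negative triangle passes through $c_i$. Without this coupling of the two deleted edges to a single index, the reduction cannot fit within $n$ oracle rounds.
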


\paragraph*{Decremental $st$-shortest paths in directed weighted graphs with sensitivity $1$.}
In 2010, Vassilevska Williams and Williams~\cite{williams2010subcubic} showed that the so called Replacement Paths (RP) problem is subcubically equivalent to APSP. RP is defined as follows: given a directed weighted graph $G$ and two nodes $s$ and $t$, compute for every edge $e$ in $G$, the distance between $s$ and $t$ in $G\setminus \{e\}$.
Note that only the deletion of the at most $n - 1$ edges on the shortest path from $s$ to $t$
affect the distance from $s$ to $t$.
This has an immediate implication for $1$-sensitivity oracles for $st$-shortest paths:
The APSP conjecture would be violated by any $1$-sensitivity oracle that uses $O(n^{3-\varepsilon})$ preprocessing time
and can answer distance queries between two fixed nodes $s$ and $t$ with one edge deletion in time $O(n^{2-\varepsilon})$ for any $\varepsilon>0$. 

\paragraph*{Decremental $st$-shortest paths in undirected weighted with sensitivity $2$.}
\label{subsubsec:APSPshortpathsDec2}
With this we are able to
show that on \emph{undirected} weighted graphs finding a shortest path between
fixed $s$ and $t$ with $2$ edge deletions cannot be done with truly sub-cubic preprocessing time and truly subquadratic query time assuming the APSP conjecture. 
This is surprising because in the case of a {\em single edge failure} Nardelli et al.~\cite{nardelli} show that shortest paths can be solved with an essentially optimal $O(m\alpha(n))$ preprocessing time and $O(1)$ query time. Thus, assuming the APSP conjecture we show a separation between $1$ sensitivity and $2$ sensitivity. Additionally, with sensitivity $2$ and no preprocessing time $O(n^2)$ update time is achievable, and with $\tilde{O}(n^3)$ preprocessing time using Nardelli et al. we can get an $O(1)$ query time. Thus, we show these approaches are essentially tight assuming the APSP conjecture. 
The full reduction is in Appendix~\ref{subsubsec:APSPshortpathsDec2}.

\paragraph*{Decremental diameter in undirected weighted graphs with sensitivity $1$.}
\label{subsec:APSPdecDiam}
A nice property of BMM-based reductions is that they can very often be converted to APSP-based reductions
to weighted versions of problems. Here we convert our BMM-based reduction for decremental 1-sensitive Diameter
to a reduction from APSP into decremental 1-sensitive diameter in undirected weighted graphs.
Note that, as in the BMM case we can get the same lower bounds for eccentricity.

\section{SETH Lower Bounds with Constant Sensitivity}
\label{Sec:NewSETHLowerBounds}

In this section, we prove conditional lower bounds with constant sensitivity from SETH.
Before we give the reductions, we first argue about what their sensitivities are.

\begin{theorem}
  Let $\varepsilon > 0$, $t \in \mathbb{N}$.
  The SETH implies that there exists no algorithm with preprocessing time $O(n^t)$,
  update time $u(n)$ and query time $q(n)$, such that 
  $\max\{ u(n), q(n) \}= O(n^{1-\varepsilon})$ for the following problems:
  \begin{compactenum}
		\item Incremental \#SSR with constant sensitivity $K(\varepsilon,t)$,
		\item $(4/3 - \varepsilon)$-approximate incremental diameter with constant sensitivity $K(\varepsilon, t)$,
		\item incremental ST-Reach with constant sensitivity $K(\varepsilon, t)$.
  \end{compactenum}
\end{theorem}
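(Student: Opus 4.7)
The plan is to prove all three bounds via SETH-style reductions that share a common skeleton: invoke sparsification, partition the $N$ variables into $K+1$ groups of size $N/(K+1)$, represent partial assignments as vertices, and use the $K$ allowed edge insertions to commit the graph to a partial selection so that few queries decide satisfiability of~$\phi$.

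First I would apply the sparsification lemma to replace the $k$-SAT instance $\phi$ by $2^{\delta N}$ instances, each with $M \leq c(k,\delta)\cdot N$ clauses, for a small $\delta > 0$ to be chosen at the end. For each sparse instance I partition the variables into $K+1$ groups, enumerate the $2^{N/(K+1)}$ partial assignments per group as vertices $u_{i,\alpha}$, and construct in preprocessing a graph $G$ on $n = \Theta((K+1)\cdot 2^{N/(K+1)})$ vertices with $O(nM)$ edges recording, for every triple $(i,\alpha,j)$, whether $\alpha$ contains a literal satisfying clause $c_j$. The preprocessing fits into the allowed budget $O(n^t) = 2^{O(Nt/(K+1))}$. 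The $K$ batched edge insertions then commit the graph to one partial assignment per group: one per group for \#SSR (leaving the last group free), and one fewer for ST-Reach and diameter (leaving both the first and last groups free so that one query can aggregate over them).

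The three gadgets differ only in the final step. For \textbf{incremental \#SSR} I set up a source $s$ and, for every partial assignment $\beta$ of the last group, a target $z_\beta$ that becomes reachable iff the combined assignment satisfies $\phi$; a single count query then returns, modulo a known additive constant, the number of satisfying completions, so iterating over the $2^{KN/(K+1)}$ choices for the other groups decides SAT. For \textbf{incremental ST-Reach} (the set-to-set version) I take $S = V_0$ and $T = V_K$ to be the first and last layers and use $K-1$ insertions for the middle groups; the gadget is arranged so that an $S$-$T$ path exists iff some pair $(\alpha_0,\alpha_K)$ extends the selection to a satisfying assignment, so one query covers $2^{2N/(K+1)}$ completions at once. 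For \textbf{$(4/3-\varepsilon)$-approximate incremental diameter} I splice the same selection gadget into a Roditty--Vassilevska~Williams-style construction: a carefully chosen witness pair lies at distance $\leq 3$ if some completion handles every clause and $\geq 4$ otherwise, so a $(4/3-\varepsilon)$-approximation separates the two values, while --- as in ST-Reach --- the ``max over all pairs'' nature of the diameter aggregates over $V_0\times V_K$ in a single query.

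A hypothetical algorithm with preprocessing $O(n^t)$ and $\max\{u(n),q(n)\}=O(n^{1-\varepsilon})$ would decide each $\phi$ in time at most
\[
2^{\delta N}\Bigl(2^{O(Nt/(K+1))} + 2^{O(N(K-1)/(K+1))}\cdot \poly(K)\cdot n^{1-\varepsilon}\Bigr),
\]
and substituting $n^{1-\varepsilon}=2^{(N/(K+1))(1-\varepsilon)}$ the second summand simplifies to $2^{N(1-\varepsilon/(K+1))}$ for \#SSR and $2^{N(1-(1+\varepsilon)/(K+1))}$ for ST-Reach and diameter. Choosing $K = K(\varepsilon,t)$ large enough so that $t/(K+1) + 2\delta < 1$ and $\varepsilon/(K+1) > 2\delta$, which is feasible once $\delta$ is set small in terms of $\varepsilon$, $t$, and the sparsification constant $c(k,\delta)$, the whole procedure runs in $2^{(1-\Omega(1))N}$ time, contradicting SETH. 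The constant $K$ depends only on $\varepsilon$, $t$, and the sparsification constant, as the theorem demands.

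The main obstacle is the mismatch between reachability / short distance, which naturally expresses disjunction, and satisfaction of $\phi$, which is a conjunction over clauses. I plan to resolve it with the standard \emph{route through witnesses of failure} trick: for every clause $c_j$ introduce a failure vertex $f_j$ which becomes reachable iff no selected partial assignment satisfies $c_j$, and arrange the final target to fire only when no $f_j$ does --- i.e., exactly when every clause is handled by the current selection. Porting this idea to the diameter setting, while ensuring the $K$ inserted edges cannot create shortcuts between the witness pair except in the satisfying case and that the $4/3$ gap survives, is the most delicate technical step.
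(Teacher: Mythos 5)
Your proposal and the paper diverge at the very first structural choice. You partition the \emph{variables} into $K+1$ groups, commit to $K$ of them by $K$ edge insertions, and let the query range over the remaining group(s). The paper keeps only \emph{two} variable groups --- a small set $U$ of $\delta\tilde n$ variables and its complement $V\setminus U$ --- and instead partitions the \emph{clauses} into $K=c/\delta$ groups. The graph $D_\delta$ contains a vertex for every subset of every clause group, and a stage for an assignment $\phi$ to $V\setminus U$ is encoded by inserting, per clause group $G_i$, a single edge from $s$ (or $t$) to the vertex representing $\{c\in G_i:\phi\not\vDash c\}$. So one inserted edge carries $\delta\tilde n$ bits of clause information, and the sensitivity $K$ is determined entirely by the sparsification constant and the allowed preprocessing.

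There is a real gap in your construction that this clause--subset gadget exists specifically to close. You correctly identify that the conjunction over clauses is handled by the complement (a target is \emph{un}reachable iff all clauses are covered). But there is a second conjunction: for a failure vertex $f_j$ to be ``reachable iff no committed $\alpha_i$ satisfies $c_j$,'' you need a $K$-fold conjunction over the variable groups. With only $K$ inserted edges you cannot realize it. A star $s\to u_{1,\alpha_1},\dots,s\to u_{K,\alpha_K}$ or a chain $s\to u_{1,\alpha_1}\to\cdots\to u_{K,\alpha_K}$ with preprocessed edges from $u_{i,\alpha}$ to the clauses it fails makes a clause node reachable as soon as \emph{some} $\alpha_i$ fails it --- a disjunction, the wrong polarity. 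To force all intermediate hops to go through the committed $\alpha_i$'s, you would have to block the $2^{N/(K+1)}-1$ non-committed vertices in every layer, which far exceeds $K$ insertions; and the preprocessed graph cannot anticipate the commitment. The paper avoids the issue because its witness path $s\to d_i\to c_j\to\bar u$ tests only a \emph{2-fold} conjunction ($\phi$ fails $c_j$, $\bar u$ fails $c_j$), with all of $\phi$'s clause information compressed into the single chosen subset node $d_i$. If you tried to port the subset-node compression into your $(K+1)$-group scheme, each committed variable group would itself need $\Theta(c(K+1))$ subset-node insertions to preserve the vertex budget $2^{N/(K+1)}$, giving sensitivity $\Theta(K^2)$ rather than $K$; so the variable-partition route does not reach the claimed constant sensitivity, and the running-time calculation that follows it does not survive. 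The fix is to abandon the $(K+1)$-way variable split and use the paper's clause-group subset gadget on a two-way variable split.
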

We prove the theorem in Appendix \ref{subsec:SETHAppendix}.
The parameter $K(\varepsilon,t)$ is explained in the following paragraph*.

\paragraph*{The Sensitivity of the Reductions.}
\label{Sec:SETHSensitivity}
The conditional lower bounds we prove from SETH hold even for constant sensitivity;
however, the derivation of these constants is somewhat unnatural.
Nonetheless, we stress that our lower bounds hold for constant sensitivity
and in particular for every algorithm with sensitivity $\omega(1)$.

To derive the sensitivity of our reductions, we use a similar approach as
Proposition~1 in \cite{abboud2014popular}, but we need
a few more details. We start by revisiting the sparsification lemma.

\begin{lemma}[Sparsification Lemma, \cite{impagliazzo2001which}]
  For $\varepsilon > 0$ and $k \in \mathbb{N}$, there exists a constant $C = C(\varepsilon, k)$,
  such that any $k$-SAT formula $F$ with $\tilde n$ variables can be expressed as
  $F = \bigvee_{i=1}^l F_i$, where $l = O(2^{\varepsilon \tilde n})$ and each $F_i$ is a $k$-SAT
  formula with at most $C \tilde n$ clauses.
  This disjunction can be computed in time $O(2^{\varepsilon \tilde n} \poly(\tilde n))$.
\end{lemma}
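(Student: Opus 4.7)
The plan is to reproduce the branching argument of Impagliazzo, Paturi, and Zane~\cite{impagliazzo2001which} that this lemma is cited from. Since this is a classical result, the paper presumably quotes it without reproof; my plan would be to construct the decomposition $F = \bigvee_i F_i$ as the set of leaves of a bounded-depth recursion tree in which each leaf is a restriction of $F$ (with the partial assignment made along the path recorded as unit clauses) that happens to be sparse.

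First I would define a recursive procedure $\textsf{Sparsify}(F')$ that either (a) declares $F'$ to be sparse and outputs it as one of the $F_i$, or (b) picks a \emph{frequent sub-clause}, i.e., a set $C'$ of at most $k$ literals that appears as a sub-clause of at least $a_{|C'|}$ clauses of $F'$, and branches on it. The thresholds $a_1 > a_2 > \cdots > a_k$ are constants depending on $\varepsilon$ and $k$ to be tuned later. The two branches are (i) assume some literal of $C'$ is true, in which case every clause containing $C'$ is satisfied and can be discarded, and (ii) assume every literal of $C'$ is false, in which case every clause containing $C'$ is shortened. When no frequent sub-clause exists, a double-counting argument shows that the total number of clauses of $F'$ is at most $\sum_{j=1}^{k} a_j \binom{\tilde n}{j-1} = O(\tilde n)$, so $F'$ is sparse with the constant $C$ determined by the $a_j$'s.

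Soundness -- that an assignment satisfies $F$ iff it satisfies some $F_i$ -- is immediate from the branching, since the two children cover both possibilities and each $F_i$ is $F$ with some unit clauses adjoined. The real technical obstacle is bounding the number of leaves of the recursion tree by $O(2^{\varepsilon \tilde n})$. The hard part will be designing the thresholds $a_j$ together with an accompanying potential function (roughly, a weighted count of clauses with weights decreasing in clause length) so that the information-theoretic cost charged for each branching step along any root-to-leaf path telescopes to at most $\varepsilon \tilde n$, while simultaneously guaranteeing that branch (i) removes enough clauses and branch (ii) shortens enough clauses to make progress towards a sparse leaf. The tuning in~\cite{impagliazzo2001which} yields a constant $C(\varepsilon,k)$ that grows quickly (roughly a tower in $k$ and $1/\varepsilon$); once the potential argument is set, the remainder is bookkeeping, and the claimed running time $O(2^{\varepsilon \tilde n} \poly(\tilde n))$ is just the tree size times polynomial work per node.
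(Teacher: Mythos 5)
The paper does not prove this lemma; it imports the Sparsification Lemma from Impagliazzo, Paturi, and Zane~\cite{impagliazzo2001which} as a known black box and quotes it without reproof, so there is no internal proof to compare your argument against. Your sketch does correctly identify the skeleton of the IPZ argument: recursively branch on a frequent sub-clause $C'$, with one branch assuming some literal of $C'$ is true (discarding all superclauses of $C'$) and the other assuming every literal of $C'$ is false (shortening all superclauses), and declare a leaf when no frequent sub-clause remains. That much is the standard route.

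However, there are two problems. First, the concrete bound you assert for leaves, $\sum_{j=1}^{k} a_j \binom{\tilde n}{j-1} = O(\tilde n)$, is incorrect for constant thresholds $a_j$: already for $j=3$ the term $a_3\binom{\tilde n}{2}$ is quadratic in $\tilde n$. In the actual argument, sparsity at a leaf is not obtained by summing over all possible sub-clause lengths in this way; it requires the more careful structural bookkeeping IPZ perform (roughly, once no length-$1$ sub-clause is frequent one already has $O(\tilde n)$ clauses, and the higher-length thresholds serve a different purpose: guaranteeing progress of the recursion, not directly bounding leaf size). Second, and more importantly, you explicitly defer the design of the thresholds $a_1 > \dots > a_k$ and the accompanying potential function to ``tuning'' and ``bookkeeping.'' That is precisely where all of the mathematical content of the Sparsification Lemma resides: without exhibiting a concrete potential and verifying that each branching step telescopes correctly, there is no proof that the recursion tree has only $O(2^{\varepsilon \tilde n})$ leaves. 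As written, your proposal is an accurate high-level road map of where a proof would go, not a proof, and the one concrete formula it does commit to is wrong.
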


We set $C(\varepsilon)$ to the smallest $C(\varepsilon,k)$, over all $k$ such that $k$-SAT cannot be solved faster
than in $O^*(2^{(1-\varepsilon)\tilde n})$ time\footnote{The $O^*(\cdot)$ notation hides $\poly(\tilde n)$ factors.};
formally, 
	$C(\varepsilon) = \min\{ C(\varepsilon', k) :
	\varepsilon' < \varepsilon \text{ and } k \in \mathbb{N} \text{ with } k\text{-SAT} \not\in O^*(2^{(1-\varepsilon')\tilde n})\}$.
Note that $C(\varepsilon)$ is well-defined if we assume that SETH is true
(see also Proposition~1 in \cite{abboud2014popular}).
Finally, for any $\varepsilon > 0$ and $t \in \mathbb{N}$, we set
	$K(\varepsilon, t) = C(\varepsilon) \cdot t / (1 - \varepsilon)$,
which gives the sensitivity of our reductions.

In our reductions, $t \in \mathbb{N}$ is the exponent of the allowed preprocessing time and $\varepsilon > 0$
denotes the improvement in the exponent of the running time over the $2^n$-time algorithm.
We note that $K(\varepsilon,t)$ gives a tradeoff: For small $t$ (i.e., less preprocessing time),
the lower bounds hold for smaller sensitivities;
a smaller choice of $\varepsilon$ yields larger sensitivities.

In the reductions we will write $K$ to denote $K(\varepsilon, t)$ and
$c$ to denote $C(\varepsilon,k)$ whenever it is clear from the context.

\paragraph*{The Reductions.}
Our reductions are conceptually similar to the ones in~\cite{abboud2014popular}, but the
graph instances we construct are based on a novel idea to minimize the size
of the batch updates we need to perform.
Here we describe the construction of the graphs we use in the reductions and
refer to Appendix~\ref{subsec:SETHAppendix} for full proofs.

We give two graphs, $H_\delta$ and $D_\delta$, for $\delta \in (0,1)$.
For the construction,
let $F$ be a SAT formula over a set $V$ of $\tilde n$ variables and $c \cdot \tilde n$ clauses.
Let $U \subset V$ be a subset of $\delta \tilde n$ variables.

Construction of $H_\delta$:
  For each partial assignment to the variables in $U$ we
  introduce a node. The set of these nodes is denoted by $\bar U$.
  For each clause of $F$ we introduce a node and denote the set of these nodes by $C$.
  We add an edge between a partial assignment $\bar u \in \bar U$ and
  a clause $c \in C$ if $\bar u$ does not satisfy $c$.
  Observe that $H_\delta$ has $O(2^{\delta \tilde n})$ vertices and $O^*(2^{\delta \tilde n})$ edges.

Construction of $D_\delta$:
  We partition the set of clauses $C$ into $K = c/\delta$ groups
  of size $\delta \tilde n$ each and denote these groups by $G_1, \dots, G_{K}$.
  For all groups $G_i$, we introduce a vertex into the graph
  for each non-empty subset $g$ of $G_i$.
  The edges to and from the nodes of $D_\delta$ will be introduced during reductions.
  Observe that for each group we introduce $O(2^{\delta \tilde n})$ vertices and
  $D_\delta$ has $O(K \cdot 2^{\delta \tilde n})$ vertices in total.

Our reductions have small sensitivity since we will only need to insert a single edge
from $H_\delta$ to each group of clauses in $D_\delta$.
Hence, we only need to insert $K = O(1)$ edges
in order to connect $H_\delta$ and $D_\delta$ at each stage in the reduction.
However, we will need to argue how we can efficiently pick the correct sets in $D_\delta$.

\section{Diameter Upper Bound}
\label{Sec:DiameterUpperBound}

In this section, we present deterministic algorithms, which can
compute a $(1+\varepsilon)$-approximation for the eccentricity,
the radius and the diameter of directed and undirected unweighted graphs after single edge deletions.
All of these algorithms run in time truly subcubic time for graphs with
a truly subquadratic number of edges.

Bernstein and Karger~\cite{bernstein2009nearly} give an algorithm for the related problem
of all-pairs shortest paths in a directed weighted graph $G = (V,E)$ in case of
single edge deletions.
Their oracle data structure requires $\tO(mn)$ preprocessing time. Given a
triplet $(u,v,e) \in V^2 \times E$, the oracle can output the distance from $u$
to $v$ in $G \setminus e$ in $O(1)$ time.

For the diameter problem with single edge deletions, note that
only deletions of the edges in the shortest paths trees can have an effect
on the diameter.
Using this property, a trivial algorithm to compute the exact diameter after the deletion of
a single edge works as follows: Build the oracle data structure
of Bernstein and Karger~\cite{bernstein2009nearly}.
For each vertex $v$, consider its shortest paths tree $T_v$.
Delete each tree-edge once and query the distance from $v$ to $u$ in $G \setminus e$
for all vertices $u$ in the subtree of the deleted tree-edge.
By keeping track of the maximum distances, the diameter of $G$ after
a single edge deletion is computed exactly.
As there are $n - 1$ edges in $T_v$, we spend $O(n^2)$ time for each vertex.
Thus, the trivial algorithm requires $O(n^3)$ time.

In this section, we improve upon this result as follows.

\begin{theorem}
	Let $G = (V,E)$ be a directed or undirected unweighted graph with $n$ vertices and $m$ edges, let
	$\varepsilon > 0$, and let $D$ be the diameter of $G$.
	There exists a data structure that given
	a single edge $e \in E$ returns for $G \setminus e$ in constant time
	(1) the diameter, (2) the radius,
	and (3) the eccentricity of any vertex $v \in V$
	within an approximation ratio of $1 + \varepsilon$.
	It takes $\tO(n^{1.5} \sqrt{D m / \varepsilon}+mn)$ preprocessing time to build this data structure.
\end{theorem}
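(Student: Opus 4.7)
The plan is to build on the Bernstein--Karger single-edge-failure distance oracle, which after $\tO(mn)$ preprocessing supports $d_{G\setminus e}(u,v)$ queries in $O(1)$ time. Adding BFS from every vertex in $O(mn)$ additional time produces the shortest-path trees $T_v$ and the distance arrays $d_G(v,\cdot)$, and reduces the task to precomputing, for every pair $(v,e)$ with $e\in T_v$, a $(1+\varepsilon)$-approximation to $\text{ecc}_{G\setminus e}(v)$; whenever $e\notin T_v$, $\text{ecc}_{G\setminus e}(v)=\text{ecc}_G(v)$ is unchanged. Storing these at most $n(n-1)$ precomputed approximations suffices for $O(1)$ eccentricity queries, and the diameter and radius oracles simply keep the per-edge maximum and minimum over $v$ of these values.

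The crucial structural observation is the decomposition $\text{ecc}_{G\setminus e}(v)=\max\{M_{\text{out}}(v,e),\,M_{\text{in}}(v,e)\}$, where $M_{\text{out}}(v,e):=\max_{u\notin X_{v,e}}d_G(v,u)$ and $M_{\text{in}}(v,e):=\max_{u\in X_{v,e}}d_{G\setminus e}(v,u)$, and $X_{v,e}$ denotes the subtree of $T_v$ rooted at the lower endpoint of $e$. The outer term $M_{\text{out}}$ can be computed for all $(v,e)$ in $O(n^2)$ time via a single Euler tour and range-max structure on each $T_v$. The inner term $M_{\text{in}}$ is the actual difficulty: a direct BK scan over $X_{v,e}$ costs $\sum_v\sum_{e\in T_v}|X_{v,e}|=O(n^2D)$, which exceeds the target budget whenever $D$ is much larger than $m/n$.

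To approximate $M_{\text{in}}$ within budget, I would select a deterministic landmark set $S\subseteq V$ of size $s$ (to be balanced) such that every shortest path of length at least $\ell=\Theta(n/s)$ in every failure graph $G\setminus e$ is hit by $S$. For each landmark $w\in S$ and each edge $e$, I would precompute, via the BK oracle, the value $\text{ecc}_{G\setminus e}(w)$ together with an achieving witness $z_{w,e}$; the total cost of this preprocessing is $\tO(s(m+nD))$. For a query $(v,e)$, the approximate $M_{\text{in}}(v,e)$ is the maximum of $d_{G\setminus e}(v,z_{w,e})$ over $w\in S$, augmented by a direct BK enumeration of the $O(\ell/\varepsilon)$ vertices within $G\setminus e$-distance $\ell/\varepsilon$ of $v$, which handles witnesses shorter than $\ell/\varepsilon$. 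A triangle-inequality argument applied to the prefix of the $v$-to-$u^{*}$ path hit by $S$ (of length at most $\ell$) then shows that the returned value lies within $(1+O(\varepsilon))$ of the true $\text{ecc}_{G\setminus e}(v)$ in both regimes. Balancing $s(m+nD)$ against $n^2\cdot\ell/\varepsilon$ yields the claimed $\tO(n^{1.5}\sqrt{Dm/\varepsilon})$ bound.

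The main obstacle is (i) constructing the hitting set $S$ deterministically so that it covers long shortest paths in all $m$ failure graphs $G\setminus e$ simultaneously --- a random sample of size $\tO(n/\ell)$ works by a union bound, but the theorem asserts a deterministic algorithm, so this should be handled by a greedy path-covering scheme driven by the BK oracle on top of the already-computed shortest-path trees. A secondary subtlety is the conversion from the additive error $\ell$ to a multiplicative $(1+\varepsilon)$ error, which requires choosing $\ell$ relative to the witness length and verifying that the short-witness BK enumeration stays within budget --- this is precisely where the $\sqrt{Dm/\varepsilon}$ factor in the preprocessing time arises.
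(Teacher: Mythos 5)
Your starting point agrees with the paper: Bernstein--Karger as the black-box failure oracle, one shortest-path tree $T_v$ per source, observing that only deletions of tree edges matter, and preprocessing an answer for each pair $(v,e)$ so queries are table lookups. The decomposition into $M_{\text{out}}$ (vertices outside the fallen subtree, unchanged) and $M_{\text{in}}$ (vertices in the subtree $T_e$, recomputed) is also the same as the paper's $n(v,e)$ versus $d_v$. From that point on, however, your architecture is genuinely different, and it has two gaps that the paper's architecture is specifically designed to avoid.

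First, the hitting set. You require a single landmark set $S$ that hits every shortest path of length at least $\ell$ \emph{in every failure graph $G\setminus e$ simultaneously}, and you flag the deterministic construction of such an $S$ as the main obstacle --- correctly so, because this object is hard to build and is in fact not needed. The paper instead uses a per-source set $S_v\subseteq T_v$ consisting of nodes at regularly spaced levels with deep subtrees; this only needs to hit \emph{tree} paths of $T_v$, which is elementary. The reason this weaker guarantee suffices is Lemma~\ref{lem:DetourOffset}: for $y,z\in T_e$ connected by a tree path below the failed edge $e$, that tree path survives in $G\setminus e$, so $\de(y,z)\le\varepsilon F$ holds for free, and $\de(v,z)\le\de(v,y)+\varepsilon F$ follows by concatenation. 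In other words, one never needs landmarks to lie on failure-graph shortest paths; one only needs short \emph{surviving} detours from a landmark to each subtree vertex, and tree paths under the cut edge provide exactly those.

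Second, the short-witness case. You propose to handle witnesses at $G\setminus e$-distance below $\ell/\varepsilon$ by ``a direct BK enumeration of the $O(\ell/\varepsilon)$ vertices within $G\setminus e$-distance $\ell/\varepsilon$ of $v$.'' The number of vertices at distance at most $\ell/\varepsilon$ is not $O(\ell/\varepsilon)$; in a dense graph it can be $\Theta(n)$, so this step is $\Theta(n)$ per pair $(v,e)$ and $\Theta(n^3)$ overall, blowing the budget whenever $D$ is small. The paper eliminates this case entirely by classifying the \emph{failed edge} rather than the witness: for a ``low'' edge (at level $\ge F$) every $u\in T_e$ already satisfies $\de(v,u)\ge\dG(v,u)\ge F$, so an additive $\varepsilon F$ error is automatically a multiplicative $(1+\varepsilon)$; for a ``high'' edge (at level $<F$) a separate exact computation is performed on a compressed graph $G_e$. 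The crucial counting fact making the high-edge phase affordable --- and which your scheme has no analogue of --- is that at any fixed level $j$ the subtrees $T_e$ are pairwise disjoint, so all $G_e$ at level $j$ together have $O(m)$ edges, for $\tO(Fm)$ total over all high levels. That disjointness-per-level argument is what ultimately produces the $\sqrt{Dm/\varepsilon}$ tradeoff, and without it (or a fix for the short-witness enumeration) your scheme does not meet the stated bound.

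So the proposal shares the outer shell (BK oracle, per-edge precomputation, in/out split) but is missing the two ideas the proof actually turns on --- the detour-offset lemma that lets tree-path landmarks replace failure-graph hitting sets, and the per-level disjointness that makes exact handling of the near-root edges cheap --- and the short-witness enumeration as written is incorrect.
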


The rest of this section is devoted to the proof of the theorem.
We give the proof of the theorem for directed graphs and point out the same proof also works
for undirected graphs.
We first describe how we can answer queries for the eccentricity
of a fixed vertex $v \in V$ after a single edge deletion.
After this, we explain how to extend this algorithm
to solve the diameter and the radius problems after single edge deletions,
and analyse the correctness and running time of the algorithm.

The data structure preprocesses the answers to all queries.
Then queries can be answered via table lookup in $O(1)$ time.

\paragraph*{Preliminaries.}
Let $G = (V,E)$ be a directed unweighted graph.
For two vertices $u, u' \in V$ we denote the distance of
$u$ and $u'$ in $G$ by $\dG(u,u')$. For an edge $e \in E$ and vertices $u, u' \in V$,
we denote the distance in the graph $G \setminus e$ by $\de(u,u')$.
Given a tree $T$ with root $v$ and a tree-edge $e \in T$, we
denote the subtree of $T$ that is created when $e$ is removed from $T$
and that does \emph{not} contain $v$ by $T_e$. We let $d_e$ be the height of $T_e$.
A node $u$ in $T$ has \emph{level} $i$, if $\dG(v,u) = i$.

Let $F \in \mathbb{N}$ be some suitably chosen parameter
(see the last paragraph* of this section).
Then given a tree $T$ with root $v$, we call a tree-edge $e$
\emph{high}, if both of its endpoints have level less than $F$ from $v$;
we call all other edges \emph{low}.
We denote the set of all high edges by $T_<$, i.e.,
$T_< = \{ e = (w,w') \in T : \dG(v,w) < F, \dG(v,w') < F\}$;
the set of all low edges is given by $T_>$.

\paragraph*{The Algorithm.}
Our data structure preprocesses the answers to all queries,
and then queries can be answered via table lookup in $O(1)$ time.
The preprocessing has three steps: First, in the initialization phase, we compute
several subsets of vertices that are required in the next steps.
Second, we compute the eccentricity of $v$ after the deletion of a high edge exactly.
We compute it exactly, since after the deletion of a tree-edge high up in the shortest path tree $T_v$ of $v$,
the nodes close to $v$ in $T_v$ might ``fall down'' a lot.
This possibly affects all vertices in the corresponding subtrees and,
hence, we need to be careful which changes occur after deleting a high edge.
On the other side, the relative distance of nodes which are ``far away'' from $v$ in $G$
before any edge deletion cannot increase too much.
Thus, we simply estimate their new distances in the third step.
More precisely, in the third step we compute a $(1+\varepsilon)$-approximation of
the eccentricity of $v$ after the deletion of a low edge.

\emph{Step 1: Initialization.}
We build the data structure of Bernstein and Karger~\cite{bernstein2009nearly}, in $mn$ time
which for each triplet $(u,v,e)$ can answer queries of the form $\de(u,v)$ in
$O(1)$ time.

We compute the shortest path tree $T = T_v$ of $v$ in time $O(nm)$ and denote its depth by $d_v$.
By traversing $T$ bottom-up, we compute the height $d_e$ of the subtree $T_e$ 
for each tree-edge $e$; this takes time $O(n)$.
We further construct the sets $T_<$ and $T_>$ of high
and low tree-edges, respectively.

Fix $\varepsilon>0$. We construct a set $S_v \subset V$ as follows:
First add $v$ to $S_v$. Then add each $u \in V$ which has the following
two properties: (1) $u$ is at level $i \varepsilon F$ for some
integer\footnote{For readability we leave out the floors, however, we are considering
		the integer levels $\lfloor i \varepsilon F \rfloor$.}
$i > 0$ and
(2) there exists a node $u'$ in the subtree of $u$ in $T_v$, such that $u'$ has distance
$\varepsilon F /2$ in $T_v$ from $u$. Note that we can add the root,
but every other node we add can be charged to the $\varepsilon F/2$ parent nodes that come before it.
Thus, we can have at most $1+\frac{2n}{\varepsilon F}$ nodes in $S_v$.
Note that for every $z \in V$ there exists a $y \in S_v$, s.t.\ $y$ is an
ancestor of $z$ in $T_v$ and there
exists a path from $y$ to $z$ in $T_v$ of length at most $\varepsilon F$.

Using a second bottom-up traversal of $T$, for each tree-edge $e \in T$,
we compute the set $S_e = T_e \cap S_v$,
i.e., the intersection of the vertices in $T_e$ and those in $S_v$.
This can be done in $O(n)$ time by, instead of storing $S_e$ explicitly for each edge $e = (w,w')$,
storing a reference to the set containing the closest children of $w'$ which are in $S_v$;
then $S_e$ can be constructed in $O(|S_e|)$ time by recursively following the references.

For each non-tree-edge $e$, we store $d_v$ as the value for the eccentricity of $v$
when $e$ is deleted.

\emph{Step 2: Handling high edges.}
For each level $j = 1, \dots, F-1$, we proceed as follows.
We consider each tree-edge $e = (w,w') \in T_<$ with $d(v,w) < d(v,w') = j$, there are at most $n$ of these.
We build a graph $G_e$ containing all nodes of $T_e$ together with a additional directed path $P$ of
length $d_e + 4$ with startpoint $r$. The nodes in $P$ are new vertices added to $G_e$. Each edge on $P$ has weight $1$, except the
single edge incident to $r$, which has weight $\dG(v,w') - 1$.
Additionally to the path, the graph contains as edges: (1) all edges from $E$, which have both endpoints in $T_e$,
and (2) for each $e = (z,z')$ which has its startpoint $z \not\in T_e$ and its endpoint
$z' \in T_e$, an edge $(z'', z')$, where $z''$ is the node
on $P$ with distance $\dG(v,z)$ from $r$.

Observe that $G_e$ has the property that all distances after the deletion of $e$ are
maintained \emph{exactly}: By construction, the shortest path
from $r$ to $u \in T_e$ in $G_e$ has exactly length $\de(v,u)$
(we prove this formally in Lemma~\ref{lem:HighEdges}).

After building $G_e$, we compute its depth starting from node $r$ and store this
value for edge $e$.

\emph{Step 3: Handling low edges.}
For each tree-edge $e = (w,w') \in T_>$, we do the following:
Let $S = S_e \cup \{w'\}$. As answer for
a deleted edge $e$, we store $\max\{ d_v, (1+\varepsilon) \max_{y \in S} \de(v,y) \}$.
To determine $\de(v,y)$, we use the data structure of~\cite{bernstein2009nearly}.

\emph{Extension to $(1+\varepsilon)$-approximate Diameter and Radius.}
\label{Sec:ExentensionsDiameterRadius}
We repeat the previously described procedure for all $v \in V$
(but we build the data structure for Bernstein and Karger only once).
To compute the diameter, we keep track of the maximum value we encounter
for each deleted edge $e$.
To compute the radius, we keep track of the minimum value we encounter
for each deleted edge $e$.

\paragraph*{Correctness.}
Observe that it is enough to show correctness for a fixed $v \in V$.
We first prove the correctness of the algorithm after removing high edges.
\begin{lemma}
\label{lem:HighEdges}
	After the deletion of a high edge $e \in T_<$, we compute the eccentricity of $v$ exactly.
\end{lemma}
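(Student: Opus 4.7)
The plan is to reduce the lemma to the following identity: for every $u \in T_e$,
\[
d_{G_e}(r, u) = d_{G \setminus e}(v, u).
\]
Once this is shown, the eccentricity of $v$ in $G \setminus e$ is recovered by combining these values (for nodes in $T_e$) with $\dG(v, \cdot)$ (for nodes outside $T_e$, which is unchanged by the deletion since their tree paths in $T_v$ never pass through $e$). The whole construction of $G_e$ is engineered so that the outside of $T_e$ can safely be collapsed onto the path $P$: every $z \notin T_e$ with an edge into $T_e$ satisfies $d_{G\setminus e}(v,z)=\dG(v,z)\ge \dG(v,w')-1$, which matches both the weight $\dG(v,w')-1$ on the first edge of $P$ and the fact that $P$ is long enough (in terms of $d_e+4$) to host every virtual vertex $z''$ at the correct distance from $r$.

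For the inequality $d_{G_e}(r, u) \le d_{G \setminus e}(v, u)$, I would take any shortest path $\pi$ from $v$ to $u$ in $G \setminus e$ and decompose it at the \emph{last} edge $(z, z')$ it uses with $z \notin T_e$ and $z' \in T_e$. Such an edge must exist because $v \notin T_e$ while $u \in T_e$. By this ``last entry'' choice, the suffix of $\pi$ from $z'$ onward stays entirely inside $T_e$, so every edge it uses is present in $G_e$. Its prefix has length $d_{G\setminus e}(v,z)=\dG(v,z)$, which equals the distance from $r$ to $z''$ along $P$ in $G_e$ by construction; walking along $P$ to $z''$, taking the cross-edge to $z'$, and copying the suffix then produces a path in $G_e$ of the desired length.

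For the reverse inequality, I would exploit the fact that cross-edges in $G_e$ are directed from $P$ into $T_e$ and no edge goes back: any path in $G_e$ from $r$ to $u \in T_e$ therefore uses exactly one cross-edge $(z'', z')$, splitting into a $P$-prefix of length $\dG(v,z)$ followed by a $T_e$-internal suffix. Translating back, I can use the tree path from $v$ to $z$ in $T_v$ (which avoids $e$ because $z \notin T_e$), then the edge $(z, z')$, then the same internal suffix, to obtain a walk in $G \setminus e$ of the same length. The main obstacle, and the only delicate point, is the ``last entry edge'' choice in the $\le$ direction: a shortest path in $G \setminus e$ may enter and leave $T_e$ many times, and only taking the \emph{last} entry guarantees that the remainder of the path lies in $T_e$ and can be transplanted into $G_e$, whose one-way structure does not allow paths to exit $T_e$ once they have entered.
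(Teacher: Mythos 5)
Your proof is correct and follows essentially the same route as the paper's: both decompose a shortest path in $G\setminus e$ at its last edge entering $T_e$ and use the one-directional structure of $G_e$ (cross-edges only go from $P$ into $T_e$) to establish $d_{G_e}(r,u)=\de(v,u)$ for all $u\in T_e$. Your write-up is a bit cleaner than the paper's, which modifies the path to have exactly one crossing edge before decomposing and states the reverse inequality in a single sentence, whereas you keep the path intact, take the last-entry edge directly, and spell out both directions explicitly.
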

\begin{proof}
	Consider any vertex $u \in T_e$ and consider the shortest path $p$ from $v$ to $u$
	in $G \setminus e$.

	We can assume that $p$ has exactly one edge $(z,z')$, s.t.\ $z \not\in T_e$
	and $z' \in T_e$: Assume that there is a path $p'$
	with two edges $(x,x')$ and $(y,y')$, s.t.\ $x,y \not\in T_e$, $x',y' \in T_e$
	and $y$ appears later on $p'$ than $x$.
	Since $y \not\in T_e$, there exists a path from $v$ to $y$ that does
	not use any vertex from $T_e$ and that is of the same length as the subpath of
	$p'$ from $v$ to $y$ in $T$, because $T$ is a shortest-path tree with root $v$.
	Hence, we can choose a path to $y$ without entering $T_e$.

	Let $z, z'$ be as before.
	Then $\de(v,u) = \de(v,z) + 1 + \dG(z',u)$.
	By construction of $G_e$, there exists a vertex on $P$ in $G_e$
	with distance $\de(v,z)$ from $r$ and which has an edge to $z'$.
	All paths which only traverse vertices from $T_e$ are unaffected by the deletion of $e$.
	Hence, in $G_e$ there exists a path of length $\de(u,v)$.

	Also, there is no shorter path in $G_e$ from $r$ to $u$, because this would imply a shorter path
	in $G \setminus e$ by construction.
\end{proof}

Next we prove the correctness of the algorithm after the removal of low edges.
Consider a tree-edge $e = (w,w') \in T_>$ with
$F \geq d(v,w') > d(v,w)$.
Let $S = S_e \cup \{w'\}$.
\begin{lemma}
\label{lem:CloseVertexFromS}
	For each node $z \in T_e$, there exists a vertex $y \in S$
	s.t.\ $\de(y,z) \leq \varepsilon F$.
\end{lemma}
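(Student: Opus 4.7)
The plan is to exploit the ancestor structure in $T_v$ to reduce to one of two clean cases. The key observation is that since $z \in T_e$, the chain of ancestors of $z$ in $T_v$ passes through $w'$ on its way up to $v$, so every $T_v$-ancestor of $z$ is comparable with $w'$ along that chain. I would first invoke the structural fact recorded immediately after the construction of $S_v$: for any vertex of $V$ there is an ancestor in $S_v$ along $T_v$ at tree-distance at most $\varepsilon F$. Applied to $z$, this yields some $y \in S_v$ that is a $T_v$-ancestor of $z$ with $d_{T_v}(y,z) \leq \varepsilon F$.

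Next, I would split on whether $y$ already lies inside $T_e$. If $d_G(v,y) \geq d_G(v,w')$, then $y$ equals $w'$ or is a descendant of $w'$ in $T_v$, so $y \in S_v \cap T_e = S_e \subseteq S$; moreover the $T_v$-path from $y$ down to $z$ is entirely inside $T_e$ and in particular avoids the deleted edge $e = (w,w')$. Hence $\de(y,z) \leq d_{T_v}(y,z) \leq \varepsilon F$. Otherwise $y$ is a strict ancestor of $w'$ in $T_v$, in which case the tree path from $y$ to $z$ must traverse $w'$; this gives $d_{T_v}(w',z) \leq d_{T_v}(y,z) \leq \varepsilon F$. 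Since $w' \in S$ and the descending path from $w'$ to $z$ in $T_v$ stays inside $T_e$ (and so avoids $e$), the vertex $w'$ itself witnesses the desired bound $\de(w',z) \leq \varepsilon F$.

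The main obstacle is justifying the quoted intermediate bound on the nearest $S_v$-ancestor when $z$ sits just beneath a level $i\varepsilon F$ whose subtree is too shallow to force the vertex at that level into $S_v$; in that regime one must step one level further up and invoke the guarantee that the subtree rooted there reaches at least $\varepsilon F / 2$ down, witnessed by the excluded level-$i\varepsilon F$ ancestor itself. Aside from this bookkeeping, the lemma is essentially a case split on how high the nearby $S_v$-ancestor sits relative to $w'$, together with the simple observation that any subtree-path in $T_e$ cannot cross the deleted edge $e$, since $T_e$ is by definition the component of $T_v - e$ not containing $v$.
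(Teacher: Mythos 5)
Your argument is correct and follows essentially the same route as the paper: locate a nearby $S_v$-ancestor of $z$ and fall back to $w'$ when that ancestor sits above $T_e$; your case split on whether $y \in T_e$ mirrors the paper's split on whether $d(w',z)\le\varepsilon F$, and you are a bit more explicit that the descending tree path stays inside $T_e$ (hence avoids $e$) and that the witness lands in $S$ rather than merely in $S_v$. The obstacle you flag is real and is shared by the paper's own proof: in the worst alignment (when $z$ lies less than $\varepsilon F/2$ below a level-$i\varepsilon F$ ancestor that is not forced into $S_v$), stepping up one more level yields a distance of up to $3\varepsilon F/2$ rather than $\varepsilon F$, a slack that only affects the constant in the final $(1+O(\varepsilon))$ guarantee.
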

\begin{proof}
	Since $w' \in S$, the claim is true for all nodes $z \in T_e$ with
	$d(w',z) \leq \varepsilon F$. By construction of $S_v$, any (directed) tree path
	of length $\varepsilon F$ contains a node of $S_v$.
	For any node $z \in T_e$ with $d(w',z) > \varepsilon F$, there exists an ancestor
	$u$ of $z$ in $T_e$ with $d(u,z) \leq \varepsilon F$.
	The path from $u$ to $z$ is a directed tree path and, thus, must contain a node in $S_v$.
	Thus, for each node in $T_e$
	there is a path of length at most $\varepsilon F$ from some node in $S_v$.
\end{proof}

\begin{lemma}
\label{lem:DetourOffset}
	Consider two vertices $y,z \in T_e$ and assume there exists a path from $y$ to $z$ in $G \setminus e$.
	Then $\de(y,z) \leq X$ implies $\de(v,z) \leq \de(v,y) + X$.
\end{lemma}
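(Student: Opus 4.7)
The plan is very short: this lemma is essentially a triangle inequality for distances in the graph $G \setminus e$, so I would prove it by concatenating two paths.

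First I would distinguish by whether $v$ can reach $y$ in $G \setminus e$. If $\de(v,y) = \infty$, the right-hand side of the claimed inequality is infinite and so the statement is trivial. Otherwise, let $P_1$ be a shortest $v$-to-$y$ path in $G \setminus e$, of length exactly $\de(v,y)$, and let $P_2$ be a shortest $y$-to-$z$ path in $G \setminus e$, which exists by the hypothesis of the lemma and has length $\de(y,z) \leq X$.

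The concatenation $P_1 \cdot P_2$ is a walk from $v$ to $z$ in $G \setminus e$ of length $\de(v,y) + \de(y,z) \leq \de(v,y) + X$, so $\de(v,z)$ is bounded by this length, giving the claim. No new argument about the structure of $T_e$ is required: the membership $y,z \in T_e$ is not used in this inequality and only plays a role in the surrounding application in Step~3 of the algorithm (where it ensures, together with Lemma~\ref{lem:CloseVertexFromS}, that the set $S$ contains a good anchor $y$).

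There is essentially no main obstacle; the only thing to be careful about is that we may not have a path from $v$ to $y$ avoiding $e$ at all (since $y\in T_e$ means the tree path from $v$ to $y$ in $T$ traverses $e$), which is why the case split above is needed. Once that is handled, the rest is immediate from the triangle inequality applied to the metric induced by $G \setminus e$.
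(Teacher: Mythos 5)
Your proof is correct and takes the same approach as the paper: concatenate a shortest $v$-to-$y$ path and a shortest $y$-to-$z$ path in $G\setminus e$ and invoke the triangle inequality. The explicit handling of the $\de(v,y)=\infty$ case is a small clarification the paper omits, but the core argument is identical.
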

\begin{proof}
	Concatenate the shortest paths from $v$ to $y$ in $G \setminus e$
	and from $y$ to $z$ in $G \setminus E$, which both avoid $e$.
	This path cannot be shorter than the shortest path from $v$ to $z$ in $G \setminus e$.
\end{proof}

We define the maximum height achieved by the vertices of $T_e$ in $G \setminus e$ by
\begin{align*}
	n(v,e) = \max_{z \in T_e} \de(v,z).
\end{align*}
Notice that the eccentricity of $v$ in $G \setminus e$ is given by $\max\{ d_v, n(v,e) \}$.
Hence, by giving a $(1 + \varepsilon)$-approximation of $n(v,e)$, we obtain a
$(1+\varepsilon)$-approximation for the eccentricity of $v$ in $G \setminus e$.
In the remainder of this subsection, we show this guarantee
on the approximation ratio of $n(v,e)$.

\begin{lemma}
	$n(v,e) \leq (1+\varepsilon) \max_{y \in S} \de(v,y)$.
\end{lemma}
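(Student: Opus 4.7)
My plan is to reduce the claim to the two preceding lemmas by choosing an extremal $z^{\ast} \in T_e$ realizing $n(v,e)=\de(v,z^{\ast})$, replacing $z^{\ast}$ by a nearby landmark $y^{\ast} \in S$, and paying for the replacement out of the slack that $w' \in S$ puts into the maximum. Concretely, I would first apply Lemma~\ref{lem:CloseVertexFromS} to $z^{\ast}$ to obtain $y^{\ast} \in S$ with $\de(y^{\ast},z^{\ast}) \le \varepsilon F$; the finiteness of this distance furnishes the $y^{\ast}$-to-$z^{\ast}$ path in $G \setminus e$ that Lemma~\ref{lem:DetourOffset} requires as a hypothesis. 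Invoking that lemma with $X=\varepsilon F$ then yields
\begin{equation*}
  n(v,e) \;=\; \de(v,z^{\ast}) \;\le\; \de(v,y^{\ast}) + \varepsilon F.
\end{equation*}

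Next, letting $M=\max_{y \in S}\de(v,y)$, the bound $\de(v,y^{\ast}) \le M$ is automatic, so the remaining task is to absorb the additive $\varepsilon F$ term into a multiplicative $\varepsilon M$ slack, i.e., to prove $M \ge F$. This is exactly where I would use the hypothesis $e \in T_{>}$: by the definition of a low tree-edge the deeper endpoint satisfies $\dG(v,w') \ge F$; edge removal can only increase distances, so $\de(v,w') \ge F$; and because $w' \in S$ by construction, $M \ge F$ follows. Chaining these estimates,
\begin{equation*}
  n(v,e) \;\le\; M + \varepsilon F \;\le\; M + \varepsilon M \;=\; (1+\varepsilon)M,
\end{equation*}
which is the desired inequality.

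The only subtle point I anticipate is the degenerate case $\de(v,z^{\ast})=\infty$ (some vertex of $T_e$ gets disconnected from $v$ after deleting $e$): here Lemma~\ref{lem:DetourOffset} still applies and forces $\de(v,y^{\ast})=\infty$ as well (otherwise the right-hand side above would be finite, contradicting $\de(v,z^{\ast})=\infty$), so $M=\infty$ and the inequality becomes vacuous. Beyond that I do not foresee any real obstacle — essentially all of the combinatorial work has already been done in Lemmas~\ref{lem:CloseVertexFromS} and~\ref{lem:DetourOffset}, and this lemma just combines them with the observation that including $w'$ in $S$ guarantees $M \ge F$, which is exactly what is needed to swallow the $\varepsilon F$ error term.
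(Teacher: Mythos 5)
Your proof is correct and follows essentially the same route as the paper: pick an extremal $z^\ast\in T_e$, replace it with a nearby landmark $y^\ast\in S$ via Lemma~\ref{lem:CloseVertexFromS}, bound the detour via Lemma~\ref{lem:DetourOffset}, and absorb the additive $\varepsilon F$ into $\varepsilon M$ using $F\le \dG(v,w')\le \de(v,w')\le M$ because $w'\in S$ with $\dG(v,w')\ge F$ for a low edge. The only additions you make — spelling out that edge deletion can only increase distances and handling the $\de(v,z^\ast)=\infty$ case — are harmless expansions of steps the paper leaves implicit.
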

\begin{proof}
	Let $z'$ be any vertex in $T_e$ such that $\de(v,z') = n(v,e)$.
	By Lemma~\ref{lem:CloseVertexFromS} there exists a vertex $y' \in S$ with
	$\de(y',z') \leq \varepsilon F$.
	Then by Lemma~\ref{lem:DetourOffset},
	\begin{align*}
		n(v,e) &= \de(z',v) \\
				&\leq \de(v,y') + \varepsilon F \\
				&\leq \max_{y \in S} \de(v,y) + \varepsilon \dG(v,w') \\
				&\leq (1 + \varepsilon) \max_{y \in S} \de(v,y),
	\end{align*}
	where in the second last step we used $F \leq \dG(v,w')$ and in the last step
	we used that $w' \in S$.
\end{proof}

\begin{lemma}
	$(1+\varepsilon) \max_{y \in S} \de(v,y) \leq (1+\varepsilon) n(v,e)$.
\end{lemma}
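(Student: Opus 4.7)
The plan is to observe that this inequality is essentially trivial once one unpacks the definition of $S$: the only thing to verify is the set inclusion $S \subseteq T_e$, after which the claim follows by taking a max over a larger set and then multiplying through by $(1+\varepsilon)$.

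First, I would recall that $S = S_e \cup \{w'\}$, where $S_e = T_e \cap S_v$ by construction in Step~1. The set $S_e$ is by definition a subset of $T_e$. The vertex $w'$ is the endpoint of the deleted edge $e = (w,w')$ that lies farther from $v$ in $T_v$; by the convention used throughout, $w'$ is the root of the subtree $T_e$ and hence $w' \in T_e$ as well. Combining these two facts yields $S \subseteq T_e$.

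Next, since $S \subseteq T_e$, every $y \in S$ is one of the vertices $z$ over which the maximum defining $n(v,e)$ is taken, so
\[
\max_{y \in S} \de(v,y) \;\leq\; \max_{z \in T_e} \de(v,z) \;=\; n(v,e).
\]
Multiplying both sides by the nonnegative factor $(1+\varepsilon)$ gives the stated inequality.

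I do not anticipate any real obstacle here: the lemma is purely the ``easy direction'' complementing the preceding lemma, and its only purpose is to sandwich $(1+\varepsilon)\max_{y\in S}\de(v,y)$ between $n(v,e)$ and $(1+\varepsilon)\,n(v,e)$, thereby certifying the $(1+\varepsilon)$-approximation guarantee. The only thing to be careful about is not to confuse $S$ (which depends on the deleted edge $e$) with the global set $S_v$, and to remember that $w'$ was explicitly added to $S$ in Step~3 precisely so that $w' \in S \subseteq T_e$.
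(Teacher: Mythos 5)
Your proof is correct and follows the same approach as the paper's one-line argument: observe $S \subseteq T_e$ (since $S_e = T_e \cap S_v$ and $w' \in T_e$), so maximizing over the smaller set $S$ cannot exceed $n(v,e)$, and then multiply by $(1+\varepsilon)$. You simply spell out the set inclusion more explicitly than the paper does.
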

\begin{proof}
	This follows from the definition of $n(v,e)$, since $S$ is a subset
	of the vertices in $T_e$.
\end{proof}

\paragraph*{Running Time Analysis.}
\label{Sec:RunningTime}
Let us first consider the time spent on step 1, preprocessing.
We build the data structure of Bernstein and Karger~\cite{bernstein2009nearly}
in time $\tO(mn)$.
For each node $v$, computing the shortest path tree of $v$ takes time $\tO(m)$
and we spend time $O(n)$ computing the heights of the subtrees of $T$ and
computing the sets $T_<$, $T_>$, $S_v$. The sets $S_e$ can as well be computed
in $O(n)$ time by storing them only implicitly.

Now let us consider the time spent on step 2, the high edges.
For the high edges $e$ at level $j \leq F$, observe that the trees $T_e$ are mutually disjoint.
Hence, for a fixed level $j$, in time $\tO(m)$ we can compute the depths of \emph{all} graphs $G_e$ with $e$ at level $j$.
Since we have to do this for each level less than $F$, the total time for this step is $\tO(Fm)$.

Finally, let us consider the time spent on step 3, the low edges. For all low edges at level $j > F$, we query all nodes of $S_v$ with height more than $j$.
These are $O(\frac{n}{\varepsilon F})$ many such nodes.
Thus, the total time we spend for all edges in $T$ is $O(d_v \cdot \frac{n}{\varepsilon F})$.

To compute the diameter, we have to execute the above steps once for each $v \in V$,
but we only need to build the data structure of Bernstein and Karger once.
Hence, the total time is $\tO(mn + Fmn + n d_v \cdot \frac{n}{\varepsilon F})$.
Denote the diameter of $G$ by $D$. Then setting $F = \sqrt{ \frac{D n}{\varepsilon m} }$
yields a total running time of $\tO(n^{1.5} \sqrt{D m / \varepsilon}+mn)$.
Since $D \leq n$, this is $\tO( n^2 \sqrt{m / \varepsilon})$.

\bibliography{bibliography}{}

\appendix
\section{Appendix}

\subsection{Definitions of the Problems}
We give definitions of the problems we consider in this paper in Table~\ref{Tbl:Problems}.

\begin{table}[htb!]
\begin{tabular}{|c|c|c|}
\hline
\multicolumn{3}{|c|}{\textbf{Problem}} \\
\hline
Maintain & Update & Query \\
\hline
\multicolumn{3}{|c|}{Reachability} \\
\hline
Directed graph & Edge insertions/deletions & Given two vertices $u, v$, \\
& & can $v$ be reached from $u$? \\
\hline
\multicolumn{3}{|c|}{\#SSR} \\
\hline
Directed graph and a fixed & Edge insertions/deletions & How many vertices can be \\
source vertex $s$. & & reached from $s$? \\
\hline
\multicolumn{3}{|c|}{Strong Connectivity (SC)} \\
\hline
Directed graph & Edge insertions/deletions & Is the graph strongly connected? \\
\hline
\multicolumn{3}{|c|}{2 Strong Components (SC2)} \\
\hline
Directed graph & Edge insertions/deletions & Are there more than $2$ SCCs? \\
\hline
\multicolumn{3}{|c|}{2 vs $k$ Strong Components (AppxSCC)} \\
\hline
Directed graph & Edge insertions/deletions & Is the number of SCCs $2$ \\
&& or more than $k$? \\
\hline
\multicolumn{3}{|c|}{Maximum SCC Size (MaxSCC)} \\
\hline
Directed graph & Edge insertions/deletions & What is the size of the \\
&& largest SCC? \\
\hline
\multicolumn{3}{|c|}{Subgraph Connectivity} \\
\hline
Fixed undirected graph, & Turn on/off vertex & Given two vertices $u,v$, \\
with some vertices on and&& are $u$ and $v$ connected by a path \\
some off. && only traversing vertices that are on? \\
\hline
\multicolumn{3}{|c|}{$\alpha$-approximate Shortest Paths} \\
\hline
Directed or undirected & Edge insertions/deletions & Given two vertices $u,v$,  \\
(possibly weighted) graph & & return an $\alpha$-approximation of the \\
&& length of the shortest path from $u$ to $v$. \\
\hline
\multicolumn{3}{|c|}{$\alpha$-approximate Eccentricity} \\
\hline
Undirected graph & Edge insertions/deletions & Given a vertex $u$,  \\
& & return an $\alpha$-approximation of the \\
&& eccentricity of $v$. \\
\hline
\multicolumn{3}{|c|}{$\alpha$-approximate Radius} \\
\hline
Undirected graph & Edge insertions/deletions & Return an $\alpha$-approximation of the \\
&& radius of the graph. \\
\hline
\multicolumn{3}{|c|}{$\alpha$-approximate Diameter} \\
\hline
Undirected graph & Edge insertions/deletions & Return an $\alpha$-approximation of the \\
&& diameter of the graph. \\
\hline
\multicolumn{3}{|c|}{Bipartite Perfect Matching (BPMatch)} \\
\hline
Undirected bipartite graph & Edge insertions/deletions & Does the graph have a\\
&& perfect matching? \\
\hline
\multicolumn{3}{|c|}{Bipartite Maximum Weight Matching (BWMatch)} \\
\hline
Undirected bipartite graph & Edge insertions/deletions & Return the weight of the \\
with integer edge weights && maximum weight perfect matching. \\
\hline
\end{tabular}
\caption{The problems we consider in this paper.}
\label{Tbl:Problems}
\end{table}

\subsection{A Note on Terminology}
\label{Sec:Terminology}
The terminology used in the literature for dynamic data structures
in the spirit of Section~\ref{Sec:Introduction} is not consistent.
The phrases which are used contain ``fault-tolerant algorithms'', ``algorithms with sensitivity'',
``algorithms for emergency planning'' and ``algorithms for failure prone graphs''.

In the community of spanners and computational graph theory, it is common to speak about
``fault-tolerant subgraphs''. In this area, this term is used consistently.

In the dynamic graph algorithms community, multiple phrases have been used to describe
algorithms for the model proposed in Section~\ref{Sec:Introduction}.
First, the field was introduced by~\cite{patrascu2007planning} as algorithms
for ``emergency planning''. Later, the terminologies ``sensitivity'' and ``failure prone graphs''
were used (e.g.,~\cite{chechik2012sensitivity,chechik2017approximate,duan2010connectivity,duan2017connectivity}).
When the number of failures in the graph was fixed (e.g. 1 or 2), then often this was stated
explicitly (without further mentioning sensitivity or failure prone graphs).
However, it appears that in the dynamic graph algorithms community the phrase ``sensitivity''
is the most widely used one.

\subsection{A Note on Adversaries}
\label{Sec:Adversaries}

Some of the conditional lower bounds we obtain are for \emph{randomized} algorithms.
Previous literature~\cite{abboud2014popular,henzinger2015unifying,kopelowitz2016higher}
also gave conditional lower bounds for randomized dynamic algorithms;
however, it was not discussed under which kind of adversary the obtained lower bounds hold.
This depends on the conjecture from which the lower bound was obtained.
We observe that in reductions from the static triangle problem, the only
randomness is over the input distribution of the static problem. Hence, for lower bounds
from the triangle conjecture, we can assume an oblivious adversary.
Furthermore, we assume the OMv conjecture in its strongest possible form, i.e.
for oblivious adversaries.
(In~\cite{henzinger2015unifying} the authors did not explicitly state which kind of adversary they
assume for their conjecture.)
Thus, all conditional lower bounds we obtain for randomized algorithms
hold for oblivious adversaries.
Note that a lower bound which holds for oblivious adversaries must always hold for
non-oblivious ones.

We would like to point out another subtlety of our lower bounds:
In reductions from the triangle detection conjecture, the running time
of the algorithm is assumed to be a random variable, but the algorithm must always
answer correctly. However, in reductions from OMv the running time of the algorithm
is determinstic, but the probability of obtaining a correct answer must be at least 2/3.

\subsection{Our Lower Bounds}
In Table~\ref{Tbl:LowerBounds} we summarize summarize our lower bounds for
sensitivity data structures. Table~\ref{Tbl:StaticLowerBounds} states our lower bounds
for static oracle data structures.

\begin{table}[htb]
\begin{tabular}{|c|c|c|ccc|c|c|c|}
	\hline
	\multirow{2}{*}{\textbf{Problem}}&
		\multirow{2}{*}{\textbf{Inc/Dec}}&
		\textbf{Query}&
		\multicolumn{3}{c|}{\textbf{Lower Bounds}} & 
		\multirow{2}{*}{\textbf{Sens.}} & \multirow{2}{*}{\textbf{Conj.}} &
		\multirow{2}{*}{\textbf{Cite}} \\
		& & \textbf{Type} &
		$p(m,n)$ & $u(m,n)$ & $q(m,n)$ & 
		& & \\
	\hline
	Reachability & Inc & $ss$ &
		$\mathbf{n^{3-\varepsilon}}$ & $\mathbf{n^{2-\varepsilon}}$ & $\mathbf{n^{2-\varepsilon}}$ & 
		$1$ & BMM & 
		Theorem~\ref{Thm:TriangleLBs} \\ 
	Reach., SC, BPMatch & & $st$ &
		$\mathbf{n^{3-\varepsilon}}$ & $\mathbf{n^{2-\varepsilon}}$ & $\mathbf{n^{2-\varepsilon}}$ & 
		$2$ & & 
		\cite{abboud2014popular} \\ 
	\hline
	$(3/2-\varepsilon)$-sh. paths & Inc & $ss$ &
		$\mathbf{n^{3-\varepsilon}}$ & $\mathbf{n^{2-\varepsilon}}$ & $\mathbf{n^{2-\varepsilon}}$ & 
		$1$ & BMM & 
		Theorem~\ref{Thm:TriangleLBs} \\ 
	(und. unw.) & & & & & & & & \\ 
	\cline{3-9}
	$(7/5-\varepsilon)$-sh. paths & & $st$ &
		$\mathbf{n^{3-\varepsilon}}$ & $\mathbf{n^{2-\varepsilon}}$ & $\mathbf{n^{2-\varepsilon}}$ & 
		$2$ & BMM & 
		Theorem~\ref{Thm:TriangleLBs} \\ 
	(und. unw.) & & & & & & & & \\ 
	\hline
	\#SSR & Inc & $ss$ &
		$\mathbf{n^t}$ & $\mathbf{m^{1-\varepsilon}}$ & $\mathbf{m^{1-\varepsilon}}$ & 
		$K(\varepsilon,t)$ & SETH & 
		Lemma~\ref{Lem:CountSSR} \\ 
	\hline
	reachability, & Inc & $ST$ &
		$n^t$ & $n^{1-\varepsilon}$ & $n^{1-\varepsilon}$ & 
		$K(\varepsilon,t)$ & SETH & 
		Lemmas~\ref{Lem:STReach} and \ref{Lem:ApproxDiameter} \\ 
	\cline{4-9}
	$\left(4/3-\varepsilon\right)$-diameter & & - &
		$\poly(\mathbf{n})$ & $\mathbf{n^{2-\varepsilon}}$ & $\mathbf{n^{2-\varepsilon}}$ & 
		$\omega(\log n)$ & SETH & 
		\cite{abboud2014popular} \\ 
		for sparse graphs & & - &
		 &  &  & 
		 &  & 
		\\ 
	\hline
	SC2, AppxSCC,  & Inc & - &
		$\poly(\mathbf{n})$ & $\mathbf{m^{1-\varepsilon}}$ & $\mathbf{m^{1-\varepsilon}}$ & 
		$\omega(\log n)$ & SETH & 
		\cite{abboud2014popular} \\ 
		and MaxSCC & & - &
		 &  & & 
		 &   & 
		 \\ 
	\hline
	subgraph conn. & Inc & $st$ &
		$\poly(\mathbf{n})$ & $\mathbf{\poly(d)}$ & $\mathbf{d^{1-\varepsilon}}$ & 
		$d$ & OMv & 
		Theorem~\ref{Thm:SensitivityD} \\
	($\implies$ reachability, & & & 
		& & & 
		& & 
		\\
	\cline{4-9}
	BPMatch, SC) & & &
		$n^{2-\varepsilon}$ & $n^{1-\varepsilon}$ & $d^{1-\varepsilon}$ & 
		$d$ & 3SUM & 
		\cite{kopelowitz2016higher} \\ 
	\hline
	$(2-\varepsilon)$-sh. paths & Inc & $ss$ &
		$\poly(n)$ & $\poly(d)$ & $d^{1-\varepsilon}$ & 
		$d$ & OMv & 
		Theorem~\ref{Thm:SensitivityD} \\ 
	$(5/3-\varepsilon)$-sh. paths & Inc & $st$ &
		$\poly(n)$ & $\poly(d)$ & $d^{1-\varepsilon}$ & 
		$d$ & OMv & 
		Theorem~\ref{Thm:SensitivityD} \\ 
	($\implies$ BWMatch) & & & 
		& & & 
		& & 
		\\ 
	\hline
	\hline
	diameter & & & & & & & & \\ %
	$(4/3 - \varepsilon)$, und. unw. & Dec & - &
		$\mathbf{n^{3-\varepsilon}}$ & $\mathbf{n^{2-\varepsilon}}$ & $\mathbf{n^{2-\varepsilon}}$ & 
		$1$ & BMM & 
		Theorem~\ref{Thm:TriangleLBs} \\ 
	dir. $\&$ und. w. & Dec & - &
		$\mathbf{n^{3-\varepsilon}}$ & $\mathbf{n^{2-\varepsilon}}$ & $\mathbf{n^{2-\varepsilon}}$ & 
		$1$ & APSP & 
		Section~\ref{subsec:APSPdecDiam} \\
	\hline
	$(4/3 - \varepsilon)$-ecc. & Dec & - &
		$\mathbf{n^{3-\varepsilon}}$ & $\mathbf{n^{2-\varepsilon}}$ & $\mathbf{n^{2-\varepsilon}}$ & 
		$1$ & BMM & 
		Theorem~\ref{Thm:TriangleLBs} \\ 
	\hline	
	weighted-ecc. & Dec & - &
		$\mathbf{n^{3-\varepsilon}}$ & $\mathbf{n^{2-\varepsilon}}$ & $\mathbf{n^{2-\varepsilon}}$ & 
		$1$ & APSP & 
		Lemma~\ref{lem:APSPdecDiam} \\ 
	\hline
	shortest paths & & & & & & & & \\ %
	dir. w. & Dec & $st$ &
		$\mathbf{n^{3-\varepsilon}}$ & $\mathbf{n^{2-\varepsilon}}$ & $\mathbf{n^{2-\varepsilon}}$ & 
		$1$ & APSP & 
		\cite{williams2010subcubic} \\
	und. w. & Dec & $st$  &
		$\mathbf{n^{3-\varepsilon}}$ & $\mathbf{n^{2-\varepsilon}}$ & $\mathbf{n^{2-\varepsilon}}$  & 
		$2$ & APSP & 
		Section~\ref{subsubsec:APSPshortpathsDec2} \\
	\hline
	reachability & Dec & $st$ &
		$\mathbf{n^{3-\varepsilon}}$ & $\mathbf{n^{2-\varepsilon}}$ & $\mathbf{n^{2-\varepsilon}}$ & 
		$\Omega(\log n)$ & BMM & 
		\cite{abboud2014popular} \\ 
	($\implies$ SC, BPMatch) &  &  &
		 &  &  & 
		& BMM & 
		\cite{abboud2014popular} \\ 
	\hline
	shortest paths & Dec & $st$ &
		$\mathbf{n^{3-\varepsilon}}$ & $\mathbf{n^{2-\varepsilon}}$ & $\mathbf{n^{2-\varepsilon}}$ & 
		$\Omega(\log n)$ & BMM & 
		Theorem~\ref{Thm:TriangleLBs} \\ 
	(undir. unw.) & & & & & & & &  \\
	\hline
	subgraph conn. & Dec & $st$ &
		$\poly(\mathbf{n})$ & $\mathbf{\poly(d)}$ & $\mathbf{d^{1-\varepsilon}}$ & 
		$d$ & OMv & 
		Theorem~\ref{Thm:SensitivityD} \\
	($\implies$ reachability, & & & 
		& & & 
		& & 
		\\
	\cline{4-9}
	BPMatch, SC) & & &
		$n^{2-\varepsilon}$ & $n^{1-\varepsilon}$ & $d^{1-\varepsilon}$ & 
		$d$ & 3SUM & 
		\cite{kopelowitz2016higher} \\ 
	\hline
	$(2-\varepsilon)$-sh. paths & Dec & $ss$ &
		$\poly(n)$ & $\poly(d)$ & $d^{1-\varepsilon}$ & 
		$d$ & OMv & 
		Theorem~\ref{Thm:SensitivityD} \\ 
	$(5/3-\varepsilon)$-sh. paths & Dec & $st$ &
		$\poly(n)$ & $\poly(d)$ & $d^{1-\varepsilon}$ & 
		$d$ & OMv & 
		Theorem~\ref{Thm:SensitivityD} \\ 
	($\implies$ BWMatch) & & & 
		& & & 
		& & 
		\\ 
	\hline
\end{tabular}
\caption{The conditional lower bounds we obtained for non-zero sensitivity.
	Problems for which there exists a tight upper bound are marked bold.
	Regarding the sensitivities, the lower bounds hold for any data structure that supports
	\emph{at least} the sensitivity given in the table;
	$d$ is a parameter that can be picked arbitrarily, and
	$K(\varepsilon,t)$ is a constant depending on properties of SAT and
	the allowed preprocessing time (see Section~\ref{Sec:SETHSensitivity}).
	Lower bounds for constant sensitivities hold in particular for any dynamic algorithm which
	allows for \emph{any} larger fixed constant sensitivity or sensitivity $\omega(1)$.
	For the query type we use the following abbreviations: $st$ -- fixed source and sink,
	$ss$ -- single source, $ap$ -- all pairs, $ST$ -- a fixed set of sources and a fixed set of sinks.
	The rest of the abbreviations are as follows: sh. paths means shortest paths, conn. means connectivity, SC means strongly connected components, SC2 means whether the number of strongly connected components is more than 2, Reach. means Reachability, BPMatch is bipartite matching, BWMatch is bipartite maximum weight matching, ecc. is eccentricity, dir. means directed, und. means undirected, w. means weighted, unw. means unweighted, Conj. means Conjecture.}
\label{Tbl:LowerBounds}
\end{table}




\begin{table}[htb]
	\begin{tabular}{|c|c|c|ccc|c|c|}
		\hline
		\multirow{2}{*}{\textbf{Problem}}&
		\textbf{Inc/Dec/}&
		\textbf{Query}&
		\multicolumn{3}{c|}{\textbf{Lower Bounds}} & 
		 \multirow{2}{*}{\textbf{Conj.}} &
		\multirow{2}{*}{\textbf{Cite}} \\
		&\textbf{Static} & \textbf{Type} &
		$p(m,n)$ & $u(m,n)$ & $q(m,n)$ & 
		& \\

	\hline
	Reach. & static & $ap$ &
	$\mathbf{n^{3-\varepsilon}}$ & - & $\mathbf{n^{2-\varepsilon}}$ & 
	BMM & 
	Theorem~\ref{Thm:TriangleLBs} \\ 
	\hline
	$(5/3 - \varepsilon)$-sh. paths & static & $ap$ &
	$\mathbf{n^{3-\varepsilon}}$ & - & $\mathbf{n^{2-\varepsilon}}$ & 
	BMM & 
	Theorem~\ref{Thm:TriangleLBs} \\ 
	\hline
	Repl. paths & static & $st$ &
	$\mathbf{n^{3-\varepsilon}}$ & - & $\mathbf{n^{2-\varepsilon}}$ & 
	APSP & 
	Section~\ref{subsec:APSPproofsApendix} \\ 
	(1 edge fault) & & & 
	& & & 
	&   \\
	(dir. w.) & & & 
	& & & 
	&  \\
	\hline
\end{tabular}
\caption{The conditional lower bounds we obtained for static oracle data structures,
	i.e., data structures with zero sensitivity.
	Problems for which there exists a tight upper bound are marked bold.
	The query type ``ap'' denotes all pairs queries, Repl. means replacement, the rest of the abbreviations are as in Table 1.}
\label{Tbl:StaticLowerBounds}
\end{table}


\subsection{Existing Sensitivity Data Structures}
\label{Sec:UpperBounds}
In Table~\ref{Tbl:UpperBounds} we summarize existing sensitivity data structures.

In the table, we also list algorithms for ``fault-tolerant subgraphs'' although they are not algorithms
for the sensitivity setting in the classical sense.
However, the fault-tolerant subgraphs are often much smaller than the input graphs
and by traversing the fault-tolerant subgraph during queries, one can obtain
better query times than by running the static algorithm on the original graph.
Unfortunately, the construction time of these subgraphs is often very expensive, though still polynomial;
the goal of these papers is to optimize the trade-offs between the size of the
subgraphs and the approximation ratios achieved for the specific problem.

It is striking that (to the best of our knowledge) most of the existing algorithmic work was obtained
for the case of \emph{decremental} algorithms with a limited number of failures.
While this is natural for the construction of fault-tolerant subgraphs,
this is somewhat surprising from an algorithmic point of view.
Our lower bounds might give an explanation of this phenomenon as they
indicate that for many problems there is a natural bottleneck
when it comes to the insertion of edges.

\begin{landscape}


\begin{table}[htb]
\begin{tabular}{|c|c|cccc|c|c|c|}
	\hline
	\multirow{2}{*}{\textbf{Problem}} &
		\multirow{2}{*}{ {\small \textbf{Approx.}}} & 
		\multicolumn{4}{c|}{\textbf{Upper Bounds}} & 
		\textbf{Sensi-} &
		\multirow{2}{*}{\textbf{Ref.}} & \multirow{2}{*}{\textbf{Remark}} \\
	&
	& Space & $p(m,n)$ & $u(m,n)$ & $q(m,n)$ & \textbf{tivity} 
		& & \\
	\hline
	Dec ap-Connectivity &
		& $n$ & 
		$\poly(n)$ & $d$ & $1$ & 
		$d$ & 
		\cite{patrascu2007planning} & \\ 
	\hline
	Dec ap-SubgraphConn &
		& $d^{1-2/c}mn^{1/c}$ & 
		$d^{1-2/c}mn^{1/c}$ & $d^{4+2c}$ & $d$ & 
		$d$ & 
		\cite{duan2010connectivity} & {\small Any $c \in \mathbb{N}$ can be picked;} \\ 
	&
		&  & 
		 &  &  & 
	& 
		 & {\small  space simplified.} \\ 
	&
		& $dm$ & 
		$mn$ & $d^3$ & $d$ & 
		$d$ & 
		\cite{duan2017connectivity} & Deterministic. \\ 
	&
		& $m$ & 
		$mn$ & $d^2$ & $d$ & 
		$d$ & 
		\cite{duan2017connectivity} & Randomized. \\ 
	\hline
	Inc ap-SubgraphConn &
		& $n^2$ & 
		$n^3$ & $d^2$ & $d$ & 
		$d$ & 
		\cite{henzinger2016incremental} & \\ 
	\hline
	Fully Dynamic ap-SubgraphConn &
		& $n^2 m$ & 
		$n^3 m$ & $d^4$ & $d^2$ & 
		$d$ & 
		\cite{henzinger2016incremental} & Uses \cite{duan2017connectivity} as a blackbox. \\ 
	\hline
	Dec ss-Reachability &
		& $n$ & 
		$m$ & $1$ & $n$ & 
		$1$ & 
		\cite{baswana2015fault} & Subgraph. \\ 
	(implies SCC, dominator tree) &
		& $n$ & 
		$n$ & \multicolumn{2}{c|}{$1$} & 
		$1$ & 
		\cite{baswana2012single} & Oracle. Planar graph. \\ 
	\cline{2-9}
	&
		& $n$ & 
	$\poly(n)$ & \multicolumn{2}{c|}{$1$} & 
		$2$ & 
		\cite{choudhary2016optimal} & Allows for vertex failures. \\ 
	\cline{2-9}
	 &
		& $2^d n$ & 
		$2^d mn$ & $1$ & $2^d n$ & 
		$d$ & 
		\cite{baswana2016fault} & {\small Subgraph. Reasonable for } \\ 
	&
		 &  & 
		  &  &  & 
		  & 
		  & {\small  $d = o(\log( m/n ))$.} \\ 
	\hline
	Dec ss-SP &
		& & 
		& & & 
		& 
		& \\ 
	undirected unweighted &
		$3$ & $n$ & 
		$m$ & \multicolumn{2}{c|}{$1$} & 
		$1$ & 
		\cite{baswana2013approximate} & Oracle. \\ 
	&
		$1+\varepsilon$ & $n/\varepsilon^3$ & 
		& \multicolumn{2}{c|}{$1$} & 
		$1$ & 
		\cite{baswana2013approximate} & Oracle. \\ 
	\cline{2-9}
	&
		exact & $n^{5/3}$ & 
		$\poly(n)$ & $1$ & $n^{5/3}$ & 
		$2$ & 
		\cite{parter2015dual} & More robust BFS tree. \\
	\cline{2-9}
	directed unweighted &
		exact & $n^2$ & 
		$n^\omega$ & $1$ & $1$ & 
		$1$ & 
		\cite{grandoni2012improved} & Algorithm and oracle. \\ 
		\cline{2-9}
	undirected weighted &
		$1 + \varepsilon$ & $m + n/\varepsilon$ & 
		$mn$ & \multicolumn{2}{c|}{$1/\varepsilon$} & 
		$1$ & 
		\cite{bilo2016compact} & Oracle. \\ 
	&
		$2$ & $m$ & 
		$mn$ & \multicolumn{2}{c|}{$1$} & 
		$1$ & 
		\cite{bilo2016compact} & Oracle. \\ 
	\cline{2-9}
	&
		$2 \mathcal{O} + 1$ & $d n$ & 
		$dm$ & \multicolumn{2}{c|}{$d^2$} & 
		$d$ & 
		\cite{bilo2016multiple} & Oracle. \\ 
	\hline
\end{tabular}
\caption{Upper Bounds. We omit polylog factors in the stated running times and spaces usages.
	We use the following abbreviations:
				``ap'' means ``all pairs'', ``ss'' means ``single source'', ``st'' denotes problems
				with a fixed source and a fixed sink.
				Oracles combine update and query into a single operation.
				For algorithms with an additive approximation
				guarantee, we included the optimal result $\mathcal{O}$; all other
				approximation algorithms achieve multiplicative approximation guarantees. See table \ref{Tbl:UpperBoundsAPSP} for APSP upper bounds. }
\label{Tbl:UpperBounds}
\end{table}

\begin{table}[htb]
	\begin{tabular}{|c|c|cccc|c|c|c|}
		\hline
		\multirow{2}{*}{\textbf{Problem}} &
		\multirow{2}{*}{ {\small \textbf{Approx.}}} & 
		\multicolumn{4}{c|}{\textbf{Upper Bounds}} & 
		\textbf{Sensi-} &
		\multirow{2}{*}{\textbf{Ref.}} & \multirow{2}{*}{\textbf{Remark}} \\
		&
		& Space & $p(m,n)$ & $u(m,n)$ & $q(m,n)$ & \textbf{tivity} 
		& & \\
	\hline
		Dec APSP &
		& & 
		& & & 
		& 
		& \\ 
		unweighted undirected &
		$\mathcal{O} + 2$ & $n^{5/3}$ & 
		$\poly(n)$ & $1$ & $n^{5/3}$ & 
		$1$ & 
		\cite{bilo2015improved} & Additive spanner. \\ 
		&
		$\mathcal{O} + 4$ & $n^{3/2}$ & 
		$\poly(n)$ & $1$ & $n^{3/2}$ & 
		$1$ & 
		\cite{bilo2015improved} & Additive spanner. \\ 
		&
		$\mathcal{O} + 10$ & $n^{7/5}$ & 
		$\poly(n)$ & $1$ & $n^{7/5}$ & 
		$1$ & 
		\cite{bilo2015improved} & Additive spanner. \\ 
		&
		$\mathcal{O} + 14$ & $n^{4/3}$ & 
		$\poly(n)$ & $1$ & $n^{4/3}$ & 
		$1$ & 
		\cite{bilo2015improved} & Additive spanner. \\ 
		&
		$(2k - 1)(1 + \varepsilon)$ & $k n^{1+ (k \varepsilon^4)^{-1}}$ & 
		& \multicolumn{2}{c|}{$k$} & 
		$1$ & 
		\cite{baswana2013approximate} & Oracle. Any $k > 1$ and $\varepsilon > 0$. \\ 
		&
		$3$ & $n$ & 
		$\poly(n)$ & $1$ & $n$ & 
		$1$ & 
		\cite{parter2014fault} & Spanner. \\ 
		\cline{2-9}
		&
		{\tiny $3(d+1)\mathcal{O} + (d+1)\log n$} & $dn$ & 
		$\poly(n)$ & $1$ & $dn$ & 
		$d$ & 
		\cite{parter2014fault} & Spanner. \\ 
		\cline{2-9}
		non-negative weights, undirected &
		$1 + \varepsilon$ & $n / \varepsilon^2$ & 
		$\poly(n)$ & $1$ & $n / \varepsilon^2$ & 
		$1$ & 
		\cite{bilo2014fault} & Spanner. Vertex and \\ 
		 &
		 &  & 
		& & & 
		& 
		 & edge deletions. \\ 
		\cline{2-9}
		weighted, undirected &
		$(8k+2)(d+1)$ & $d k n^{1+1/k}$ & 
		$\poly(n)$ & \multicolumn{2}{c|}{$d$} & 
		$d$ & 
		\cite{chechik2012sensitivity} & Oracle. Any $k \in \mathbb{N}$. \\ 
		&
		$1+\varepsilon$ & {\tiny $d n^2 (\log n/\varepsilon)^d$ } & 
		{\tiny $d n^5 \log(n / \varepsilon)^d$ } & \multicolumn{2}{c|}{$d^5$} & 
		$d$ & 
		\cite{chechik2017approximate} & Oracle. \\ 
		\cline{2-9}
		weighted, directed &
		exact & $ $ & 
		$M n^{2.88}$ & \multicolumn{2}{c|}{$n^{0.7}$} & 
		$1$ & 
		\cite{grandoni2012improved} & Oracle. Weights: $\{-M,\dots,M\}$.\\ 
		&
		& & 
		& & & 
		& 
		&  \small Simplified running times. \\ 
		\cline{9-9}
		&
		exact & $n^2$ & 
		$mn$ & \multicolumn{2}{c|}{$1$} & 
		$1$ & 
		\cite{bernstein2009nearly} & Oracle. \\ 
		\cline{2-9}
		&
		exact & $n^2$ & 
		$\poly(n)$ & \multicolumn{2}{c|}{$1$} & 
		$2$ & 
		\cite{duan2009dual} & Oracle. \\ 
		\cline{2-9}
		&
		$\mathcal{O} + d$ & $d n^{4/3}$ & 
		$\poly(n)$ & $1$ & $n^{4/3}$ & 
		$d$ & 
		\cite{braunschvig2012fault} & Additive spanner.\\ 
		\hline
	\end{tabular}
	\caption{Upper Bounds for APSP. We omit polylog factors in the stated running times and spaces usages.
		For algorithms with an additive approximation
		guarantee, we included the optimal result $\mathcal{O}$; all other
		approximation algorithms achieve multiplicative approximation guarantees.}
	\label{Tbl:UpperBoundsAPSP}
\end{table}

\end{landscape}


\subsection {Triangle Detection Proofs}
\label{subsec:ReachSPAppendix}

We provide full details of the reachability and shortest paths sensitivity results
from Theorem~\ref{Thm:TriangleLBs}.

\paragraph*{Reachability}
Let $G = (V,E)$ be an undirected unweighted graph for Triangle Detection.
We create four copies of $V$ denoted by $V_1, V_2, V_3, V_4$,
and for $i = 1,2,3$, we add edges
between nodes $u_i \in V_i$ and $v_{i+1} \in V_{i+1}$ if $(u,v) \in E$.

For a fixed source $s \in V$ and sink $t \in V$, Abboud and Williams~\cite{abboud2014popular}
give the following reduction:
For each vertex $v \in V$, they insert the edges $(s,v_1)$ and $(v_4,t)$,
and query if there exists a path from $s$ to $t$. They show that there exists a triangle
in $G$ iff one of the queries is answered positively.
We observe that this reduction requires $n$ batch updates of size $2$ and $n$ queries.
Hence, it holds for sensitivity $2$.

Now keep $s$ fixed, but remove the sink $t$, and allow single-source
reachability queries\footnote{Given $v \in V$, a query returns true iff there exists a path from $s$ to $v$.}.
We perform a stage for each $v \in V$, in which we add the single edge $(s,v_1)$
and query if there exists a path from $s$ to $v_4$.
By the same reasoning as before, there exists a triangle in $G$ iff one of the
queries returns true. The reduction requires $n$ updates of size $1$ and $n$ queries.
Thus, it has sensitivity $1$.

Finally, we remove the source node $s$ and ask all-pairs reachability
queries\footnote{Given two nodes $u, v \in V$, a query returns true iff there exists a path from $u$ to $v$.}.
We perform a stage for each $v \in V$, which queries if there exists a path
from $v_1$ to $v_4$. There exists a triangle in $G$ iff one if the queries returns true.
This reduction has sensitivity $0$, i.e., it uses no updates, and $n$ queries.
Hence, we have derived a very simple conditional
lower bound for static reachability oracles.

These reductions prove the first three results of Theorem~\ref{Thm:TriangleLBs}.

\paragraph*{Shortest Paths}
The above reduction for $st$-reachability can be easily altered
to work for $(7/5-\varepsilon)$-approximate
$st$-shortest paths in \emph{undirected} unweighted graphs (for any $\varepsilon>0$) with the same running time lower bounds:
Just observe that the graphs in the reduction are bipartite. Thus, either there is a path from $s$ to $t$ of length $5$ and there is a triangle in the original graph, or the shortest path between $s$ and $t$ has length at least $7$. Thus distinguishing between length $7$ and $5$ solves the triangle problem.

To obtain a lower bound for $(3/2-\varepsilon)$-approximate $ss$-shortest paths,
we take the construction for $ss$-reachability and again observe that the graph is bipartite so that if there is no path of length $4$ between $s$ and a node $v\in V$, then the shortest path between them must have length at least $6$.

With the same bipartiteness observation, we obtain a conditional lower bound for
$(5/3-\varepsilon)$-approximate static $ap$-shortest paths.
In a stage for node $v \in V$, we query the shortest path from $v_1$ to $v_4$.
The query returns $3$ if there exists a triangle in the original graph
containing the vertex $v$ and $\geq 5$ otherwise. Thus, distinguishing between $3$ and $\geq 5$ suffices to solve the problem.

Finally, let us discuss how we obtain a lower bound for incremental shortest paths with sensitivity $1$ in directed graphs.
Vassilevska Williams and Williams~\cite{williams2010subcubic} reduce BMM to the replacement paths problem in directed unweighted graphs: given a directed graph $G$ on $m$ edges and $n$ nodes and two nodes $s$ and $t$, compute for every $e\in E$, the distance between $s$ and $t$ in $G\setminus \{e\}$. \cite{williams2010subcubic} shows that if a combinatorial algorithm can solve the latter problem in $O(mn^{1/2-\varepsilon})$ time for any $\varepsilon>0$ (for any choice of $m$ as a function of $n$), then BMM has a truly subcubic combinatorial algorithm. This showed that the $O(m\sqrt n)$ time algorithm of Roditty and Zwick~\cite{roditty2012replacement} is tight.

Here we observe that the \cite{williams2010subcubic} reduction immediately implies a $1$-sensitivity oracle lower bound, as any $1$-sensitivity oracle for $st$ distances must be able to answer the replacement paths problem by querying the less than $n$ nodes on the shortest $s$-$t$ path: either the preprocessing time is at least $mn^{0.5-o(1)}$ or the query time is at least $m/n^{0.5+o(1)}$. For dense graphs this gives a lower bound of either $n^{2.5-o(1)}$ preprocessing or $n^{1.5-o(1)}$ query time. The lower bound is again tight via Roditty and Zwick's algorithm~\cite{roditty2012replacement} for replacement paths.

\subsection{All Pairs Shortest Paths Proofs}
\label{subsec:APSPproofsApendix}
We prove the statements of Theorem~\ref{Thm:APSPLBs}
in Lemma~\ref{lem:APSPdec2SP} and Lemma~\ref{lem:APSPdecDiam}.

\begin{lemma}
Assuming the APSP conjecture, decremental $st$-shortest paths in undirected weighted graphs
with sensitivity 2 cannot be solved with preprocessing time $O(n^{3-\varepsilon})$,
and update and query times $O(n^{2-\varepsilon})$ for any $\varepsilon > 0$.
\label{lem:APSPdec2SP}
\end{lemma}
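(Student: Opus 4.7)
The plan is to reduce from Negative Triangle, which is subcubic-equivalent to APSP~\cite{williams2010subcubic}. The construction adapts the directed replacement-paths reduction of Vassilevska Williams and Williams to the undirected setting, where a single edge deletion no longer suffices to force a detour, and so two deletions become necessary.

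Given a tripartite Negative Triangle instance on parts $A, B, C$ of size $n$ with integer edge weights $w$, I build one undirected weighted graph $G$ on $O(n)$ vertices with designated $s, t$. The graph contains a \emph{triangle core} --- a copy of $A \cup B \cup C$ whose $(a,b)$, $(b,c)$, $(a,c)$ edges carry weights $M + w(a,b)$, $M + w(b,c)$, $M + w(a,c)$ for a large offset $M \gg n \max |w|$ --- together with an $A$-selector chain of $O(n)$ vertices attached to $s$ (the $i$-th chain edge being associated with $a_i$), a symmetric return structure attached to $t$ via $C$, and carefully weighted ``bypass'' edges whose role is to be removed by the second coordinate of each query. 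Weights are chosen so that (i) without deletions, the shortest $s$-$t$ distance is a fixed value $L_0$ realized by the cheap selector route; and (ii) for each $i$, deleting the $i$-th selector edge together with the designated bypass edge eliminates every cheap $s$-$t$ path except those that enter the core at $a_i$, traverse an $(a_i, b, c)$-triangle, and return via the $t$-side structure, giving $s$-$t$ distance equal to $L_0 + \min_{b,c}(w(a_i, b) + w(b, c) + w(a_i, c))$.

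Running one such query per $a_i \in A$ gives $n$ sensitivity-$2$ queries; the minimum value over the queries equals $L_0 + \min_{(a,b,c)}(w(a,b) + w(b,c) + w(a,c))$, so the Negative Triangle instance is decided. Under the hypothesized algorithm, the total reduction cost is $O(n^{3-\varepsilon}) + n \cdot O(n^{2-\varepsilon}) = O(n^{3-\varepsilon})$, which together with the Negative Triangle equivalence refutes the APSP conjecture.

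The main obstacle is compensating for undirectedness. In the directed replacement-paths argument a single chain deletion suffices because direction forbids reverse traversal of the chain. In an undirected graph, cutting one chain edge can be bypassed by walking around through the triangle core and re-entering the chain past the cut, and the direct $(a,c)$-edges in the core also supply cheap shortcut paths that can undercut the intended $(a_i, b, c)$-triangle traversal. The second deletion must seal these loopholes, but designing a bypass structure whose \emph{single} edge removal kills \emph{all} cheap alternate routings for each $i$, while leaving the intended triangle traversal uniquely cheapest, is delicate. Verifying that each of the $n$ query outputs indeed decodes the minimum triangle weight through $a_i$ will require a case analysis over all $s$-$t$ paths that survive each pair of deletions, and this is what I expect to be the most technical part of the proof.
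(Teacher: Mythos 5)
Your high-level framing---reduce from Negative Triangle via the subcubic equivalence, attach a weighted selector chain to $s$, and burn the two-edge budget to focus each query on a particular vertex of one part---matches the paper. But two concrete design choices in your sketch are on the wrong side of what actually makes the undirected case go through, and I think they are fatal rather than merely ``delicate.''

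First, you leave the $A$-$C$ edges in the triangle core. That creates $n^2$ direct shortcuts $a \to c$ that skip $B$ entirely, and a single extra edge deletion cannot seal $n^2$ loopholes. The paper avoids this by deleting the $A$-$C$ edges from the gadget altogether and folding $w(a,c)$ into the weight of the edge from $a$ to the $s$-side chain: the edge $(a_p, c'_j)$ gets weight $(7n-j)W + w(a_p, c_j)$. This simultaneously eliminates the shortcut and solves a second problem your construction also has: a simple $s$-$t$ path of the form $\dots \to a_i \to b \to c \to \dots$ collects $w(a_i,b) + w(b,c)$ but never sees $w(a_i,c)$, so as described your query outputs would not encode the triangle weight at all.

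Second, your two deletions are the $i$-th $s$-side selector edge plus one ``bypass'' edge; it is not clear which structure the bypass edge lives in or what a single removal of it blocks. The paper instead builds two symmetric zero-weight chains $P'$ (off $s$) and $P''$ (into $t$), attaches $A$ to $P'$ with the weights above, attaches $C$ to $P''$ with weights $(6n+k)W$ that grow in $k$, and per query deletes one chain edge on each side, $(c'_i,c'_{i+1})$ and $(c''_i,c''_{i-1})$. The weight profile---decreasing in $j$ on the $s$-side, increasing in $k$ on the $t$-side, total heavy cost $(25n + k - j)W$ plus triangle terms---is minimized exactly at the cut positions $j=k=i$, and an ``at-least-four-heavy-edges'' argument kills all backtracking detours. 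Both deletions sit on the chains and together pin down both endpoints of the core crossing, which is precisely the mechanism you were reaching for and which a single bypass edge cannot supply. So the reduction target and the chain-with-two-deletions instinct are correct, but you need to (i) drop the $A$-$C$ edges and encode $w(a,c)$ into the chain-to-$A$ weights, and (ii) place each of the two deletions on a separate selector chain and use a monotone weight profile to force the entry and exit indices to meet at the cut.
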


\begin{proof}
	We use a reduction similar to the reduction from APSP to RP from~\cite{williams2010subcubic},
	but to deal with the undirected edges we add more weights and we add additional nodes to the graph. 
	As in \cite{williams2010subcubic}, we start by taking an instance of APSP and
	turning it into a tripartite graph for the negative triangle detection problem\footnote{In
		the negative triangle detection problem we are given an edge-weighted graph
		$G = (V,E)$ with possibly negative edge-weights from $\mathbb{Z}$,
		and we must determine if $G$ contains a triangle consisting of vertices $u,v,x$
		such that $w(u,v) + w(v,x)+ w(x,u) < 0$.};
	denote resulting graph $H'$.

	If $H'$ has no negative edge weights, we are done (there are no negative triangles).
	If $H'$ has negative edge weights, let $M=\min\{w(e)|e\in E_{H'}\}$ and add $-M+1$ to all edges
	(thus making all edges have positive weights).
	Now we want to detect if there is a triangle with (positive) weight less than $-3M+3$.
	Denote the new graph by $H$ and denote the three tripartite groups $A$, $B$ and $C$;
	each set $A$, $B$ and $C$ has $n$ nodes. 
	
	\begin{figure}[ht]
		\centering
		\includegraphics[width=0.75\textwidth]{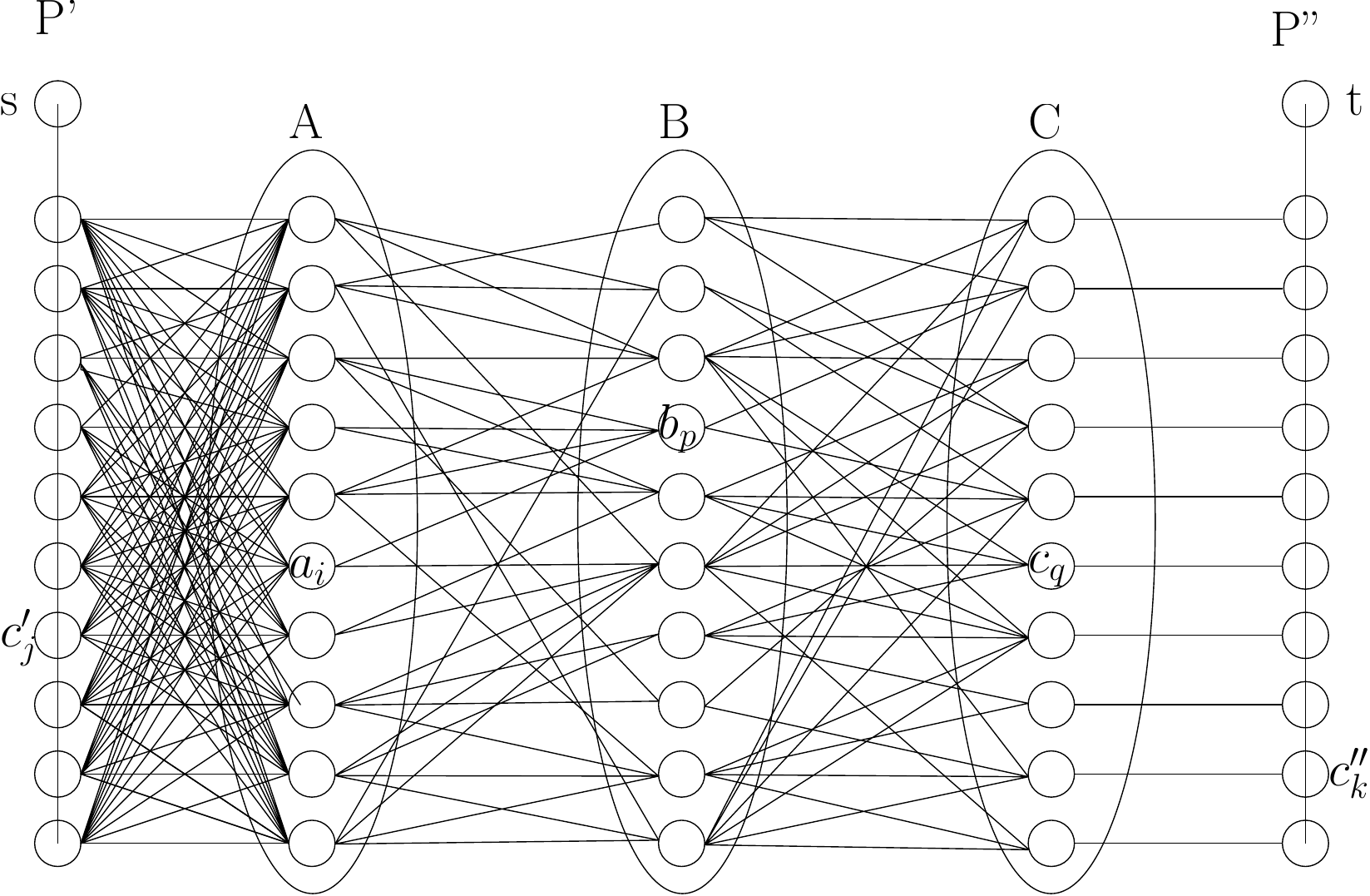}
		\caption{The graph $G$.}
		\label{fig:APSP2del}
	\end{figure}
	
	We construct a graph $G$ in which $n$ shortest paths queries with
	two edge deletions determine if there exist any triangles with weight less than $- 3M+3$ in $H$.
	Let $W = 4 \max \{w(e)| e \in E_{H} \}$ and observe that $W$ is larger than the maximum possible difference
	in the weight of two triangles. We will use this weight to enforce that we must take certain paths.

	We add two vertices $s$ and $t$ to $G$. We add a path $P'$ of length $n$ to $s$,
	where each edge on the path has weight $0$;
	the first node after $s$ on $P'$ is denoted $c_1'$, the next node on $P'$ is denoted $c_2'$,
	and the $i$'th node on $P'$ is denoted $c_i'$.
	Next, we add a path $P''$ of length $n$ to $t$,
	where each edge on the path has weight $0$; the first node after $t$ on $P''$ is denoted $c_n''$,
	second node on $P''$ is denoted $c_{n-1}''$ and the $i^{th}$ node away from $t$ on $P''$ is denoted $c_{n-i+1}''$.
	We add the nodes in $A$, $B$ and $C$ from $H$ to $G$ and keep all edges from $A \times B$ and from $B \times C$,
	however, we delete all edges from $A \times C$.
	We increase the weight of all edges from $A$ to $B$ and of all edges from $B$ to $C$ by $6nW$.
	We add edges between all nodes in $A$ and all nodes on the path $P'$;
	specifically, for all $a_i \in A$ and for all $j \in [1,n]$, we add an edge $(a_i, c_j')$
	of weight $(7n-j)W+w((a_i,c_j))$.
	We further add edges from $C$ to the path $P''$;
	specifically, we add an edge from $c_i \in C$ to $c_i''$ of weight $(6n+i)W$.
	The resulting graph is given in Figure~\ref{fig:APSP2del}.
	
	Note that all edges in the graph $G$ either have weight $0$ or their weight
	is from the range $[6nW, 7nW+W/4]$. All edges of weight $0$ are on the paths $P'$ and $P''$;
	all paths from $s$ to $t$ must contain at least one edge from $P'$ to $A$,
	one from $A$ to $B$, one from $B$ to $C$ and finally one edge from $C$ to $P''$.
	Each of the non-path-edges has weight from the range $[6nW, 7nW+W/4]$ and we must take at least four of them in total.
	If we backtrack (and go from $A$ back to $P'$ or from $B$ back to $A$, etc)
	then we must take at least six non-zero edges;
	hence, it is never optimal to backtrack since $(7nW+W/4)4<6 \dot 6nW$.
	
	We explain which $n$ queries answer the negative triangle question in $H$.
	For each $i = 1,\dots,n$, we delete the edge $(c_i',c_{i+1}')$ from path $P'$
	and the edge $(c_i'',c_{i-1}'')$ from path $P''$, and we query the shortest path from $s$ to $t$.
	Note that with these edges deleted to take only four ``heavy'' edges, one must leave
	from a $c_j'$ where $j\leq i$ and enter a $c_k''$ where $k\geq i$.
	The length from $s$ to $c_j'$ and from $c_k''$ to $t$ is zero. So a shortest path from $s$ to $t$ has length 
	$$(7n-j)W+w(a_p,c_j) + w(a_p,b_q)+6nW + w(b_q,c_k) + 6nW+ (6n+k)W.$$
	Note that because $W$ is large and we want to minimize the length of the shortest path,
	we want to maximize $j$ and minimize $k$.
	Due to the deleted edges, the maximum plausible value of $j$ is $i$ and the
	minimum plausible value of $k$ is $i$. In that case, the path length is
	$$(7n-i)W+w(a_p,c_i) + w(a_p,b_q)+6nW + w(b_q,c_i)+6nW + (6n+i)W .$$
	This simplifies to $25nW+w(a_p,c_i) + w(a_p,b_q) + w(b_q,c_i).$
	If the length of the shortest $st$-path is less than $25nW - 3M+3$, then there exists a negative triangle
	in the graph $H'$ containing $c_i$. Otherwise, there is no such triangle.
\end{proof}

\begin{lemma}
\label{lem:APSPdecDiam}
Assuming the APSP conjecture, decremental diameter in undirected weighted graphs
with sensitivity 1 cannot be solved with preprocessing time $O(n^{3-\varepsilon})$,
and update and query times $O(n^{2-\varepsilon})$ for any $\varepsilon > 0$.
\end{lemma}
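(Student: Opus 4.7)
The plan is to adapt the BMM-based reduction for decremental 1-sensitive diameter in unweighted graphs (given in the section preceding Theorem~\ref{Thm:TriangleLBs}) into an APSP-based reduction, by replacing unweighted Triangle Detection with Negative Triangle on a tripartite graph, which is subcubically equivalent to APSP (as invoked elsewhere in the paper via~\cite{williams2010subcubic}). Concretely, I would start with a tripartite Negative Triangle instance with parts $X, Y, Z$ of size $n$ each, shift all edge weights by a sufficiently large positive constant so that every edge weight becomes positive, and fix a threshold $T_{0}$ such that a triangle had negative weight in the original instance iff its (shifted) weight in the new instance is strictly less than $T_{0}$.

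Next, I would build a graph $G'$ on the same combinatorial skeleton as in the unweighted BMM reduction: four layers $V_{1}, V_{2}, V_{3}, V_{4}$, with $V_{1}$ and $V_{4}$ both being copies of $X$, $V_{2}$ a copy of $Y$, $V_{3}$ a copy of $Z$, together with the auxiliary vertex sets $A = \{a_{v}\}$ and $B = \{b_{v}\}$ (each a clique) and the extra vertices $c, d$, connected exactly as before. The edges between $V_{1}$ and $V_{2}$, between $V_{2}$ and $V_{3}$, and between $V_{3}$ and $V_{4}$ receive the corresponding (shifted) weights from the $X\text{-}Y$, $Y\text{-}Z$, and $Z\text{-}X$ parts of the tripartite Negative Triangle instance. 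Thus for any $v \in X$, $u \in Y$, $w \in Z$, the $v_{1} \to u_{2} \to w_{3} \to v_{4}$ path in $G'$ has weight exactly equal to the weight of the triangle $\{v, u, w\}$ in the shifted instance.

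Then I would calibrate the auxiliary weights so that (i) the direct bypass path $v_{1} \to a_{v} \to b_{v} \to v_{4}$ has total weight exactly $T_{0}$ for every $v$, (ii) every pair of vertices other than pairs of the form $(v_{1}, v_{4})$ has distance at most $T_{0}$ in $G'$ even after deleting a single edge of the form $(b_{v}, v_{4})$, and (iii) any alternate $v_{1}$-to-$v_{4}$ path that still routes through $A$ and $B$ (and therefore must enter $B$ via some $b_{u}$ with $u \neq v$) has weight at least $T_{0}$. A natural way to achieve this is to put the bulk of the weight $T_{0}$ on the middle edge $(a_{v}, b_{v})$, give the clique edges inside $A$ and $B$ and the edges incident to $c, d$ small but positive weights, and give the endpoint edges $(v_{1}, a_{v})$ and $(b_{v}, v_{4})$ matching small weights so that their sum over the bypass equals $T_{0}$; the positive clique weights then ensure that any detour inside $B$ (or inside $A$) strictly exceeds $T_{0}$.

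The main technical obstacle is exactly step (iii) together with step (ii), i.e., ensuring simultaneously that no pair other than $(v_{1}, v_{4})$ exceeds $T_{0}$ while no alternate bypass from $v_{1}$ to $v_{4}$ drops below $T_{0}$; this will require balancing the auxiliary weights against the magnitude $T_{0}$, but the construction is local enough that a constant-depth case analysis over the possible path patterns through $c, d, A, B$ should suffice. Once the weights are set, the stage for each $v \in X$ deletes $(b_{v}, v_{4})$ and queries the diameter of $G' \setminus \{(b_{v}, v_{4})\}$. By the properties above, the query value equals $\min(T_{0}, \min_{u,w} w(v,u) + w(u,w) + w(w,v))$, so the diameter drops strictly below $T_{0}$ iff the original Negative Triangle instance contains a negative triangle through $v$. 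Performing $n$ such stages costs $O(n)$ updates and $O(n)$ queries, so an algorithm with $O(n^{3-\varepsilon})$ preprocessing and $O(n^{2-\varepsilon})$ update/query time would solve Negative Triangle, and hence APSP, in truly subcubic time, contradicting the APSP conjecture.
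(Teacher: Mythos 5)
Your high-level plan — replace unweighted Triangle with Negative Triangle on the same four-layer-plus-$A,B,c,d$ skeleton, lift the tripartite weights onto the $V_1$--$V_2$--$V_3$--$V_4$ edges, and delete $(b_v,v_4)$ one vertex at a time — is exactly the paper's approach. But the weight calibration you propose has a genuine flaw that breaks the reduction, and it also shows up as an error in your final claim about what the diameter query returns.

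You set the bypass $v_1\to a_v\to b_v\to v_4$ to have weight \emph{exactly} $T_0$, and you require other pairs to have distance \emph{at most} $T_0$. That is not enough. The diameter is a maximum over all pairs; after deleting $(b_v,v_4)$, every other pair $(u_1,u_4)$ with $u\neq v$ still has its own intact bypass of weight exactly $T_0$, and for generic $u$ (any $u$ not on a negative triangle) its distance is exactly $T_0$. So the reported diameter is at least $T_0$ regardless of what happens at $v$; in particular, your formula ``query value $= \min(T_0, \min_{u,w} w(v,u)+w(u,w)+w(w,v))$'' identifies the diameter with (something like) the $v_1$--$v_4$ distance and ignores the $\max$ over all other pairs. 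The decision criterion ``diameter $<T_0$'' can thus be false even when there is a negative triangle through $v$, and the reduction does not distinguish the two cases. You need a strict gap: the bypass and every other pair must be \emph{strictly} below the negative-triangle threshold. The paper arranges exactly this by shifting all $H$-edges by $5M$ (so each is in $[4M,6M]$, threshold $15M$) and assigning every auxiliary edge weight $4M$ uniformly; then every non-$(v_1,v_4)$ pair is reachable in three hops of total weight $12M<15M$, any $v_1$--$v_4$ detour through $A,B,c,d$ uses four hops hence costs at least $16M>15M$, and only the three-hop $V_1$--$V_2$--$V_3$--$V_4$ route can land in $[12M,15M)$. So the fix is minor — make the bypass weight some $T_0-\Delta$ and ensure all non-$(v_1,v_4)$ pairs are below $T_0$ by a margin, and that detours strictly exceed $T_0$ — but as written, with bypass $=T_0$, the argument at the boundary collapses.

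One smaller remark: the paper's uniform-$4M$ choice makes verifying your conditions (ii) and (iii) trivial (each path's weight is just its hop count times $4M$ plus the weighted hops). Your non-uniform scheme (tiny weights on $(v_1,a_v)$, $(b_v,v_4)$, cliques and $c,d$-edges, a huge weight on $(a_v,b_v)$) can likely be made to work after fixing the gap issue above, but it forces the ``constant-depth case analysis'' you mention to actually balance several magnitudes against one another, rather than just count hops; the uniform choice is both simpler and robust.
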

\begin{proof} 
	As in \cite{williams2010subcubic} we start by taking an instance of APSP
	and turning it into a weighted tripartite graph for the negative triangle detection problem;
	denote the resulting graph for the negative triangle detection problem $H'$.

	Let $M$ be a positive integer such that all edges in $H'$ have weights between $-M$ and $M$.
	We increase all edge weights in $H$ by $5M$ (thus making all edges have weight at least $4M$).
	We denote this new graph $H$ and call the three tripartite groups $X$, $Y$ and $Z$.
	Note that $X$, $Y$ and $Z$ each have $n$ nodes. 
	Now we want to detect if there is a triangle with weight less than $15M$ in $H$.
	
	We construct the graph $G$ depicted in Figure~\ref{fig:DiamPic} as follows.
	We create sets $V_1$ and $V_4$ containing copies of the nodes in $X$,
	the set $V_2$ containing copies of the nodes from $Y$,
	and the set $V_3$ containing copies of the nodes from $Z$.
	We additionally create two groups of vertices, $A$ and $B$,
	containing copies of $X$.
	Finally, we add two vertices $c$ and $d$ to $G$.
	For convenience, we denote the different copies of $x_i \in X$ as follows: The copy in $V_1$ as $x_i^1$,
	the copy in $V_4$ as $x_i^4$, the copy in $A$ as $x_i^A$, and the copy in $B$ as $x_i^B$. 

	\begin{figure}[htb]
		\begin{centering}
			\includegraphics[width=6.2cm]{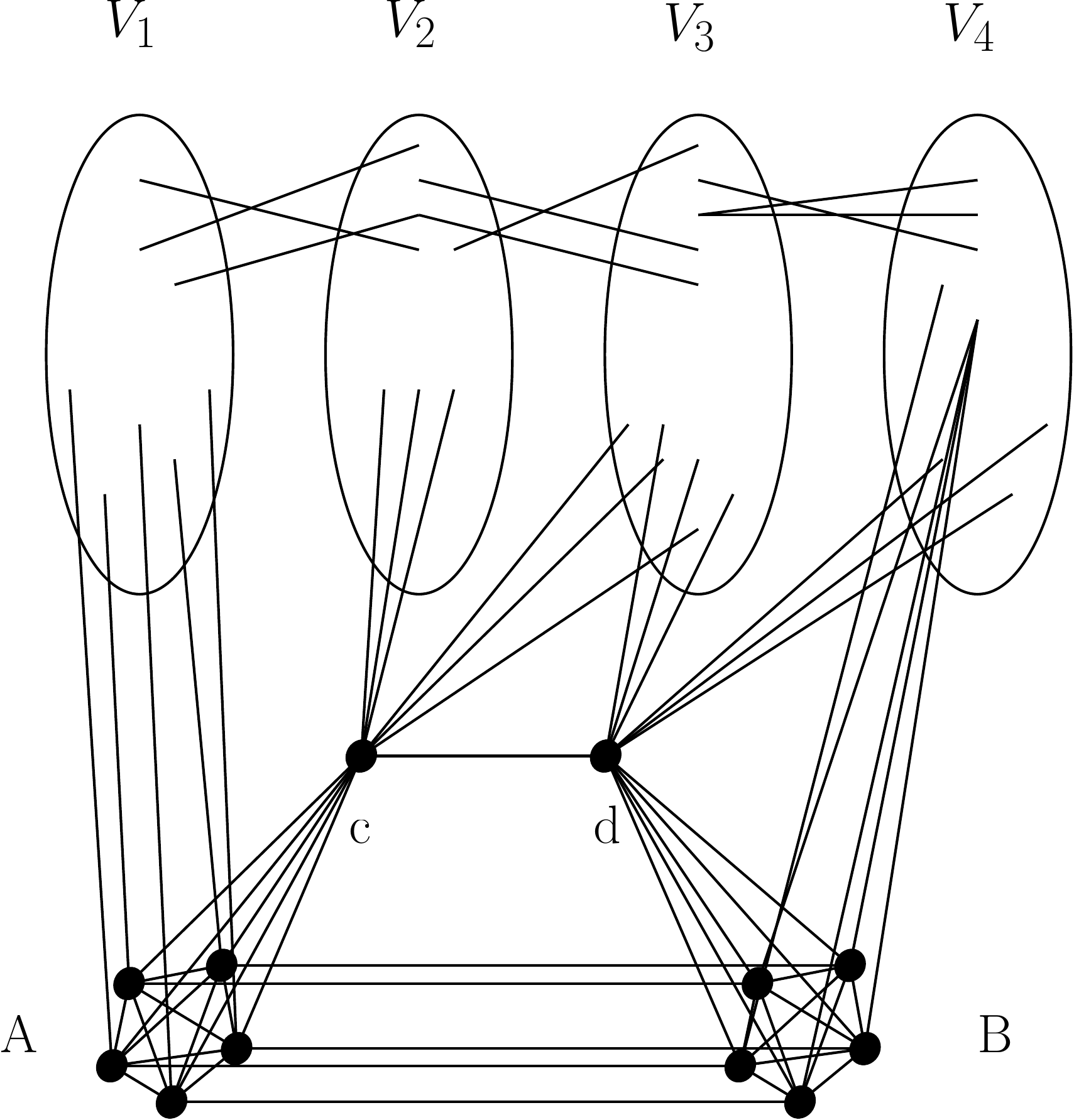}
			\caption{The graph $G'$.}
			\label{fig:DiamPic}
		\end{centering}
	\end{figure}

	We introduce edges between $V_1$ and $V_2$ if the corresponding nodes in $X$ and $Y$ have an edge;
	the edges between $V_2$ and $V_3$ are determined by the edges between $Y$ and $Z$;
	the edges between $V_3$ and $V_4$ are determined by the corresponding edges between nodes of $X$ and $Z$.
	All edges we added to the graph have the same weight as their corresponding copies in $H$.
	Note that there are no edges between $V_1$ and $V_3$, nor between $V_1$ and $V_4$, nor between $V_2$ and $V_4$.
	
	For all $i = 1,\dots,n$, we add edges of weight $4M$ between $x_i^1$ and $x_i^A$, and between
	$x_i^A$ and $x_i^B$. Additionally, for all $i = 1,\dots,n$ and all $j = 1,\dots,n$,
	we add an edge between $x_i^B$ and $x_j^4$ of weight $4M$.
	For all $v \in V_2 \cup V_3 \cup A$, we add an edge of weight $4M$ between $c$ and $v$,
	and for all $v \in V_3 \cup V_4 \cup B$, we add an edge of weight $4M$ between $d$ and $v$.
	We add edges with weight $4M$ to connect all vertices in $A$ into a clique, and we do the same for $B$.
	Finally, we add an edge of weight $4M$ between $c$ and $d$. 
	
	Note that all edges in this graph have weights between $4M$ and $6M$. Further note that in
	$G$ all pairs of nodes have a path of length at most $12M$ between them. 
	
	Our $n$ queries are picked as follows: For all $i = 1,\dots,n$, we delete the edge
	between $x_i^B$ and $x_i^4$ and query the diameter.
	Observe that the only path lengths that could become greater
	than $12M$ are between the nodes in $V_1$ and the node $x_i^4$. However, for all $x_j^1$
	where $j \ne i$, the path $x_j^1 \rightarrow x_j^A \rightarrow x_j^B \rightarrow x_i^4$
	still has all of its edges, and there exists a path of length $12M$ for these vertices. 
	
	Thus the only pair of vertices for the path length might increase is $x_i^1$ to $x_i^4$.
	One must use at most three edges to obatain a path of length at most $15M$
	(since any path with four hops has length at least $16M$ because every edge in $G$
	has weight at least $4M$).
	Hence, any path of length less than $15M$ between $x_i^1$ and $x_i^4$
	must go from $x_i^1$ to $V_2$ to $V_3$ to $x_i^4$.
	Thus, if a path of length less than $15M$ exists between $x_i^1$ and $x_i^4$ after the edge
	deletion between $x_i^B$ and $x_i^4$, then there is a negative triangle in $H'$ containing node $x_i$.
	
	Thus, we can detect if any negative triangle exists in the tripartite graph $H'$
	by asking these $n$ queries, determining for all $x_i \in X$
	if a negative triangle containing $x_i$ exists.
\end{proof}

Note that if for all $i = 1,\dots,n$ we delete edge $(x_i^B,x_i^4)$, then the
eccentricity of $x_i^1$ is less than $15M$ if and only if the diameter of
the graph is less than $15M$. Thus, this serves as lower bound for eccentricity as well. 

\subsection{SETH Proofs}
\label{subsec:SETHAppendix}

\begin{lemma}
\label{Lem:CountSSR}
  Let $\varepsilon > 0$, $t \in \mathbb{N}$.
  SETH implies that there exists no
  algorithm for incremental \#SSR with sensitivity $K(\varepsilon, t)$,
  which has preprocessing time $O(n^t)$,
  update time $u(n)$ and query time $q(n)$, such that 
  $\max\{ u(n), q(n) \}= O(n^{1-\varepsilon})$.
\end{lemma}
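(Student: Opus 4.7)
The plan is to reduce $k$-SAT to the hypothetical algorithm for incremental \#SSR and produce an $O^*(2^{(1-\eta)\tilde n})$-time SAT algorithm for some $\eta>0$, contradicting SETH. Given a $k$-SAT instance with $\tilde n$ variables, the first step is to apply the sparsification lemma with a small parameter $\varepsilon_s>0$ (chosen below), obtaining $l = O(2^{\varepsilon_s \tilde n})$ subformulas, each with at most $c\tilde n$ clauses where $c = C(\varepsilon_s,k)$. It suffices to decide each subformula in time $O^*(2^{(1-\eta)\tilde n})$.

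For a fixed subformula $F'$ I pick a subset $U\subset V$ of $\delta\tilde n$ variables (with $\delta$ chosen below) and construct the initial graph $G_0$. I take the graph $H_\delta$ of the paper with each edge between $\bar u\in\bar U$ and $c\in C$ oriented as $c\to\bar u$, so that the clauses falsified by a partial assignment are reached via reversed edges. I take $D_\delta$ and, for every group $G_i$ and non-empty $g\subseteq G_i$, add directed edges $v_{i,g}\to c$ for every $c\in g$; these subset-to-clause edges are the precomputation that will let each per-stage update remain of constant size. Finally I add an isolated source $s$. The initial graph has $n = O^*(2^{\delta\tilde n})$ vertices and edges, so preprocessing takes $O(n^t) = O^*(2^{t\delta\tilde n})$ time.

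For each assignment $\beta$ to the remaining $(1-\delta)\tilde n$ variables I run one stage. In $\poly(\tilde n)$ time I compute, for every $i$, the set $g_i\subseteq G_i$ of clauses in $G_i$ not satisfied by $\beta$. The update inserts, for each $i$ with $g_i\neq\emptyset$, the single edge $(s,v_{i,g_i})$; the total number of inserted edges is at most $K=c/\delta$. By construction, $s$ reaches exactly $\{s\}$, the chosen subset nodes $v_{i,g_i}$, the clauses in $B_\beta = \bigcup_i g_i$, and precisely the $\bar u\in\bar U$ failing at least one clause of $B_\beta$, i.e., the $\bar u$ for which $\bar u\cup\beta$ does not satisfy $F'$. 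Since the number of non-empty $g_i$'s and $|B_\beta|$ are known a priori, a single \#SSR query determines the number of non-satisfying $\bar u$'s and hence whether $\beta$ has a satisfying extension. We accept iff some stage reports a satisfying extension.

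The core of the proof is the parameter balancing. The total running time is
\[
O^*\bigl(2^{(\varepsilon_s+t\delta)\tilde n}\bigr) + O^*\bigl(2^{\varepsilon_s\tilde n}\cdot 2^{(1-\delta)\tilde n}\cdot 2^{(1-\varepsilon)\delta\tilde n}\bigr)
= O^*\bigl(2^{(\varepsilon_s+t\delta)\tilde n} + 2^{(\varepsilon_s+1-\varepsilon\delta)\tilde n}\bigr).
\]
Choosing $\delta$ slightly below $(1-\varepsilon_s)/t$ and then $\varepsilon_s$ small enough that $\varepsilon_s t < \varepsilon(1-\varepsilon_s)$ drives both exponents strictly below $1$, contradicting SETH; with these choices $K = c/\delta \le C(\varepsilon)\,t/(1-\varepsilon) = K(\varepsilon,t)$, matching the claimed sensitivity. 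The main obstacle I foresee is verifying that the constant $c$ obtained from sparsification can indeed be absorbed into the universal $C(\varepsilon)$ appearing in the definition of $K(\varepsilon,t)$, which is exactly the well-definedness argument already reviewed in the paper's discussion of $K(\varepsilon,t)$.
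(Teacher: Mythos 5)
Your proposal is correct and follows essentially the same approach as the paper: the same graph construction (an isolated source $s$, the bipartite $H_\delta$ with clause-to-assignment edges, and $D_\delta$ with subset-to-clause edges), the same stage-per-assignment structure with a constant-size batch of $s$-to-subset edge insertions, the same threshold-counting argument for \#SSR, and the same setting $\delta \approx (1-\varepsilon)/t$ to bound $K$. The only (harmless) difference is that you make the interaction between the sparsification parameter $\varepsilon_s$ and the choice of $\delta$ slightly more explicit at the end, whereas the paper packages that bookkeeping into its preceding definition of $C(\varepsilon)$ and $K(\varepsilon,t)$.
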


\begin{proof}
	Set $\delta = (1-\varepsilon)/t$.
	Assume that the $k$-CNFSAT formula $F$ has $c \cdot \tilde n$ clauses for some
	constant~$c$. Partition the clauses into $K = c/\delta$ groups of size $\delta \tilde n$
	and denote these groups by $G_1, \dots, G_K$.
	Further let $U$ be a subset of the $\tilde n$ variables of $F$ of size $\delta \tilde n$,
	and let $\bar U$ denote the set of all partial assignments to the variables in the $U$.
	
	We construct a graph $G$ as the union of $D_\delta$ and $H_\delta$.
	It consists of $\bar U$ and the set $C$ of clauses.
	We direct the edges in $H_\delta$ from $C$ to $\bar U$, and
	add a directed edge from a node $d \in D_\delta$ to
	$c \in C$ iff $c \in d$. We further add a single node $s$ to the graph.
	The resulting graph has
	$n = O(2^{\delta \tilde n})$ nodes and $O(2^{\delta \tilde n} \tilde n)$ edges.
	
	We proceed in stages with one stage for each partial assignment to the
	variables in $V \setminus U$.
	At a stage for a partial assignment $\phi$ to the variables in $V \setminus U$,
	we proceed as follows:
	For each group $G_i \subset C$, we add an edge from $s$ to the largest
	non-empty subset $d_i$ of $G_i$ which only contains clauses that
	are not satisfied by $\phi$, i.e., to the set
	$d_i = \{ c \in G_i : \phi \not\vDash c\}$; if $d_i$ is empty, then we do not introduce an edge.
	Let $D'$ be the set of nodes $d_i$ that received an edge from $s$ and let $d(s) = |D'|$.
	Note that $d(s) \leq K$, since we introduce at most one edge for each of the $K$ groups.
	Further, let $B$ denote the number of clauses in $C$ reachable from the sets $d_i \in D'$,
	i.e., $B = \sum_{i = 1}^K |d_i|$.
	We query if the number of nodes reachable from $s$ is less than
	$d(s) + B+ 2^{\delta \tilde n}$. If the answer to the query is true, then we return that $F$ is
	satisfiable, otherwise, we proceed to the next partial assignment to the variables in
	$V \setminus U$.
	
	We prove the correctness of the reduction: Assume that $F$ is satisfiable.
	Then there exist partial assignments $\phi$ and $\phi'$ to the variables in
	$V \setminus U$ and $U$, such that $\phi \cdot \phi'$ satisfies $F$.
	Hence, for each subset of clauses $d \subset C$ we have that each clause $c \in d$ is satisfied by $\phi$
	or by $\phi'$. Thus, the node $u \in \bar U$ corresponding to $\phi'$
	cannot be reachable from $s$ and there must be less than $d(s) + B + 2^{\delta \tilde n}$
	nodes reachable from $s$.
	Now assume that at a stage for the partial assignment $\phi$ the result to
	the query is true, i.e., less than $d(s) + B + 2^{\delta \tilde n}$ nodes are reachable from $s$;
	namely $d(s)$ at distance $1$, $B$ at distance $2$, and less than $2^{\delta \tilde n}$ at distance $3$.
	In this case, there must be a node $u \in \bar U$ which is not reachable from
	$s$: In $D_\delta$ there are exactly $d(s)$ nodes reachable and in $C$ there are exactly
	$B$ nodes reachable by construction of the graph and definition of $d(s)$ and $B$.
	This implies that for the partial assignment  $\phi'$ corresponding to $u$,
	each clause $c \in C$ must be satisfied by $\phi$ or $\phi'$.
	Hence, $F$ is satisfiable.

	Note that determining the sets $d_i \in D'$ can be done in time $O(\delta \tilde n^2)$ per group $G_i$
	as for each clause we can check in time $O(\tilde n)$ whether it is satisfied by $\phi$.
	Thus the set $D'$ and the value $B$ can be computed in total time $O(cn)$ and
	the total time for all stages is
	$O(2^{1 - \delta \tilde n} (\tilde n^2 + K u(n) + K q(n))$.
	If both $u(n)$ and $q(n)$ are $O(n^{1-\varepsilon}) = O(2^{\delta \tilde n (1-\varepsilon)})$,
	then SAT can be solved in time $O^*(2^{\tilde n(1-\varepsilon \delta)})$.
\end{proof}

\begin{lemma}
\label{Lem:ApproxDiameter}
  Let $\varepsilon > 0$, $t \in \mathbb{N}$.
  SETH implies that there exists no $(4/3 - \varepsilon)$-approximation
  algorithm for incremental diameter with sensitivity $K(\varepsilon, t)$, which
  has preprocessing time $O(n^t)$,
  update time $u(n)$ and query time $q(n)$, such that 
  $\max\{ u(n), q(n) \}= O(n^{1-\varepsilon})$.
\end{lemma}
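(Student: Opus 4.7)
The plan is to mirror the proof of Lemma~\ref{Lem:CountSSR}, replacing the reachability-count query with a diameter query and bolting on a small hub gadget that forces the diameter to be exactly $3$ or $4$ depending on the satisfiability of $F$. Set $\delta = (1-\varepsilon)/t$, let $F$ be a $k$-SAT formula on $\tilde n$ variables with $c\tilde n$ clauses (via the sparsification lemma), partition the variables into $U$ with $|U|=\delta\tilde n$ and $V\setminus U$, and partition the clauses into $K = c/\delta$ groups $G_1,\dots,G_K$ of size $\delta\tilde n$. As in Lemma~\ref{Lem:CountSSR}, I iterate over all $2^{(1-\delta)\tilde n}$ partial assignments $\phi$ to $V\setminus U$ and do one $K$-edge batched insertion plus one approximate-diameter query per stage.

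The graph I would use is the union of $H_\delta$ and $D_\delta$, augmented by an isolated source $s$ and three hub vertices $h_D, h_C, c^\star$, with base edges:
\begin{compactenum}
  \item $\bar u\sim c$ iff $\bar u\not\models c$ (the $H_\delta$ edges);
  \item $d\sim c$ iff $c\in d$ (the $D_\delta$ edges);
  \item $h_D\sim d$ for every $d\in D$, $h_C\sim c$ for every $c\in C$, and \emph{also} $h_C\sim d$ for every $d\in D$;
  \item $c^\star\sim \bar u$ for every $\bar u\in\bar U$, plus the single edge $c^\star\sim h_D$.
\end{compactenum}
The asymmetry of the hub structure --- $c^\star$ is not adjacent to $h_C$ and $h_C$ is not adjacent to $\bar U$ --- is the one delicate design choice. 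At stage $\phi$ I insert, for each non-empty $d_i^\phi := \{c\in G_i : \phi\not\models c\}$, the edge $(s, d_i^\phi)$; this is at most $K$ batched insertions. (Stages in which all $d_i^\phi$ are empty correspond to $\phi$ satisfying $F$ by itself and are dispatched without a query.)

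The key claim I then want to prove is that after the stage-$\phi$ insertions, $\mathrm{dist}(s,\bar u) = 3$ precisely when some clause is not satisfied by either $\phi$ or $\bar u$, i.e.\ when $\phi\cup\bar u\not\models F$, and $\mathrm{dist}(s,\bar u) = 4$ otherwise. The length-$3$ paths must be of the form $s\to d_i^\phi\to c\to\bar u$ with $c\in d_i^\phi$ and $\bar u\not\models c$, which I verify by listing the common neighbors of $d_i^\phi$ and $\bar u$ (only clauses $c\in d_i^\phi$ with $\bar u\not\models c$, since $h_D, h_C\notin N(\bar u)$ and $c^\star\notin N(d_i^\phi)$). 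A length-$4$ detour $s\to d_i^\phi\to h_D\to c^\star\to\bar u$ is always available. A second, routine case analysis shows every other pair lies at distance at most $3$: the hubs $c^\star, h_C, h_D$ keep the $\bar U$-, $C$-, and $D$-parts each within radius $2$; the extra $h_C$-to-$D$ edges pull $s$ within distance $3$ of every clause once an insertion has fired; and $c^\star$-$h_D$ bridges the $D$-side to $\bar U$ in distance $3$. Consequently the diameter equals $4$ precisely when some $\phi'\in\bar U$ extends $\phi$ to satisfy $F$, which a $(4/3-\varepsilon)$-approximate query distinguishes from the all-distance-$\leq 3$ case.

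The time accounting is identical to Lemma~\ref{Lem:CountSSR}: the $O(n^t)=O^*(2^{\delta t\tilde n})=O^*(2^{(1-\varepsilon)\tilde n})$ preprocessing, plus $2^{(1-\delta)\tilde n}$ stages each costing $O(\tilde n^2)+K\cdot u(n)+q(n) = O^*(2^{\delta(1-\varepsilon)\tilde n})$ under the assumed bounds, sums to $O^*(2^{(1-\varepsilon\delta)\tilde n})$, contradicting SETH. The hard part will be the twin distance case analysis: certifying that \emph{no} length-$3$ shortcut from $s$ to $\bar U$ survives the asymmetric hub placement while every other pair has a length-$3$ route. Getting the hub incidences exactly right --- in particular adding $h_C\sim D$ to reach clauses, while \emph{withholding} $h_C\sim \bar U$ and $c^\star\sim h_C$ to avoid creating a length-$3$ bypass of the $H_\delta$ gadget --- is the crux of the argument.
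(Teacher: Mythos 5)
Your strategy---iterate over partial assignments $\phi$ to $V\setminus U$, insert $K$ edges connecting a source to the per-group unsatisfied-clause sets $d_i^\phi$, and detect satisfiability via a diameter-$3$-versus-$4$ gap---is exactly the paper's approach. The paper builds the same $H_\delta\cup D_\delta$ core and also bolts on four auxiliary vertices $x,y,z,t$: a path $x\sim y\sim z\sim t$, with $x$ adjacent to $\bar U$, $y$ adjacent to $C\cup D_\delta$, and $t$ receiving the per-stage inserted edges. Your $c^\star,h_D,h_C,s$ gadget is a reshuffling of the same idea (your $c^\star$ plays $x$, $h_C$ plays $y$, $h_D+c^\star$ together play the $y$--$x$ link at distance $2$ from the source, and $s$ plays $t$), and the key argument---length-$3$ $s\!\to\!d_i^\phi\!\to\!c\!\to\!\bar u$ witnesses a jointly unsatisfied clause, with a guaranteed length-$4$ detour via $h_D,c^\star$---is identical in substance.

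There is one gap in your ``routine case analysis'' that the paper handles explicitly and you don't. If some $\bar u_0\in\bar U$ satisfies every clause of $C$, then $\bar u_0$'s only neighbor is $c^\star$, and the distance from $\bar u_0$ to $h_C$ (and to any clause $\bar u_0$ satisfies) is $4$: every route $\bar u_0\to c^\star\to\{h_D\text{ or some }\bar u'\}\to\cdot$ needs one more hop before touching $h_C$ or $C$. Then your eccentricity bound of $3$ fails for pairs not involving $s$, and the diameter-$3$-versus-$4$ dichotomy is broken. The paper preempts this by preprocessing: it removes any clause satisfied by every $\bar u$, and if some $\bar u$ satisfies all remaining clauses it returns SAT immediately. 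You need the same step. Incidentally, you are over-cautious about the edge $c^\star\sim h_C$: adding it creates no new length-$3$ shortcut from $s$ to $\bar U$ (the offending vertex would have to lie in $N(d_i^\phi)\cap N(\bar u)$, and $c^\star$ is in neither), and it would fix the $\bar u$-to-$h_C$ distance without the preprocessing. The edge you genuinely must withhold is $h_C\sim\bar U$, which your analysis correctly identifies.
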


\begin{proof}
	Set $\delta = (1-\varepsilon)/t$.
	During the proof we will assume that the $k$-CNFSAT formula $F$
	has $c \cdot \tilde n$ clauses for some constant $c$.
	We partition these clauses into $K = c/\delta$ groups
	of size $\delta \tilde n$ and denote these groups by $G_1, \dots, G_{K}$.
	
	We construct a graph as follows:
	For $U \subset V$ of size $\delta \tilde n$ we create the graph $H_\delta$
	with the set of all partial assignments to the variables in $U$ denoted by $\bar U$ and the set of clauses $C$.
	We also add the graph $D_\delta$ containing all the subsets of the groups $G_i$.
	Furthermore, we introduce four additional nodes $x$, $y$, $z$ and $t$.
	Observe that the graph has $O(2^{\delta \tilde n})$ vertices.
	
	We add an edge from $x$ to each node in $\bar U$.
	We connect $y$ to all nodes in $C$ and to all nodes in $D_\delta$.
	We add further edges between a clause $c$ and a set $g \in D_\delta$ if $c \in g$.
	We also add the following edges to $E$: $\{x,y\}$, $\{y,z\}$ and $\{z,t\}$.
	Hence, we have $O(2^{\delta \tilde n}\tilde n)$ edges in total.
	
	If during the construction of the graph we encounter that a
	clause is satisfied by all partial assignments in $\bar U$,
	then we remove this clause.
	Also, if there exists a partial assignment from $\bar U$
	which satisfies all clauses, we return that the formula is
	satisfiable. Thus, we can assume that each node in $\bar U$
	must have an edge to a node in $C$ and vice versa.
	
	We proceed in stages with one stage for each partial assignment to the
	variables in $V \setminus U$. Denote the partial assignment of the
	current stage by $\phi$.
	For each $G_i$, we add an edge between $t$ and the subset of $G_i$
	that contains all clauses of $G_i$ which are not satisfied
	by $\phi$, i.e., we add an edge between
	$t$ and $\{c \in G_i : \phi \not\vDash c\}$ for all $i = 1,\dots,K$.
	Hence, in each stage we have $O(K) = O(1)$ updates.
	We query the diameter of the resulting graph.
	The diameter is $3$, if the formula $F$ is not satisfiable,
	and it is $4$, otherwise.
	After that we remove the edges that were added in the update.
	
	We prove the correctness of our construction:
	Observe that via $x$ and $y$ all nodes from $\bar U \cup C \cup G$
	have a distance of at most $3$.
	From $z$ we can reach all vertices of $G$ and $C$ via $y$ within two
	steps and all nodes of $\bar U$ within three steps via $y$ and $x$.
	From $t$ we can reach all nodes of $\{x,y,z\} \cup C \cup G$ within
	three steps using the path $t \rightarrow z \rightarrow y \rightarrow v$,
	where $v \in C \cup D_\delta \cup \{x\}$.
	Hence, all nodes in $\{x,y,z,t\} \cup C \cup G$ have a maximum distance of 3.
	From $\bar u \in \bar U$ we can reach $t$ in four steps with the path
	$\bar u \rightarrow x \rightarrow y \rightarrow z \rightarrow t$.
	
	Assume that for $\bar u$ there exists a path $\bar u \to c \to g \to t$,
	then by construction $\bar u \not\vDash c$ and $\phi \not\vDash c$, since
	$c \in g$. Hence, $F$ is not satisfied by $\bar u \cdot \phi$.
	Thus, if the diameter is $3$, then $F$ is not satisfiable.
	On the other hand, if $F$ is not satisfiable, then for each pair of partial assignments
	$\bar u$ and $\phi$, there must be a clause $c$ which both partial assignments do not satisfy.
	Hence, there must be a path of the form $\bar u \to c \to g \to \phi$ for some $g$
	with $c \in g$ and $g$ has an edge to $t$. Thus, if $F$ is not satisfiable,
	then the graph has diameter $3$.

	The sets $d_i$ can be computed as in the previous proof.
\end{proof}

\begin{lemma}
\label{Lem:STReach}
  Let $\varepsilon > 0$, $t \in \mathbb{N}$.
  SETH implies that there exists no
  algorithm for incremental ST-Reach with sensitivity $K(\varepsilon, t)$,
  which has preprocessing time $O(n^t)$,
  update time $u(n)$ and query time $q(n)$, such that 
  $\max\{ u(n), q(n) \}= O(n^{1-\varepsilon})$.
\end{lemma}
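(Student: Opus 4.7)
The plan is to follow the same template as Lemmas~\ref{Lem:CountSSR} and~\ref{Lem:ApproxDiameter}, replacing the count-based check (respectively the diameter-$3$-vs-$4$ test) with a single ST-Reach query per stage. Concretely, I would set $\delta = (1-\varepsilon)/t$, partition the $c\tilde n$ clauses of the input $k$-SAT formula $F$ into $K = c/\delta$ groups $G_1,\dots,G_K$ of size $\delta\tilde n$ each, pick $U \subset V$ of size $\delta\tilde n$, and build the preprocessed graph on the vertex set $\{s,t\} \cup \bar U \cup C \cup D_\delta$ using the $H_\delta$-edges $\bar u \to c$ (iff $\bar u \not\vDash c$), the containment edges $c \to d$ between clauses and their $D_\delta$-supersets, and a fixed scaffold $s \to \bar u$ for every $\bar u \in \bar U$. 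This preprocessing runs in $O^*(2^{\delta t \tilde n}) = O^*(2^{(1-\varepsilon)\tilde n})$, fitting inside the $O(n^t)$ budget.

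Then I would enumerate the $2^{(1-\delta)\tilde n}$ partial assignments $\phi$ to $V\setminus U$ in stages; in the stage for $\phi$, I compute the $K$ subsets $d_i^\phi = \{c \in G_i : \phi \not\vDash c\}$ in $O(\tilde n^2)$ time exactly as in Lemma~\ref{Lem:CountSSR}, insert the $K$ edges hooking each $d_i^\phi$-node to $t$, issue a single ST-Reach query with the fixed $S = \{s\}$ and $T = \{t\}$, and then use the sensitivity semantics to undo the $K$ insertions before the next stage. The correctness claim, proved exactly as in the \#SSR argument, is that a path $s \to \bar u \to c \to d_i^\phi \to t$ exists iff $\bar u \not\vDash c$ and $\phi \not\vDash c$ for some $\bar u$, $c$, $i$, i.e., iff there is some $\bar u$ for which $\bar u \cdot \phi$ still leaves a clause of $F$ unsatisfied. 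Aggregating over all $\phi$ exactly as the counting complement is used in Lemma~\ref{Lem:CountSSR}, $F$ is satisfiable iff some stage's query returns the complementary answer.

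I expect the main obstacle to be aligning the ``exists a path'' semantics of ST-Reach with the ``for every clause'' quantifier hidden inside $\bar u \cdot \phi \vDash F$. I plan to resolve this with the same $D_\delta$-compression used in Lemma~\ref{Lem:CountSSR}: the $K$ per-stage insertions, one per group, together activate precisely the failing-clause set of $\phi$ in a grouped, batched manner, so that a single ST-Reach query captures the global condition across all $K$ groups at once. This also yields batch size exactly $K = c/\delta = K(\varepsilon,t)$, matching the prescribed sensitivity from Section~\ref{Sec:SETHSensitivity}.

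The final step is a routine running-time calculation: the total cost is $O^*(2^{(1-\varepsilon)\tilde n}) + 2^{(1-\delta)\tilde n}\cdot O(K\tilde n^2 + n^{1-\varepsilon})$, which simplifies to $O^*(2^{\tilde n(1-\varepsilon\delta)})$ and thus to an $O^*(2^{(1-\varepsilon')\tilde n})$ SAT algorithm with $\varepsilon' = \varepsilon\delta > 0$, contradicting SETH.
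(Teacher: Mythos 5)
Your reduction uses the same three building blocks as the paper ($H_\delta$, $D_\delta$, and the per-stage insertion of one edge per clause group, which keeps the batch size at $K = c/\delta$), but the decision to use $S=\{s\}$ with a scaffold $s\to\bar u$ introduces a fatal quantifier flip that the paper avoids. With a single source $s$, the ST-Reach query ``is there a path from $s$ to $t$?'' becomes \emph{false} iff \emph{every} $\bar u\in\bar U$ has no failing clause against $\phi$, i.e.\ iff every extension $\bar u\cdot\phi$ satisfies $F$. But what you need is to detect whether \emph{some} $\bar u\cdot\phi$ satisfies $F$. These are not complementary: it is entirely possible that $F$ is satisfied by a specific $\bar u^*\cdot\phi$ while a different $\bar u$ still falsifies some clause together with $\phi$, in which case your query returns ``path exists'' for every $\phi$ and your algorithm incorrectly declares $F$ unsatisfiable.

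The $D_\delta$-compression you point to does not rescue this. It handles a different quantifier altogether: it batches the clause groups so that only $K=O(1)$ edges are inserted per stage. It says nothing about the quantification over $\bar U$. The paper resolves the $\bar U$-quantifier by setting $S=\bar U$ (and omitting the scaffold vertex and its edges entirely): then the ST-Reach query---``do \emph{all} of $S$ reach $t$?''---bakes the universal quantification over $\bar u$ directly into the query semantics. A stage answers ``true'' iff every $\bar u$ fails against $\phi$, i.e.\ iff no extension $\bar u\cdot\phi$ satisfies $F$, which is exactly what is needed per stage. The fix to your argument is therefore simple---drop the scaffold, set $S=\bar U$, $T=\{t\}$---but as written the reduction is unsound in one direction.
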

\begin{proof}
	We reuse the graph from the proof of Lemma~\ref{Lem:ApproxDiameter}.
	We update it by removing the vertices $x,y,z$.
	We further set $S = \bar U$ and $T = \{t\}$.
	
	We proceed in stages with one stage for each partial assignment to the
	variables in $V \setminus U$. Denote the partial assignment of the
	current stage by $\phi$.
	For each $G_i$, we add an edge between $t$ and the subset of $G_i$
	that contains all clauses of $G_i$ which are not satisfied
	by $\phi$, i.e.\ we add an edge between
	$t$ and $\{c \in G_i : \phi \not\vDash c\}$ for all $i = 1,\dots,K$.
	Hence, in each stage we have $O(K) = O(1)$ updates.
	We query for ST-Reachability. If the answer is true,
	then $F$ is not satisfiable, otherwise, it is.
	
	We prove the correctness of our construction:
	If $F$ is not satisfiable, then for each pair of partial assignments
	$\bar u$ and $\phi$, there must be a clause $c$ which both partial assignments do not satisfy.
	Hence, there must be a path of the form $\bar u \to c \to g \to \phi$ for some $g$
	with $c \in g$ and $g$ has an edge to $t$. Thus, if $F$ is not satisfiable,
	then all nodes in $S$ will be able to reach $t$.
	On the other hand, assume that for $\bar u$ there exists a path $\bar u \to c \to g \to t$,
	then by construction $\bar u \not\vDash c$ and $\phi \not\vDash c$, since
	$c \in g$. Hence, $F$ is not satisfied by $\bar u \cdot \phi$.
	Thus, if all nodes from $S$ can reach $t$, then $F$ is not satisfiable.

	The sets $d_i$ can be computed as in the first proof of the section.
\end{proof}

\subsection{Conditional Lower Bounds For Variable Sensitivity}
\label{Sec:OMvProofs}

In this section we prove conditional lower bounds for algorithms where the sensitivity is
not fixed, but given a parameter~$d$. Before we give our results, we shortly argue why this setting
is relevant.

First, several results were obtained in the setting with sensitivity $d$. Some of these results are
	by Patrascu and Thorup~\cite{patrascu2007planning} for decremental reachability,
	by Duan and Pettie~\cite{duan2010connectivity,duan2017connectivity} and by
	Henzinger and Neumann~\cite{henzinger2016incremental} for subgraph connectivity
	and by Chechik et al.~\cite{chechik2012sensitivity,chechik2017approximate} for decremental all pairs shortest paths.
	Our lower bounds show that the results in \cite{duan2010connectivity,duan2017connectivity} and
	the incremental algorithm in \cite{henzinger2016incremental} are tight.

Second, when $d$ is not fixed and we can prove a meaningful lower bound,
	this will help us understand whether
	updates or queries are more sensitive to changes of the problem instance.

Third, when $d$ is fixed to a constant, the problems might become
	easier in the sense that constant or polylogarithmic update and query times can be achieved.
	For example, for APSP with single edge failures one can achieve query and update
	times $O(1)$ (see~\cite{bernstein2009nearly}); for APSP with two edge failures
	one can achieve query and update times $O(\log n)$ (see~\cite{duan2009dual}).
	In these cases we cannot prove any non-trivial conditional lower bounds for them.
	However, with an additional parameter $d$ we can derive conditional lower bounds
	which are polynomial in the parameter $d$.

Our results are summarized in the following theorem.
\begin{theorem}
\label{Thm:SensitivityD}
	Under the OMv conjecture for any $\varepsilon > 0$, there exists no algorithm with 
	preprocessing time $\poly(n)$, update time $\poly(d)$
	and query time $\Omega(d^{1 - \varepsilon})$ for the following problems: 
	\begin{compactenum}
		\item Decremental/incremental $st$-SubConn in undirected graphs with sensitivity $d$
		\item decremental/incremental $st$-reach in directed graphs with sensitivity $d$,
		\item decremental/incremental BP-Match in undirected bipartite graphs with sensitivity $d$, and
		\item decremental/incremental SC in directed graphs with sensitivity $d$.
		\item $(2-\varepsilon)$-approximate $ss$-shortest paths with sensitivity $d$ in undirected unweighted graphs,
		\item $(5/3-\varepsilon)$-approximate $st$-shortest paths with sensitivity $d$ in undirected unweighted graphs,
		\item BW-Matching with sensitivity $d$.
	\end{compactenum}
\end{theorem}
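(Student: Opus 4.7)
The plan is to reduce the Online Boolean Matrix--Vector Multiplication problem (via its equivalent OuMv variant) to each of the seven sensitivity problems listed, using a single uniform block-decomposition scheme. Recall that under the OMv conjecture, any algorithm for OuMv---preprocess an $n\times n$ Boolean matrix $B$, then for $n$ online pairs $(u_i,v_i)\in\{0,1\}^n\times\{0,1\}^n$ output $u_i^TBv_i$ before $(u_{i+1},v_{i+1})$ arrives---requires $n^{3-o(1)}$ total time~\cite{henzinger2015unifying}.

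During the polynomial preprocessing phase I would build a bipartite graph $G_B$ with left vertices $\ell_1,\dots,\ell_n$, right vertices $r_1,\dots,r_n$, and an edge $(\ell_i,r_j)$ iff $B_{ij}=1$, augmented with a source $s$, a sink $t$, and problem-specific gadgets. To answer one OuMv query $(u,v)$, I partition $u$ and $v$ into $\lceil 2n/d\rceil$ blocks of size at most $d/2$, and for every one of the $O((n/d)^2)$ pairs of blocks $(a,b)$ perform a single batch update of size at most $d$ that ``activates'' exactly the vertices $\ell_i$ with $i$ in block $a$ and $u_i=1$ together with the vertices $r_j$ with $j$ in block $b$ and $v_j=1$. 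Activation is realized by inserting $(s,\ell_i)$ and $(r_j,t)$ edges in the incremental variants, by deleting pre-inserted universal edges in the decremental variants, and by toggling vertices on/off for (1). A single query then tests whether some activated pair $(\ell_i,r_j)$ is an edge of $G_B$, and ORing the $(n/d)^2$ answers yields $u^TBv$. The same template handles the other six problems: for (2), (3) and (4) we read off the activated bit via $st$-reachability, bipartite matching, or strong connectivity (adding a return edge $t\to s$); for the approximate shortest-path and weighted matching variants (5)--(7) we attach a long filler path between $s$ and $t$ calibrated so that an edge in $G_B$ connecting the active vertices shortens the $st$-distance past the approximation threshold while its absence does not.

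The accounting is uniform: across all $n$ OuMv queries we invoke $\Theta(n^3/d^2)$ batch-update/query pairs on top of the $\poly(n)$ preprocessing. If a data structure existed with preprocessing $\poly(n)$, batch-update time $\poly(d)$ and query time $O(d^{1-\varepsilon})$, the total cost of solving OuMv would be $\poly(n)+\tO\bigl(n^3\poly(d)/d^2+n^3/d^{1+\varepsilon}\bigr)$, which for an appropriately chosen polynomial $d=n^{\Theta(1)}$ is truly subcubic in $n$, contradicting the OMv conjecture.

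The principal technical obstacle is engineering each gadget so that (i) a batch update of size at most $d$ activates precisely one block-pair and leaves the rest of $G_B$ untouched, (ii) the resulting query faithfully distinguishes $u^{(a)T}B^{(a,b)}v^{(b)}\ge 1$ from $=0$, and (iii) for the approximate problems the multiplicative gap between ``hit'' and ``no hit'' is wider than the stated approximation ratio, e.g.\ $2-\varepsilon$ or $5/3-\varepsilon$. The symmetry between incremental and decremental is handled by swapping the initial configuration (all universal edges pre-inserted versus none pre-inserted); correctness then follows by the same case analysis, so the seven problems collapse to one master reduction.
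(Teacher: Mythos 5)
Your plan departs substantially from the paper's proof, and it has a quantitative gap that breaks it.

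The paper handles items (1)--(4) by quoting Henzinger et al.'s existing $st$-subgraph-connectivity lower bound with sensitivity $d$ and observing that the reductions of Abboud--Williams from $st$-SubConn to $st$-reachability, BP-Match and SC are \emph{sensitivity-preserving}, so the bound transfers for free. For (5)--(7) the paper does \emph{not} unroll the full OMv sequence; instead it reduces from a single instance of the $\gamma$-uMv problem (one matrix, one pair $(u,v)$) using \emph{one} batch update of size $d = n_2$ that encodes only the vector $v$, followed by $n_1$ separate $ss$-distance queries---one per $i$ with $u_i=1$. The online cost is therefore $\poly(d) + n_1\cdot O(d^{1-\varepsilon}) = \poly(n_2) + n_1 n_2^{1-\varepsilon}$, and by picking $\delta$ (and hence $\gamma$) small enough relative to the degree of the update polynomial, the single $\poly(n_2)$ term is dominated by $n_1 n_2^{1-\varepsilon}$, contradicting the $\gamma$-uMv lower bound.

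The concrete gap in your version is in the accounting. You issue $\Theta(n^3/d^2)$ batch updates, each charged $\poly(d)$. If the hypothetical data structure has update time $d^c$, the total update cost is $n^3 d^{c-2}$. For $c\ge 2$ this is at least $n^3$ for \emph{every} choice of $d\ge 1$, so no ``appropriately chosen polynomial $d=n^{\Theta(1)}$'' yields a subcubic bound. Since the theorem must rule out \emph{all} polynomial update times in $d$ (not just $\poly(d)=o(d^2)$), your reduction cannot conclude the contradiction. The root cause is that your block decomposition encodes \emph{both} a piece of $u$ and a piece of $v$ in each batch update, forcing $\Theta(n^2/d^2)$ updates per $(u,v)$. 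The fix is exactly the paper's design asymmetry: encode only $v$ in the (single) batch update of size $d$, and route the information in $u$ through the \emph{free query parameter} (the target $\ell_i$ in $ss$-shortest paths). For the fixed-endpoint variants like $st$-shortest paths, where the query parameter cannot be varied, the paper relies on the problem-specific gadget structure and on the chain of sensitivity-preserving reductions originating at $st$-SubConn; your uniform ``one-size-fits-all'' block scheme does not supply that step.

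Two smaller points: (i) your claim that ORing the $(n/d)^2$ answers yields $u^TBv$ is fine, but you never exhibit the per-problem gadgets that make the batch update ``leave the rest of $G_B$ untouched'' under edge rollback, whereas the paper explicitly re-runs from the original $G_M$ after each batch; (ii) be careful with the sign: you want to rule out query time $O(d^{1-\varepsilon})$, not as written $\Omega(d^{1-\varepsilon})$ (the theorem statement has the same slip; the intended contrapositive is ``either update time is super-polynomial in $d$ or query time is $d^{1-o(1)}$'').
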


\paragraph*{Conditional Lower Bounds for Directed Graphs.}
\label{Sec:OMvExistingWork}
We observe that some existing reductions can be used to obtain conditional lower bounds
for sensitivity problems. In this section, we summarize the results that can be obtained this way.

We call a reduction from a dynamic problem $A$ to another dynamic problem $B$
\emph{sensitivity-preserving} if in the reduction a single update
in problem $A$ propagates to problem $B$ as at most one update.
We observe that the reductions in~\cite{abboud2014popular}
from $st$-subgraph-connectivity to $st$-reachability (Lemma~6.1) and from
$st$-reachability to BP-Match (Lemma~6.2) and SC (Lemma~6.4)
are sensitivity-preserving.
Henzinger et al.~\cite{henzinger2015unifying} give conditional lower bounds for
the $st$-subgraph-connectivity problem with sensitivity $d$.
The previous observations about sensitivity preserving reductions imply that we get the
same lower bounds for st-reach, BP-Match and SC with sensitivity $d$.
This implies the first four points of Theorem~\ref{Thm:SensitivityD}.

The construction of the lower bound in~\cite{henzinger2015unifying} required $d = m^\delta$ for some $\delta \in (0, 1/2]$.
This appears somewhat artifical, since in practice one would rather expect situations with much smaller
values for $d$, e.g., $d = O(1)$ or $d = \poly\log(n)$. However, the lower bound is still interesting because
it applies to all algorithms that allow setting $d = m^\delta$.
For example, the sensitive subgraph connectivty algorithms of Duan and Pettie~\cite{duan2010connectivity,duan2017connectivity}
is tight w.r.t.\ to the above lower bound.

\paragraph*{Conditional Lower Bounds for Undirected Graphs.}
In this subsection, we prove the last three points of Theorem~\ref{Thm:SensitivityD}.
We give a reduction from the $\gamma$-uMv-problem, which
was introduced by Henzinger et al.~\cite{henzinger2015unifying}.
The $\gamma$-uMv-problem is as follows: Let $\gamma > 0$.
An algorithm for the $\gamma$-uMv problem is given an $n_1 \times n_2$ binary matrix $M$
with $n_1 = \lfloor n_2^\gamma \rfloor$, that can be preprocessed.
Then two vectors $u$ and $v$ appear and the algorithm must output the result of the
Boolean vector-matrix-vector-product $u^t Mv$.

Henzinger et al.~\cite[Corollary~2.8]{henzinger2015unifying} show that under the OMv conjecture
for all $\varepsilon > 0$, no algorithm exists for the $\gamma$-uMv-problem that has
preprocessing time $\poly(n_1,n_2)$, computation time $O(n_1^{1-\varepsilon} n_2 + n_1 n_2^{1-\varepsilon})$,
and error probability at most $1/3$.
We give a reduction from the $\gamma$-uMv-problem to $(2-\varepsilon)$-approximate
$ss$-shortest-paths with sensitivity $d$ in the following lemma. 
This lemma and the proof of Corollary~3.12 in~\cite{henzinger2015unifying}
imply the result in Theorem~\ref{Thm:SensitivityD} for $ss$-shortest-paths.

\begin{lemma}
\label{Lem:SensitiveSP}
  Let $\delta \in (0, 1/2]$. Given an algorithm $\mathcal{A}$ for
  incremental/decremental $(2 - \varepsilon)$-$ss$-shortest paths with sensitivity $d$,
  one can solve $(\frac{\delta}{1-\delta})$-uMv with
  parameters $n_1$ and $n_2$ by running the preprocessing step of $\mathcal{A}$ on a graph
  with $O(m)$ edges and $O(m^{1-\delta})$ vertices, then making a single batch update of size
  $O(d)$ and $O(m^{1-\delta})$ queries, where $m$ is such that
  $m^{1-\delta} = n_1$ and $d = m^\delta = n_2$.
\end{lemma}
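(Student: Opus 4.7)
The plan is to encode the matrix $M$ as the bipartite adjacency of an undirected graph together with a short bypass gadget, so that the $(2-\varepsilon)$-approximate $ss$-distance separates the two cases of $\gamma$-uMv by a factor of exactly $2$. Writing $n_1 = m^{1-\delta}$ and $n_2 = m^\delta$, so that $n_1 n_2 = m$ and $d = n_2$, I would build during preprocessing the graph on $\{s\} \cup \{R_1,\ldots,R_{n_1}\} \cup \{C_1,\ldots,C_{n_2}\} \cup \{p_1,p_2,b\}$ by installing the permanent edges $\{C_j,R_i\}$ whenever $M_{ij} = 1$, the three-hop path $s - p_1 - p_2 - b$, and the fan $\{b,R_i\}$ for every $i$. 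This uses $O(m)$ edges and $O(m^{1-\delta})$ vertices, matching the bounds of the lemma.

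Given the vectors $(u,v)$, I would then perform one batch update of size at most $n_2 = d$ that makes $\{s,C_j\}$ an edge iff $v_j = 1$ --- insertions in the incremental variant, deletions from a pre-installed copy in the decremental variant. For each $i$ with $u_i = 1$ I would query the approximate $s$-to-$R_i$ distance (at most $n_1 = m^{1-\delta}$ queries) and declare $u^\top M v = 1$ iff some query returns a value strictly less than $4$. The main correctness step, which I expect to be the hardest part, is verifying the distance gap. In the ``yes'' case a witness triple $u_i = v_j = M_{ij} = 1$ gives the length-$2$ path $s - C_j - R_i$, so the approximation answer is at most $2(2-\varepsilon) < 4$. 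In the ``no'' case I would rule out short paths by a case analysis: a length-$2$ path would force a middle vertex $C_{j^*}$ with $v_{j^*} = M_{i j^*} = 1$, contradicting $u^\top M v = 0$, and a length-$3$ path is impossible because $N(s) = \{p_1\} \cup \{C_j : v_j = 1\}$ and $N(R_i) = \{C_j : M_{ij} = 1\} \cup \{b\}$ have no common neighbor in the graph once one checks the auxiliary-path vertices against $R \cup C$. Hence the true distance is exactly $4$, realized by $s - p_1 - p_2 - b - R_i$, and the approximation returns at least $4$.

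The subtlety in the length-$3$ step is exactly why the bypass is chosen as a dedicated three-edge path separated from $R \cup C$: undirected graphs permit backtracking through the matrix gadget that a directed construction would forbid outright, and any shorter bypass would collapse the distance gap. Once the gap is established, feeding the resulting data structure back into $\gamma$-uMv is routine: polynomial preprocessing, a $\poly(d)$ batch update, and $n_1 \cdot O(d^{1-\varepsilon})$ query work per instance would contradict Corollary~2.8 of \cite{henzinger2015unifying} in the $\gamma = \delta/(1-\delta)$ regime, completing the proof.
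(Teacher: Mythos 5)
Your proposal reproduces the paper's construction essentially verbatim: the bipartite matrix gadget (your $R$ and $C$ are the paper's $L$ and $R$), the length-$3$ bypass path ending in a fan to all row-vertices, the batch update attaching $s$ to the columns selected by $v$, and the $\leq 2$ versus $\geq 4$ distance gap detected by a $(2-\varepsilon)$-approximation. The paper dispenses with your explicit length-$3$ case analysis by just noting the graph is bipartite with $s$ and each $R_i$ on the same side so all $s$--$R_i$ paths have even length, but the argument is otherwise identical, including the handling of the decremental variant.
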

\begin{proof}
  We prove the lower bound for the incremental problem.

  Let $M$ be a $n_1 \times n_2$ binary matrix for $(\frac{\delta}{1-\delta})$-uMv.
  We construct a bipartite graph $G_M$ from the matrix $M$:
  Set $G_M = ((L \cup R), E)$, where $L = \{l_1, \dots, l_{n_1}\}$,
  $R = \{r_1, \dots, r_{n_2}\}$ and the edges are given by
  $E = \{ (l_i, r_j) : M_{ij} = 1 \}$.
  We add an additional vertex $s$ to $G_M$ and attach a path of length $3$
  to $s$, the vertex on the path with distance $3$ from $s$ has edges to all
  vertices in $L$.
  Observe that $G_M$ has $O(n_1 n_2) = O(m)$ edges and
  $n_1 + n_2 = \Theta(m^\delta + m^{1 - \delta}) = \Theta(m^{1 - \delta})$ vertices.

  When the vectors $u$ and $v$ arrive, we add an edge $(s,r_j)$ for each
  $v_j = 1$ in a single batch of $O(d)$ updates.
  After that for each $u_i = 1$, we query the shortest path from $s$ to $l_i$.
  In total, we perform $O(m^{1 - \delta})$ queries and only use a single
  batch update consisting of $O(d)$ insertions.
  We claim that one of the queries returns less than $4$ iff $u^t Mv = 1$.

  First assume that $u^t M v = 1$. Then there exist indices $i, j$ such that
  $u_i = M_{ij} = v_j = 1$. Hence, there must be a path $l_i \to r_j \to s$
  of length $2$ in $G_M$ and since $l_i = 1$ we ask the query for $l_i$.
  Hence, any $(2-\varepsilon)$-approximation algorithm must return less than $4$ in the query for $l_i$.
  Now assume that a query for vertex $l_i$ returns less than $4$.
  Since any path from a vertex in $L$ to $s$ must have length $2k$ for some $k \in \mathbb{N}$,
  there exists a path $l_i \to r_j \to s$.
  Since query for $l_i$, we have $u_i = 1$. Due to the
  edge $l_i \to r_j$, $M_{ij} = 1$, and due to edge $r_j \to s$, $v_j = 1$.
  Thus, $u_i M_{ij} v_j = 1$ and $u^t M v = 1$.

  The proof for the decremental problem works by initially adding all edges from $s$ to $R$
  to the graph and then removing edges corresponding to the $0$-entries of $v$.
\end{proof}

To obtain the result of Theorem~\ref{Thm:SensitivityD} for $st$-shortest-paths,
observe that the above reduction can be changed to work for this problem:
We add additional vertices $s, t$ to the original bipartite graph
and connect $s$ and $t$ by a path of length $5$ (i.e., introducing $3$ additional vertices).
Then a similar proof to the above shows that any algorithm for $st$-reachability
that can distinguish between a shortest path of length at most $3$ or at least $5$
can be used to decide if $u^t M v = 1$.
The result for BW-Match follows from the reduction in \cite{abboud2014popular}.

\section{No (globally) fixed constant sensitivity with polynomial preprocessing time}
\label{Sec:NoFixedConstantSETH}

In this section, we will first discuss how SETH reductions depend on their sensitivity.
In particular, we will see that even though some of our previous reductions had
constant sensitivities, these constants are not bounded globally.
Afterwards we will prove that if we allow for polynomial preprocessing time,
then there cannot be a globally fixed upper bound on the sensitivities.

\subsection{A note on SETH reductions}
Let us recall that the sensitivities of the reductions in Section~\ref{Sec:NewSETHLowerBounds} 
had the form $K = c/\delta$, where $cn$ was the number of clauses
in the $k$-CNFSAT formula and $\delta < 1$ was a parameter
indicating the size of our initial graph. We will first discuss the dependency of $K$ on
$\delta$ and after that on $c$.

Firstly, let us discuss how the SETH reductions depend on the parameter $\delta$.
Let $F$ be a $k$-CNFSAT formula with $n$ variables and $O(n)$ clauses.
Notice that if the input graph that we construct in the reduction had size $N = O(2^{n/2})$,
then the preprocessing time of an algorithm refuting SETH would have to be $O(N^{2 - \varepsilon})$.
In order to allow for arbitrary polynomial preprocessing times of the algorithm,
the reductions in \cite{abboud2014popular} (and also the ones from the section before)
were parameterized such that the initial graphs have size
$O(2^{\delta n})$ (disregarding $\poly(n)$ factors).
Then for an algorithm with preprocessing time $O(N^t)$ we can pick
$\delta < 1/t$ and hence the preprocessing takes time
$O( 2^{\delta n t}) = O(2^{(1-\gamma)n})$ for some $\gamma > 0$.
Thus, despite the ``large'' preprocessing time we could still refute the SETH.
However, the sensitivities $K = c/\delta$ are not bounded if we consider $\delta$
as a parameter, i.e.\ $K \to \infty$ as $\delta \to 0$.
If we consider $\delta$ as the inverse of the power of the preprocessing time
$O(N^t)$, i.e.\ $\delta = 1/t$, then this can be interpreted as
``large preprocessing time'' corresponds to ``large sensitivity''
(since $\delta \to 0$ iff $t \to \infty$).

Now one might want to fix $\delta$ (and thus bound the preprocessing time)
in order to get (globally) fixed constant sensitivities.
Unfortunately, this approach is also not feasible; in the SETH reductions
for proving an algorithm that solves $k$-CNFSAT in time $2^{(1-\varepsilon)n}$,
there is another parameter $c$ which denotes the number of clauses of
the $k$-CNFSAT instance after the application of the sparsification lemma by
Impagliazzo, Paturi and Zane~\cite{impagliazzo2001which}. Looking at the proof of the lemma
one can see that $c = c(k,\varepsilon)$ and that
$c \to \infty$ as $k \to \infty$, as well as $c \to \infty$ as $\varepsilon \to 0$.
Hence, if we use the sparsification lemma and fix $\delta$, then we still cannot refute the SETH.

\subsection{Upper bound in case of polynomial preprocessing}
The former reasoning heavily relies on the constructions in our
proofs and does not rule the possibility of a reduction with (globally)
fixed constant sensitivity from SETH.
However, the following theorem and the corollary after it
will show that if we allow for polynomial preprocessing time,
then there cannot be a polynomial time upper bound on update
and query time under any conjecture.

\begin{theorem}
  \label{Thm:FixedEditDistanceAlgo}
  Let $P$ be a fully dynamic graph problem\footnote{We 
		only argue about dynamic graph problems to simplify
		our language. The theorem holds for all dynamic problems with the given properties.}
  with sensitivity $K$
  on a graph $G = (V,E)$ with $n$ vertices and $m$ edges
  where edges are added and removed. Assume that for the static version of $P$
  there exists an algorithm running in time $\poly(n)$.

  Then there exists an algorithm with $p(n) = n^t$ for some $t = t(K)$,
  $u(n) = O(K)$ and $q(n) = \log(n)$.
  The algorithm uses space $O(n^t)$.
\end{theorem}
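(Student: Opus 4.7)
The plan is to precompute the answer for every possible batch update during the preprocessing phase and then answer queries by table lookup. Since the sensitivity is bounded by the constant $K$, the total number of distinct batch updates that can ever be applied to the fixed input graph $G$ is polynomial in $n$: a batch consists of at most $K$ edge insertions or deletions chosen from the ground set of size $O(n^2)$, so there are at most $\binom{O(n^2)}{K} = O(n^{2K})$ such batches.

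First, I would enumerate all batches $B$ of size at most $K$. For each such batch, form the modified graph $G_B$ and invoke the assumed static algorithm for $P$ (running in time $\poly(n)$) on $G_B$ to obtain its answer $a_B$. Store the collection of pairs $(B, a_B)$ in a balanced search tree (or a trie) keyed by $B$, viewed as a sorted list of at most $K$ edge-change records. Since each $B$ has size $O(K) = O(1)$, the comparison cost between two keys is $O(1)$, so both construction and lookup in the tree cost $O(\log n)$ per operation. The total preprocessing time is $O(n^{2K}) \cdot \poly(n) = n^{t(K)}$ for a constant $t(K)$ depending only on $K$, and the space used is $O(n^{t(K)})$ to store the table.

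For the update operation, the data structure simply records the (at most $K$) edge changes that constitute the current batch update; writing them down takes $O(K)$ time. For a subsequent query, we locate the entry corresponding to the currently-stored batch $B$ in the search tree in $O(\log n)$ time and return the precomputed $a_B$. Correctness is immediate: since the sensitivity model guarantees that each query is executed on the graph $G$ modified by a single batch of size at most $K$, the precomputed $a_B$ is exactly the answer the static algorithm would produce on the graph currently being queried.

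There is essentially no main obstacle here, since the argument is a direct ``table all possible instances'' construction made possible by the bounded sensitivity; the only minor care needed is choosing a dictionary structure (balanced BST on batches sorted lexicographically, or a trie over the $O(K)$ edge identifiers) that achieves the claimed $O(\log n)$ query time rather than $O(K \log n)$ or worse. Observe also that the theorem has the expected qualitative meaning: the exponent $t(K)$ must grow with $K$, matching the intuition that polynomial preprocessing cannot support unboundedly many edge changes.
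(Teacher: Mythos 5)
Your proposal is correct and takes essentially the same approach as the paper: enumerate all graphs reachable by a batch of at most $K$ edge changes (equivalently, all symmetric differences of size $\le K$ from $E_0$), precompute the static answer for each, store the results in a balanced search tree keyed by the edit set, and then answer updates by recording the $O(K)$ changes and queries by an $O(\log n)$ lookup. The only cosmetic difference is that you phrase the indexing in terms of batch tuples while the paper uses the symmetric difference $S = E_0 \triangle E_1$ under a fixed order $\prec$ on $V\times V$, but the construction and bounds are identical.
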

\begin{proof}
  The basic idea of the algorithm is to preprocess the results of all possible queries that might
  be encountered during $P$. Since we have only sensitivity $K$, we will only have polynomially many
  graphs during the running time of the algorithm. We can save all of these results in a balanced tree
  of height $O(\log n)$ and during a query we just traverse the tree in logarithmic time.

  Denote the initial input graph by $G_0$.
  
  Since we know that after each update has size at most $K$ and after the update we roll back to the
  initial graph, we can count how many graphs can be created while running $P$.
  Particularly, notice that a graph $G_1 = (V,E_1)$ can be created while running $P$ if and only if
  $S = E_0 \triangle E_1$ has $|S| \leq K$.\footnote{$A \triangle B$ denotes the symmetric
  difference of $A$ and $B$, i.e.\ $A \triangle B = A \setminus B \cup B \setminus A$.}
  Then in total the number of graphs which $P$ will have to answer queries on can be bounded by computing
  how many such sets $S$ exist:
  \begin{align*}
    \sum_{i = 0}^K \binom{n^2}{i} \leq K \max_{i = 0, \dots, K} \binom{n^2}{i} = K \poly(n) = n^t,
  \end{align*}
  for some $t$ (where in the second step we used that $K$ is constant).
  
  Now consider the na\"ive algorithm which just preprocesses all trees and stores the results:
  We enumerate all $n^t$ possible trees and run the static algorithm on them.
  This can be done in time $O(\poly(n))$.
  We store the results of the static algorithm in a balanced binary tree with $O(n^t)$ leaves
  which is of height $O(\log n)$. The traversal of the tree can be done, e.g., in the following way:
  We fix some order $\prec$ on $V \times V$; this implicitly gives an order $\prec$ on the set $\{ S \subseteq V \times V \}$.
  For a graph $G_1$ we compute $S = E_1 \triangle E_0$ and traverse according to $S$.

  During updates the algorithm maintains an array of size $K$ which contains
  the edges that are to be removed or added ordered by $\prec$. This can be done
  in time $O(K) = O(1)$.

  During a query the algorithm will traverse the binary tree from the preprocessing according to
  $\prec$ and the updates that were saved during the updates. This takes time $O(\log n)$.

  Hence, we have found algorithm with $p(n) = \poly(n)$ and $u(n) = O(1)$ and $q(n) = O(\log n)$.
\end{proof}

\begin{corollary}
\label{Cor:FixedEditDistanceCorollary}
  If for a problem with the properties from Theorem~\ref{Thm:FixedEditDistanceAlgo}
  there exists a reduction from conjecture $\mathcal{C}$
  to $P$ with $p(n) = \poly(n)$ and $\max\{u(n),q(n)\} = \Omega(n^{\gamma - \varepsilon})$
  for any $\gamma > 0$ and all $\varepsilon \in (0,\gamma)$.
  Then $\mathcal{C}$ is false.
\end{corollary}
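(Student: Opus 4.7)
The plan is to derive the corollary as a direct contrapositive of Theorem~\ref{Thm:FixedEditDistanceAlgo}. The theorem furnishes, for any fully dynamic graph problem $P$ with constant sensitivity $K$ whose static version admits a polynomial-time algorithm, a data structure with preprocessing time $p(n) = n^{t(K)}$, update time $u(n) = O(K) = O(1)$, and query time $q(n) = O(\log n)$. The corollary's hypothesis supplies a reduction from the conjecture $\mathcal{C}$ to $P$ asserting that every algorithm with $p(n) = \poly(n)$ must satisfy $\max\{u(n), q(n)\} = \Omega(n^{\gamma - \varepsilon})$ for some $\gamma > 0$ and all $\varepsilon \in (0, \gamma)$. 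The strategy is simply to exhibit the algorithm from Theorem~\ref{Thm:FixedEditDistanceAlgo} as a witness violating this lower bound.

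First I would instantiate the algorithm guaranteed by Theorem~\ref{Thm:FixedEditDistanceAlgo} for $P$; since $P$ satisfies the hypotheses of that theorem (constant sensitivity, polynomial-time static version), we obtain an algorithm with $p(n) = n^{t(K)} = \poly(n)$, $u(n) = O(1)$, and $q(n) = O(\log n)$. In particular, $\max\{u(n), q(n)\} = O(\log n)$.

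Next I would compare against the purported lower bound from the reduction. Fix the $\gamma > 0$ asserted in the hypothesis and pick any $\varepsilon \in (0, \gamma)$, so that $\gamma - \varepsilon > 0$. Then $\log n = o(n^{\gamma - \varepsilon})$, so the algorithm's running time cannot meet the bound $\max\{u(n), q(n)\} = \Omega(n^{\gamma - \varepsilon})$ that would be forced by the reduction under $\mathcal{C}$. This contradicts the assumption that $\mathcal{C}$ holds, so $\mathcal{C}$ must be false.

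There is essentially no obstacle here beyond bookkeeping, since the heavy lifting happens in Theorem~\ref{Thm:FixedEditDistanceAlgo}. The only subtle point to spell out is that the algorithm's preprocessing time $n^{t(K)}$ is indeed polynomial (because $K$ is a fixed constant), which is exactly the preprocessing regime to which the reduction's lower bound applies; hence the contradiction is legitimate.
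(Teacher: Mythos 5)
Your proof is correct and is essentially the same argument the paper gives (the paper's version is just a two-sentence sketch): instantiate the algorithm from Theorem~\ref{Thm:FixedEditDistanceAlgo}, observe $\max\{u(n),q(n)\} = O(\log n)$ beats any $\Omega(n^{\gamma-\varepsilon})$ bound while keeping $p(n)$ polynomial, and conclude $\mathcal{C}$ is false. You simply spell out the bookkeeping the paper leaves implicit.
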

\begin{proof}
  We use Theorem~\ref{Thm:FixedEditDistanceAlgo} to obtain an algorithm
  which is better than the lower bound given in the assumption of the corollary.
  Hence, we obtain a contradiction to $\mathcal{C}$.
\end{proof}

Notice that Corollary~\ref{Cor:FixedEditDistanceCorollary} implies that in order to obtain
meaningful lower bounds for dynamic problems with a certain sensitivity, we must either
bound the preprocessing time of the algorithm or bound the space usage of the algorithm
or allow the sensitivity to become arbitrarily large.

\end{document}